\documentclass{article}

\RequirePackage{amsthm,amsmath,amsfonts,amssymb}
\RequirePackage{mathrsfs}
\RequirePackage{natbib}
\RequirePackage{hyperref}
\hypersetup{colorlinks=true, linkcolor=red, urlcolor=red, citecolor=blue}
\RequirePackage{booktabs, array}
\RequirePackage{lipsum}
\RequirePackage{graphicx}
\RequirePackage{subcaption}
\RequirePackage{multirow}
\RequirePackage{float}
\RequirePackage[ruled,lined,boxed,linesnumbered]{algorithm2e}
\RequirePackage{pifont}
\RequirePackage{makecell}
\RequirePackage{subcaption}
\RequirePackage[dvipsnames, table]{xcolor}
\RequirePackage{tikz}
\RequirePackage{mathtools}
\RequirePackage{titletoc}

\setcitestyle{numbers,square}

\RequirePackage[ruled,lined,boxed,linesnumbered]{algorithm2e}

\RequirePackage[utf8]{inputenc}
\RequirePackage{setspace}
\RequirePackage{enumerate}
\RequirePackage{lscape}
\RequirePackage[a4paper,
    left=1.3in,
    right=1in]{geometry}
\allowdisplaybreaks
\providecommand{\keywords}[1]{\textbf{Keywords:} #1}

\newtheorem{theorem}{Theorem}[section]

\newtheorem{assumption}{Assumption}[section]
\newtheorem{lemma}{Lemma}[section]
\newtheorem{corollary}{Corollary}[section]
\newtheorem{proposition}{Proposition}[section]

\usepackage{cleveref}
\newtheorem{innercustomthm}{Scenario}
\newenvironment{scenario}[1]
  {\renewcommand\theinnercustomthm{#1}\innercustomthm}
  {\endinnercustomthm}

\providecommand{\keywords}[1]{\textbf{Keywords:} #1}

\RequirePackage{stackengine}

\definecolor{transpred}{RGB}{255, 152, 152}
\definecolor{transpblue}{RGB}{0, 115, 230}
\definecolor{transpyellow}{RGB}{255, 255, 153}

\DeclareMathOperator*{\argmax}{arg\,max}

\title{
\Large
Detecting Changes in Production Frontiers
}

\author{Shakeel Gavioli-Akilagun$^{1,2}$, Yining Chen$^2$, and Flavio Ziegelmann$^3$}

\date{%
    $^1$Department of Decision Analytics and Operations, City University of Hong Kong\\%
    $^2$Department of Statistics, London School of Economics and Political Science\\%
    $^3$ Department of Statistics, Federal University of Rio Grande do Sul\\[2ex]%
    \today
}

\begin{document}

\maketitle

\begin{abstract}
We study the problem of estimating locations in time at which the level of technology in an economy changes when given a sequence of time ordered inputs and outputs. We approach the problem through the lens of nonparametric frontier analysis with frontiers that expand sharply and globally over time, and develop an offline change point detection procedure which achieves the minimax localization rates for the problem at hand up to logarithmic factors. We additionally give a simple method for constructing confidence intervals for the unobserved change point locations. Finally, we explain how the procedure can be modified to accommodate local changes in technology, meaning that efficiency gains are only realized for certain combinations of inputs. Simulation studies and  real data examples are also presented to illustrate the practical value of our methods. 
\end{abstract}

\keywords{
frontier analysis, change point detection, confidence intervals, shape constrained estimation. 
}

\section{Introduction}

Frontier analysis deals with the problem of estimating production frontier functions, the function characterizing the maximum output an economy may attain given a particular set of inputs, from observational data. A parametric frontier can usually be estimated from data straightforwardly via maximum likelihood \cite[Chapter 3]{kumbhakar2003stochastic}. However, it may not be realistic to assume the functional form of the frontier is known, thus a number of nonparametric estimation methods have been proposed in the literature; see \cite{simar2000statistical, simar2008statistical} for a review of the state of the art.  For nonparametric frontier estimators, popular methods include the free disposal hull estimator \citep{deprins2006measuring}, and data envelopment analysis \citep{charnes1978measuring}. We refer interested readers to \cite{parmeter2014efficiency} for a comprehensive review. 

In reality, advances in technology (almost always) cause the frontier to change and shift outwards over time. See \cite{tonini2012bayesian} and \cite{dhawan1997estimating}. A question which so far has been relatively neglected in the literature concerns rigorously accounting for such productivity changes, which naturally occur over time, see for instance, \citet[Chapter 8]{kumbhakar2003stochastic}. Alas, since the aforementioned approaches have not been designed to account for advances in technology over time, their use in the presence of time-varying productivity will lead to incorrect economic interpretations. A handful of approaches have been put forward in the literature to deal with temporal heterogeneity of the production frontier function, including threshold stochastic frontier models \citep{tsionas2019bayesian}, Markov switching stochastic frontier models \citep{tsionas2004markov}, random coefficient models \citep{tsionas2002stochastic}, and panel data models with cross-sectionally fixed but time-varying parameters \citep{koop2000modeling, koop2000stochastic}. Nonetheless, there remain some gaps in the literature. To name a few, first, existing approaches assume a parametric form for the production frontier function, and will therefore perform poorly when this functional form is misspecified; and second, there is a lack of theory concerning estimation of the time at which each change in technology occurs, which is of separate economic interest. 

In this work we approach the problem from a change point perspective. We study the setting where there might be abrupt changes in the frontier functions over time, and address the problem of accurately estimating the points in time at which the level of technology in an economy changes. As a secondary aim, we also study statistical inference by looking into the construction of confidence intervals for the estimated change point locations. 
 
We now describe the main idea for estimating the points in time at which the level of technology in an economy changes. Its main thrust relies on the fact that given historical data of inputs and outputs produced under several unknown technologies (see Figure~\ref{fig: intuition a}), under the assumption that technology increases monotonically over time (which holds empirically in most settings), the production frontier function corresponding to the most efficient technology can be consistently recovered by estimating the production frontier function using all available data, without any knowledge of when the change occurred (see Figure~\ref{fig: intuition b}). Then, dividing each observed output by the maximum attainable output (over the entire period) given the observed input, according to the aforementioned estimated production frontier function, results in a sequence of random variables with support $[0,1]$ if they are time-indexed after the most recent change, and with support bounded above and away from $1$ if they are time-indexed before the most recent change (see  Figure~\ref{fig: intuition c}). Consequently, the time of the most recent change in technology can be estimated by developing and then applying an appropriate change point detection algorithm to the previously defined sequence (see  Figure~\ref{fig: intuition d}), e.g. on the change in the boundary of the distribution.  Once these locations have been estimated consistently, the time-varying production frontier functions can likewise be recovered consistently by applying a consistent estimator to data indexed between these locations. 

\begin{figure}[!htbp]
\centering
\caption{Intuition for the change point detection procedure. In \eqref{fig: intuition a}: red dots (\textcolor{transpred}{$\bullet$}) represent observed input-output pairs under the old technology while blue dots (\textcolor{transpblue}{$\bullet$}) represent the same under the new technology; red dashed line (\textbf{\textcolor{red}{- - -}}) represents the old production frontier function while blue dashed line (\textbf{\textcolor{blue}{- - -}}) represents the new production frontier function. In \eqref{fig: intuition b}: black dashed line (\textbf{- - -}) represents the Free Disposal Hull (FDH, see Section~\ref{section: FDH estimator} for details) estimator computed on the complete dataset, without knowledge of the change point location. In \eqref{fig: intuition c}: red dots (\textcolor{transpred}{$\bullet$}) represent (time ordered) estimated efficiency scores for observed data under the old technology, while blue dots (\textcolor{transpblue}{$\bullet$}) represent the same under the new technology. In \eqref{fig: intuition d}: the black dashed line (\textbf{- - -}) represents the value of a quasi-likelihood ratio statistic for change in distribution.} 
\label{figure: intuition for change detection}
\begin{subfigure}[b]{0.7\textwidth}
\centering
\includegraphics[width=\textwidth]{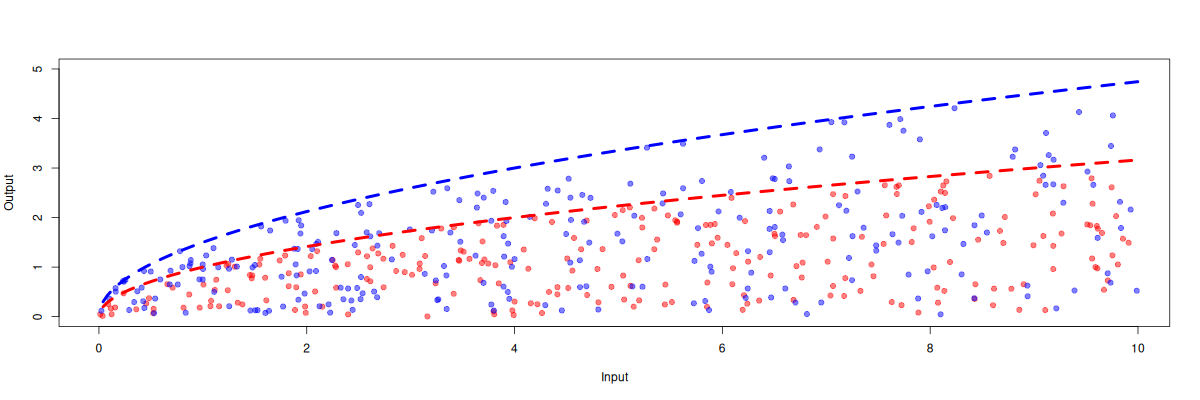}
\caption{}
\label{fig: intuition a}
\end{subfigure}
\begin{subfigure}[b]{0.7\textwidth}
\centering
\includegraphics[width=\textwidth]{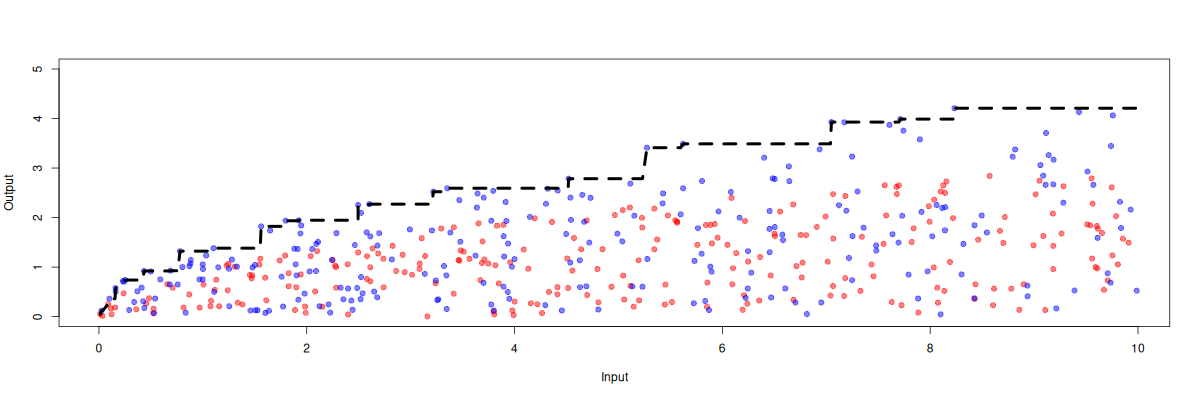}
\caption{}
\label{fig: intuition b}
\end{subfigure}
\begin{subfigure}[b]{0.7\textwidth}
\centering
\includegraphics[width=\textwidth]{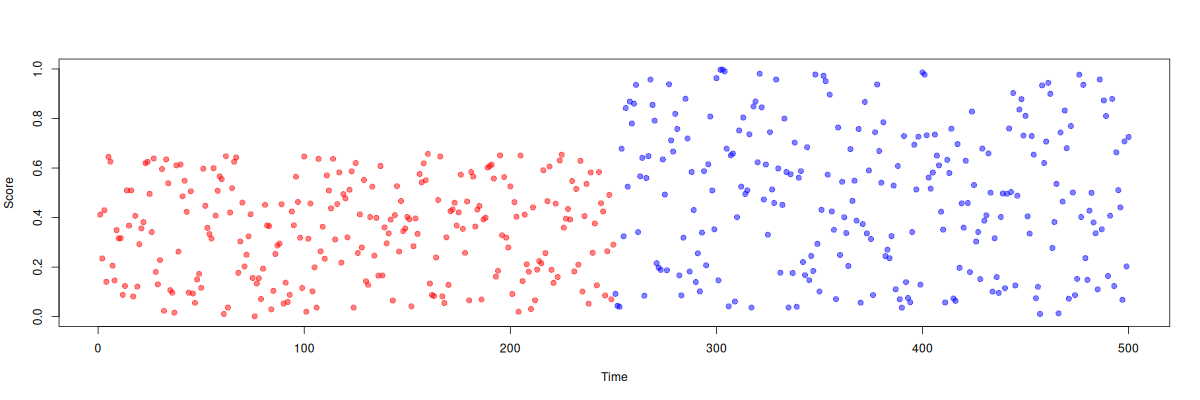}
\caption{}
\label{fig: intuition c}
\end{subfigure}
\begin{subfigure}[b]{0.7\textwidth}
\centering
\includegraphics[width=\textwidth]{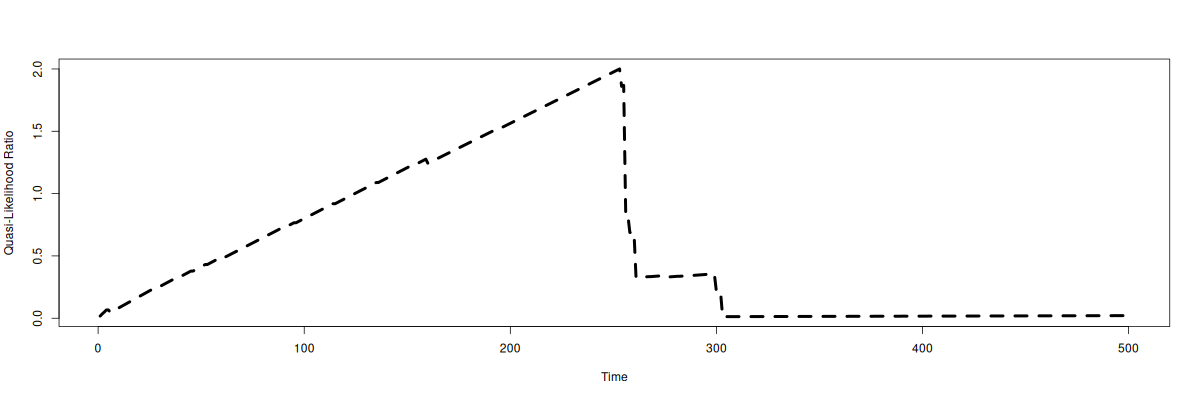}
\caption{}
\label{fig: intuition d}
\end{subfigure}
\end{figure}

It is worth noting that change point analysis deals with the problems where data are observed from a certain stochastic model whose parameters of interest (that could potentially be infinite-dimensional) are, exactly or approximately, piecewise constant. The principal aim is to recover the locations in time at which the parameters change. We refer interested readers to \cite{cho2024data} and \cite{truong2020selective} for comprehensive reviews. As discussed earlier, in this work we recover abrupt change locations in the  frontier model by carefully applying a change point algorithm designed to detect changes in support in the sequence of so called estimated efficiency scores, constructed as described above. Granted, it is tempting to ask whether the problem can be resolved using off-the-shelf change point algorithms. For instance, one may frame our problem as a regression problem with the response represented by the log-output, and attempt to apply algorithms for changes in linear regression \citep{bai1998estimating, bai2003computation, horvath1995detecting} or nonparametric regression \citep{muller1992change, yang2020change}. Alternatively, one may consider the inputs and outputs jointly and apply change point detection algorithms for changes in their joint distribution \citep{mcgonigle2025nonparametric, matteson2014nonparametric, celisse2018new}. However, as we show in Section~\ref{section: simulation studies} such approaches will not be consistent in general unless strong assumptions are placed on the data generating process. 

The remainder of the paper is organized as follows. Section~\ref{section: problem setup} formally introduces the problem, and briefly reviews the Free Disposal Hull (FDH) estimator which will play a central role in the proposed solution. Section~\ref{section: detecting a single change} studies the setting in which the production frontier function may change at most once, and Section~\ref{section: multiple change points} extends the analysis to settings where the production frontier function may change an unknown number of times over the period on which the data is observed. Section~\ref{section: local changes in technology} shows how the methodology can be modified to accommodate local changes in technology, meaning that changes in technology only manifest themselves over certain local regions of the input space. Finally, simulation experiments and real data examples are presented in Section~\ref{section: simulation studies} and Section~\ref{section: real data example}, respectively. All the proofs, as well as further details on the numerical experiments and some possible extensions are provided in the supplementary materials. 

Throughout the paper, we make use of the following notation. For a vector $\boldsymbol{x} = \left ( x_1, \dots, x_d \right )^\top \in \mathbb{R}^d$, we write $\left \| \boldsymbol{x} \right \|_p =  \left ( \sum_{j=1}^d \left | x_j \right | ^p \right ) ^ \frac{1}{p}$ for its $p$ norm ($1 \leq p < \infty $). We write $x \vee y$ for the maximum of two real numbers $x$ and $y$. We write $\mathbb{N} = \{0, 1, 2, \dots\}$ for the set of natural numbers including zero. For any $c > 0$, we write $\mathbb{N}_{< c} = \left \{ n \in \mathbb{N} \mid n < c \right \}$ for the set of natural numbers smaller than $c$. For a set $\mathcal{A}$,  we denote its Lebesgue measure by $\operatorname{mes} \left ( \mathcal{A} \right )$. For a function $f : \mathcal{X} \mapsto \mathbb{R}$, we write $\left \| f \right \|_\infty = \sup_{x \in \mathcal{X}} \left | f (x) \right |$ for its sup-norm. Finally, for an event $A$, the expression $\neg A$ is understood as `not $A$', and $\mathbf{1}_{A}$ as the indicator function taking value $1$ if the event occurs and $0$ if otherwise. 

\section{Problem setup and preliminaries} \label{section: problem setup}

\subsection{Problem setup}

We study the problem in which data $\left \{ \left ( \boldsymbol{X}_t, Y_t \right ) \mid t = 1, \dots, n \right \}$ consisting of pairs of time-ordered inputs and outputs are observed. We consider a model for an economy where at each time step $\boldsymbol{X}_t \in \mathcal{X} \subset \mathbb{R}_+^d$ quantities of a $d$-dimensional input are used to produce $Y_t \in \mathcal{Y} \subset \mathbb{R}_+$ quantities of a scalar valued output. The production set for the economy at each time $t$ is given by 
\begin{equation}
\Psi_t = \left \{ \left ( \boldsymbol{x}, y \right ) \in \mathcal{X} \times \mathcal{Y} \mid \boldsymbol{x} \text{ can produce } y \right \}. 
\end{equation}
The level of technology in the economy is assumed to increase monotonically over time. That is to say: $\Psi_s \subseteq \Psi_t$ for all $s < t$. In addition, we assume the level of technology improves at $K$ discrete locations in time given by $\Theta = \left \{ \eta_1, \dots, \eta_K \right \}$. More precisely, at each time step the production set is given by
\begin{equation}
\Psi_t =  
\begin{cases}
\Psi_{(1)} & \text{ if } \eta_0 < t \leq \eta_1 \\
 & \vdots \\
\Psi_{(K+1)} & \text{ if } \eta_K < t \leq \eta_{K+1},
\end{cases}
\label{equation: sequence of production sets}
\end{equation}
where $\Psi_{(k)} \subset \Psi_{(k+1)}$ for each $k = 1, \dots, K$, and we have put $\eta_0 = 0$ and $\eta_{K+1} = n$ by convention. Both $K$  and $\Theta$ are unknown, and the goal is to estimate these quantities from the data. The boundary of each production set, which may be intuitively understood as the locus of optimal production plans, is defined via 
\begin{equation}
    \partial\Psi_{(k)} = \left \{ \left ( \boldsymbol{x} , y \right ) \in \Psi_{(k)} \mid  \left ( \gamma^{-1} \boldsymbol{x} , \gamma y \right ) \not \in \Psi_{(k)} , \forall \gamma \in (1,\infty) \right \}.
\end{equation}
This quantity can be equivalently characterised by a production frontier function $f_{(k)} : \mathcal{X} \mapsto \mathcal{Y}$ for each $k = 1, \dots, K+1$. That is, we denote by $f_{(1)} (\cdot), \ldots, f_{(K+1)} (\cdot)$ the production frontier functions associated with each of the production sets $\Psi_{(1)},\ldots, \Psi_{(K+1)}$. A pair of inputs and outputs $\left ( \boldsymbol{X}_t, Y_t \right )$ produced with technology $\Psi_t$ is said to have (output) Farrell efficiency $\theta_t^\text{out} = \sup \left \{ r \mid \left ( \boldsymbol{X}_t, r Y_t \right ) \in \Psi_t \right \}$ and (input) Farrell efficiency $\theta_t^\text{in} = \inf \left \{ r \mid \left ( r \boldsymbol{X}_t, Y_t \right ) \in \Psi_t \right \}$; see \cite{farrell1957measurement}. In light of \eqref{equation: sequence of production sets} the inputs and outputs are linked by the equation
\begin{equation}
Y_t = 
\begin{cases}
f_{(1)} (\boldsymbol{X}_t) R_t  & \text{ if } \eta_0 < t \leq \eta_1 \\
 & \vdots \\
f_{(K+1)} (\boldsymbol{X}_t) R_t & \text{ if } \eta_K < t \leq \eta_{K+1},
\end{cases} 
\label{equation: data generating process}
\end{equation}
where each $R_t$ is the inverse Farrell output efficiency measure for the pair $\left ( \boldsymbol{X}_t, Y_t \right )$ under technology $\Psi_t$, which is henceforth referred to as the efficiency scores. In this work, the efficiency scores are modelled as a sequence of random variables each with support $[0,1]$. Besides, we note that in the above formulation the input space $\mathcal{X}$ does not change over time, as otherwise issues of identifiability could arise.

\subsection{Revisiting the Free Disposal Hull (FDH) estimator} \label{section: FDH estimator}

In this section we recall the construction of the Free Disposal Hull (FDH) estimator, which will play an important role in our approach to recovering the change point locations in \eqref{equation: data generating process}. We also provide a result on the finite sample concentration of the FDH estimator, which we use for our theoretical development. Given a sample $\left \{ \left ( \boldsymbol{X}_t, Y_t \right ) \mid t = 1, \dots, n \right \}$ of $d$-dimensional inputs and scalar outputs from model \eqref{equation: data generating process} belonging to a \emph{single} production set $\Psi_{(1)}$ and linked by the equation $Y_t = f_{(1)} ( \boldsymbol{X}_t ) R_t$, where $f_{(1)} (\cdot)$ is the associated production frontier function and $R_t$ is the inverse Farrell output efficiency at time $t$, the FDH estimator for the production set is obtained via
\begin{equation}
\hat{\Psi}_n = \bigcup_{t=1}^n \text{FD} \left ( \boldsymbol{X}_t, Y_t \right ), 
\label{equation: FDH production set estimator} 
\end{equation}
where $\text{FD} \left ( \cdot, \cdot \right )$ gives the free disposal set for an input-output pair $\left ( \boldsymbol{x} ,y \right ) \in \mathcal{X} \times \mathcal{Y}$ and is defined as $\text{FD} \left ( \boldsymbol{x}, y \right ) = \left \{ \left ( \tilde{\boldsymbol{x}}, \tilde{y} \right ) \in \mathcal{X} \times \mathcal{Y} \mid \tilde{y} \leq y \text{ and } \tilde{\boldsymbol{x}} \geq \boldsymbol{x} \right \}$, where the final inequality should be understood entry-wise. Based on \eqref{equation: FDH production set estimator}, one can also define an estimator for the associated production frontier function $f_{(1)} (\cdot)$ as follows:
\begin{equation}
\hat{f}_n ( \boldsymbol{x} ) = \max_{t \in \left \{ s \mid \boldsymbol{X}_s \leq \boldsymbol{x} \right \}}  Y_t, \hspace{2em} \boldsymbol{x} \in \mathcal{X}. 
\label{equation: FDH production frontier function estimator} 
\end{equation}

Compared to other estimators available in the literature, the FDH estimator has the virtue of being fully nonparametric and placing almost no restriction on the production set. The only requirement being that the set is free disposal, meaning that by increasing the quantity of inputs an economy can produce at least the same quantity of output. For a more rigorous result, the following assumptions are needed: 

\begin{assumption}
There are strictly positive constants $\left \{\overline{x}_1, \dots, \overline{x}_{d}, \overline{y} \right \}$ such that $\Psi_{(k)} \subseteq \mathcal{X} \times \mathcal{Y}$ for each $k = 1, \dots, K+1$, where $\mathcal{X} = \times_{j=1}^{d} \mathcal{X}_j$ with $\mathcal{X}_j = \left [ 0, \overline{x}_j \right ]$ for each $j = 1,\ldots, d$ and $\mathcal{Y} = \left [ 0, \overline{y} \right ]$. 
\label{assumption: Z is subset of hypercube}
\end{assumption}

\begin{assumption}
For each $k = 1, \dots, K+1$, the production set $\Psi_{(k)}$ is free disposal. That is: if $(\boldsymbol{x}, y) \in \Psi_{(k)}$ then all $ (\tilde{\boldsymbol{x}}, \tilde{y}) \in \mathcal{X} \times \mathcal{Y}$ with $\tilde{\boldsymbol{x}} \geq \boldsymbol{x}$ and $\tilde{y} \leq y$ are also in $\Psi_{(k)}$. 
\label{assumption: production set is free disposal}
\end{assumption}

\begin{assumption}
\label{assumption: X and R distribution}
For each $t = 1, \dots, n$, the input-output pairs $\left ( \boldsymbol{X}_t, Y_t \right )$ are independently distributed with absolutely continuous joint distribution functions. Moreover:
\begin{enumerate}[(i)]
    \item There are positive constants $C_{1,X}, C_{2,X}$  (that do not depend on $t$) such that the density of each input $\boldsymbol{X}_t$ satisfies $C_{1,X} \leq f_{\boldsymbol{X}_t} \left ( \boldsymbol{x} \right ) \leq C_{2,X} $ for all $\boldsymbol{x} \in \mathcal{X}$. 
    \item There are  positive constants $C_{1,R}, C_{2,R}$ (that do not depend on $t$) such that the conditional density of each efficiency score satisfies $C_{1,R} \leq f_{R_t \mid \boldsymbol{X}_t = \boldsymbol{x}} \left ( r \right ) \leq C_{2,R}$ for all $r \in [0,1]$ and $\boldsymbol{x} \in \mathcal{X}$. 
\end{enumerate}
\end{assumption}
\begin{assumption}
The production frontier functions  $f_{(1)}(\cdot), \dots, f_{(K+1)}(\cdot)$ are $L$-bilipschitz. That is, there is a constant $L \geq 1$ such that for any $\boldsymbol{x}, \tilde{\boldsymbol{x}} \in \mathcal{X}$ and any $k = 1, \dots, K+1$ it holds that $ L^{-1} \left \| \boldsymbol{x} - \tilde{\boldsymbol{x}} \right \|_2 \leq \left | f_{(k)} (\boldsymbol{x}) - f_{(k)} ( \tilde{\boldsymbol{x}} ) \right | \leq L \left \| \boldsymbol{x} - \tilde{\boldsymbol{x}} \right \|_2$. Moreover, it holds that $f_{(k)} \left ( \boldsymbol{0} \right ) = 0 $ for all $k$, where $ \boldsymbol{0} := ( 0, \dots, 0 )^\top \in \mathbb{R}^d $. 
\label{assumption: production frontier function continuous}
\end{assumption}

Some comments about these assumptions are in order. We note that it is common in the literature to assume free disposal production sets (Assumption~\ref{assumption: production set is free disposal}), bounded inputs and outputs (Assumption~\ref{assumption: Z is subset of hypercube}), and continuous production frontier functions (Assumption~\ref{assumption: production frontier function continuous}); see for example the review papers \cite{simar2000statistical, simar2008statistical}. Assumption~\ref{assumption: X and R distribution} assumes that respectively the densities and conditional densities of the inputs and efficiency scores are bounded from below; this is also standard  and is needed to guarantee that enough data is observed close to the boundary of the production set for every point in the input space; see \cite{park2000fdh}. The assumption that the input-output pairs are serially independent is made for ease of exposition, and we show how this assumption can be relaxed in \ref{section: serially dependent} of the supplementary materials. On the other hand, Assumption~\ref{assumption: X and R distribution} does allow for the distributions of the inputs and efficiency scores to change over time, as we do not require them to be identical. With these assumptions in place we have the following result: 

\begin{theorem}
Let $\hat{f}_n \left ( \cdot \right )$ be the FDH estimator from a sample of input-output pairs 
$( \boldsymbol{X}_t, Y_t )$ with $ t = 1, \dots, n$
from a single production set $\Psi_{(1)}$ with associated production frontier function $f_{(1)} \left ( \cdot \right )$ and linked by the equation $Y_t = f_{(1)} (\boldsymbol{X}_t ) R_t$, i.e. with number of changes $K=0$. Grant Assumptions~\ref{assumption: Z is subset of hypercube},~\ref{assumption: production set is free disposal},~\ref{assumption: X and R distribution},~\ref{assumption: production frontier function continuous} hold and put $\psi_n = \left ( \frac{\log (n)}{n} \right ) ^ {\frac{1}{d+1}}$. Then, with $n$ sufficiently large, for any $\zeta > 0$ it holds that
\begin{equation*}
\mathbb{P} \left ( \psi_n^{-1} \left \| f_{(1)} - \hat{f}_n \right \|_\infty > \zeta \right ) \leq C e^{-\zeta},
\end{equation*}
where $C$ is an absolute constant which depends only on the parameters in Assumptions~\ref{assumption: Z is subset of hypercube},~\ref{assumption: production set is free disposal},~\ref{assumption: X and R distribution},~\ref{assumption: production frontier function continuous}.
\label{theorem: exponential concentration in infty distance}
\end{theorem}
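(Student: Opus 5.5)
The plan is to handle the two sides of the sup-norm error separately.

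The easy side comes first. Since every observation lies in $\Psi_{(1)}$ and the set is free disposal (Assumption~\ref{assumption: production set is free disposal}), $Y_t = f_{(1)}(\boldsymbol{X}_t)R_t \le f_{(1)}(\boldsymbol{X}_t)$. For any $t$ with $\boldsymbol{X}_t \le \boldsymbol{x}$ we also have $f_{(1)}(\boldsymbol{X}_t) \le f_{(1)}(\boldsymbol{x})$, because the frontier is monotone under free disposability. Hence $\hat f_n \le f_{(1)}$ everywhere, deterministically. It therefore suffices to bound $\mathbb{P}\bigl(\sup_{\boldsymbol{x}} (f_{(1)}(\boldsymbol{x}) - \hat f_n(\boldsymbol{x})) > \zeta\psi_n\bigr)$.

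Next comes a local counting argument. Fix $\boldsymbol{x} \in \mathcal{X}$ and $\varepsilon>0$. Consider the set
\begin{equation*}
A_{\boldsymbol{x},\varepsilon} = \bigl\{(\tilde{\boldsymbol{x}}, y) : \tilde{\boldsymbol{x}} \le \boldsymbol{x},\ \|\boldsymbol{x}-\tilde{\boldsymbol{x}}\|_\infty \le \varepsilon/(2L\sqrt d),\ f_{(1)}(\boldsymbol{x}) - \varepsilon/2 \le y \le f_{(1)}(\tilde{\boldsymbol{x}})\bigr\}.
\end{equation*}
If some observation falls in $A_{\boldsymbol{x},\varepsilon}$, then $\hat f_n(\boldsymbol{x}) \ge f_{(1)}(\boldsymbol{x})-\varepsilon/2$. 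By the Lipschitz upper bound in Assumption~\ref{assumption: production frontier function continuous}, $f_{(1)}(\tilde{\boldsymbol{x}}) \ge f_{(1)}(\boldsymbol{x}) - \varepsilon/2$ on the cube, so the $y$-slab is nonempty. Assumption~\ref{assumption: X and R distribution} then gives a lower bound on the probability of this set. The $y$-range corresponds to $R_t \in [1-c\varepsilon/f_{(1)}(\tilde{\boldsymbol{x}}),1]$, which has probability at least $C_{1,R}\,\varepsilon/(2\overline y)$ for a point on a slab of width about $\varepsilon/2$. Combining with $C_{1,X}$ and the cube volume (with a constant factor for $\boldsymbol{x}$ near the boundary of $\mathcal X$; cubes are taken in the orthant below $\boldsymbol{x}$, which stays inside $\mathcal X$ when $\boldsymbol x$ is at least $\varepsilon$ from $\boldsymbol 0$) yields
\begin{equation*}
\mathbb{P}((\boldsymbol{X}_t,Y_t)\in A_{\boldsymbol{x},\varepsilon}) \ge c_0 \varepsilon^{d+1}.
\end{equation*}
Here $c_0$ depends only on the assumption constants. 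Near $\boldsymbol 0$ the claim is trivial, since $f_{(1)}(\boldsymbol x)\le L\|\boldsymbol x\|_2$. By independence, the probability that no point lands in $A_{\boldsymbol{x},\varepsilon}$ is at most $(1-c_0\varepsilon^{d+1})^n \le \exp(-c_0 n\varepsilon^{d+1})$. The point of the assumptions is that this bound holds uniformly in $t$ even though the distributions are non-identical.

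The step I expect to be the main obstacle is passing to the supremum over $\boldsymbol x$. I would use a grid $G$ of mesh $\varepsilon/(4L\sqrt d)$ over $\mathcal X$, which has $|G| \lesssim \varepsilon^{-d}$ points. For any $\boldsymbol x$, choose the grid point $\boldsymbol g \le \boldsymbol x$ nearest to it. Monotonicity of $\hat f_n$ gives $\hat f_n(\boldsymbol x)\ge \hat f_n(\boldsymbol g)$, and the Lipschitz property gives $f_{(1)}(\boldsymbol x) \le f_{(1)}(\boldsymbol g)+\varepsilon/4$. Hence $\sup_{\boldsymbol x}(f_{(1)}-\hat f_n) > \varepsilon$ implies that $f_{(1)}(\boldsymbol g) - \hat f_n(\boldsymbol g) > \varepsilon/2$ for some $\boldsymbol g\in G$. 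A union bound then gives
\begin{equation*}
\mathbb{P}\bigl(\|f_{(1)}-\hat f_n\|_\infty>\varepsilon\bigr) \le C'\varepsilon^{-d}\exp(-c_0 n\varepsilon^{d+1}).
\end{equation*}

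Finally, put $\varepsilon=\zeta\psi_n$, so that $n\varepsilon^{d+1} = \zeta^{d+1}\log n$. The bound becomes $C'\zeta^{-d}\psi_n^{-d} n^{-c_0\zeta^{d+1}}$, and $\psi_n^{-d}\le n$. For $\zeta \ge \zeta_0$, with $\zeta_0$ a constant large enough that $c_0\zeta^{d+1}-1 \ge \zeta$ and $\zeta^{-d}\le 1$, this is at most $C' n^{-\zeta} \le C'e^{-\zeta}$ once $n\ge 3$. For $\zeta<\zeta_0$ the bound holds trivially after enlarging the constant to $C \ge e^{\zeta_0}$, which finishes the proof. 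The only remaining care is bookkeeping: tracking the constants so that $C$ depends only on $d, L, \overline{x}_j, \overline y, C_{1,X}, C_{1,R}$, and requiring "$n$ sufficiently large" so that $\varepsilon$ is small relative to the box $\mathcal X$.
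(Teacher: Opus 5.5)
Your argument is sound in outline. It rests on the same covering idea as the paper's proof: each small cell near the frontier is hit with probability $1-(1-c\,\mathrm{vol})^n$, one takes a union over $\asymp\psi_n^{-d}$ cells, and Lipschitz continuity interpolates between them. You package it differently, though.

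The paper never works at a single scale. It takes a $\psi_n$-net $\{u_l\}$ on $\partial\Psi$ with caps $W(u_l,r_k)=B(u_l,r_k)\cap\Psi$ of volume $\asymp k\psi_n^{d+1}$. It shows that the event $A_{k,n}$ that every cap is hit satisfies $\mathbb{P}(A_{k,n}^c)\lesssim n^{-k}$, and that $\|f_{(1)}-\hat f_n\|_\infty\le (C_1k+C_2)\psi_n$ on $A_{k,n}$. It then sums over this ladder of scales to bound $\mathbb{E}\exp(\psi_n^{-1}\|f_{(1)}-\hat f_n\|_\infty)$ by a constant, and finishes with Markov's inequality.

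You instead fix $\varepsilon=\zeta\psi_n$, use a grid in input space with cells in the lower orthant of each grid point, and union-bound directly. This buys three things:
\begin{itemize}
\item it is more elementary and avoids the layered decomposition of the exponential moment;
\item it gives a much stronger tail, of order $n^{1-c_0\zeta^{d+1}}$, which you then deliberately weaken to $e^{-\zeta}$;
\item a hit in your cell is automatically admissible in the FDH maximum at $\boldsymbol x$, whereas the paper's caps, centred on frontier points, must arrange this through the auxiliary choice of $\boldsymbol x'$.
\end{itemize}
The paper's version matches the Korostelev--Simar--Tsybakov template and yields the exponential-moment bound directly; your tail bound gives the same after integration.

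Two pieces of bookkeeping need fixing, both routine. First, the $R$-window $[(f_{(1)}(\boldsymbol x)-\varepsilon/2)/f_{(1)}(\tilde{\boldsymbol x}),1]$ shrinks to nothing at the far corner of your cube. Integrate only over the half-size sub-cube adjacent to $\boldsymbol x$, where the window has length at least $\varepsilon/(4\overline y)$. Second, when $\zeta\psi_n$ is not small, ``$n$ sufficiently large'' does not help. Use instead that the tail is nonincreasing in $\varepsilon$ and vanishes for $\varepsilon\ge\overline y$, while $e^{-\zeta}\ge e^{-\overline y/\psi_n}$ decays much more slowly than $e^{-cn}$.

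One caveat is shared with the paper rather than introduced by you. Your claim that the lower-orthant cube lies in $\mathcal X$ once $\boldsymbol x$ is at least $\varepsilon$ from $\boldsymbol 0$ holds only for $d=1$. For $d\ge 2$, take a point with one small coordinate but large norm, such as $(\overline x_1,0,\dots,0)$. Its lower orthant in $\mathcal X$ is Lebesgue-null, so almost surely the FDH maximum there is taken over an empty set. Meanwhile Lipschitz continuity only gives $f_{(1)}(\boldsymbol x)\le L\|\boldsymbol x\|_2$, so $f_{(1)}$ need not be small there. For instance, $f_{(1)}(\boldsymbol x)=x_1+x_2$ on $[0,1]^2$ with uniform inputs and scores gives $\|f_{(1)}-\hat f_n\|_\infty\ge 1$ almost surely.

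The paper's proof has the same hole: for $\boldsymbol x$ on a coordinate face, its candidate set for $\boldsymbol x'$ is empty. The literal bilipschitz hypothesis saves the statement only vacuously, since no continuous injective real-valued function exists on a box of dimension $d\ge 2$.

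What your argument actually establishes is the claim for $d=1$. For general $d$, it establishes the bound with the supremum restricted to $\{\boldsymbol x\in\mathcal X:\min_j x_j\ge c\}$ for a fixed $c>0$, with $C$ depending on $c$. That restricted version is what the later results use when $\boldsymbol x_0$ has positive entries, since the estimated scores are only formed for $\boldsymbol X_t>\boldsymbol x_0$.
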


This result shows the consistency of the FDH estimator, under mild conditions. It is a slight refinement using similar proof techniques in \cite{korostelev1995efficient} and \cite{korostelev1995estimation}.  The exponential concentration of \eqref{equation: FDH production frontier function estimator} around the truth in the $\ell_\infty$ metric established here will be useful for the theoretical development of our estimation procedure later.

\section{Detecting a single change} \label{section: detecting a single change}

\subsection{Methodology}

In this section we formally introduce the main idea for localizing change points in model \eqref{equation: sequence of production sets} by first studying the simpler problem in which at most one change occurs at an unknown time, i.e., 
\begin{scenario}{S}\label{scenario: AMOC}
The number of change points in the data satisfies $K \in \left \{ 0, 1\right \}$, and with $K=1$ there is a $\mu \in (0,1)$ such that for all $\boldsymbol{x} \in \mathcal{X}$ it holds that $f_{(1)} (\boldsymbol{x}) / f_{(2)} (\boldsymbol{x}) \leq \mu$. 
\end{scenario}
See Figure~\ref{figure: AMOC illustration} for a visual illustration of Scenario~\ref{scenario: AMOC}. Note, however, that the efficiency gain in moving to new technology does not have to be constant over the input space $\mathcal{X}$, as Scenario~\ref{scenario: AMOC} allows for the quantity $f_{(1)}(\boldsymbol{x}) / f_{(2)} (\boldsymbol{x})$ to vary for different values of $\boldsymbol{x}$ provided this quantity is bounded from below by $\mu$. 

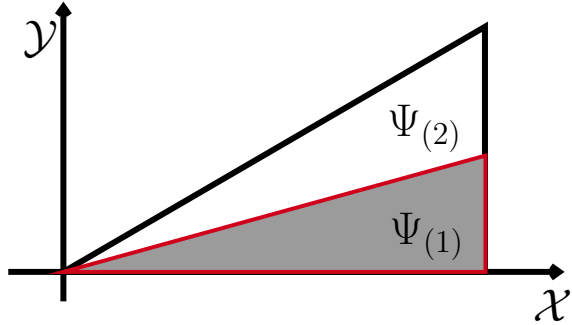
\begin{figure}[!htbp]
\centering
\caption{Illustration of Scenario~\ref{scenario: AMOC}. The grey region (\textcolor{gray}{$\blacksquare$}) corresponds to the production set under the old technology $\Psi_{(1)}$ and the white region ($\square$) corresponds to the production set under the new technology $\Psi_{(2)}$.}
\label{figure: AMOC illustration}

\tikzset{every picture/.style={line width=0.75pt}} 

\begin{tikzpicture}[x=0.75pt,y=0.75pt,yscale=-0.6,xscale=0.6]

\draw [line width=2.25]  (91,243.59) -- (549.57,243.59)(136.86,21.29) -- (136.86,268.29) (542.57,238.59) -- (549.57,243.59) -- (542.57,248.59) (131.86,28.29) -- (136.86,21.29) -- (141.86,28.29)  ;
\draw  [line width=2.25]  (487.57,38.1) -- (136.86,243.59) -- (487.57,243.59) -- cycle ;
\draw  [color={rgb, 255:red, 208; green, 2; blue, 27 }  ,draw opacity=1 ][fill={rgb, 255:red, 155; green, 155; blue, 155 }  ,fill opacity=1 ][line width=1.5]  (488,146.1) -- (136.86,243.59) -- (488,243.59) -- cycle ;

\draw (101,30.74) node [anchor=north west][inner sep=0.75pt]  [font=\LARGE]  {$\mathcal{Y}$};
\draw (525,255.74) node [anchor=north west][inner sep=0.75pt]  [font=\LARGE]  {$\mathcal{X}$};
\draw (406,191.74) node [anchor=north west][inner sep=0.75pt]  [font=\LARGE]  {$\Psi _{( 1)}$};
\draw (404,100.74) node [anchor=north west][inner sep=0.75pt]  [font=\LARGE]  {$\Psi _{( 2)}$};
\end{tikzpicture}
\end{figure}

The thrust of our method for estimating the change point location relies on the fact that if the latter frontier $f_{(2)} \left ( \cdot \right )$ were known, one could then estimate the change point location at the optimal $\mathcal{O}_\mathbb{P} (1)$ rate by forming the ``quasi-efficiency'' scores
\begin{equation}
    \widetilde{R}_t = \frac{Y_t}{f_{(2)} \left ( \boldsymbol{X}_t \right )} , \hspace{2em} \text{for } t = 1, \dots, n, 
    \label{equation: pseudo-efficiency scores}
\end{equation}
and estimating the change point location as the arg-max of two sample test statistics for testing equality of support between $\tilde{R}$'s indexed on $\left \{ 1, \dots, \tau \right \}$ and those indexed on $\left \{ \tau + 1, \dots, n \right \}$ for all $1 \leq \tau < n$; see  Figure~\ref{fig: intuition c}. This is because under Scenario~\ref{scenario: AMOC} the $\tilde{R}_t$'s indexed before the change will be distributed with support $[0, \mu]$ whereas those indexed after the change will be distributed with support $[0,1]$. Consequently, if $f_{(2)} \left ( \cdot \right )$ can be estimated at the $o_\mathbb{P} (1)$ rate, then the change point can likewise be localized at the optimal or near-optimal rate. Although the aforementioned problem appears to rely on the unknown change point location, a key insight is that because $\Psi_{(1)} \subset \Psi_{(2)}$ (as $\mu < 1$), it is in fact possible to estimate $\Psi_{(2)}$, and consequently $f_{(2)} \left ( \cdot \right )$, from data without any knowledge of the change point location. For this task we opt for the nonparametric free disposal hull estimator for simplicity, which was formally introduced and studied in Section~\ref{section: FDH estimator}. In principle, it could be replaced by any other nonparametric approaches, such as Data Envelopment Analysis (DEA).

We are now in a position to describe our estimator for the change point location. Let 
$( \boldsymbol{X}_t, Y_t )$ for $t = 1, \dots, n$
be a sample from \eqref{equation: data generating process} with production frontier functions satisfying Scenario~\ref{scenario: AMOC}, and let $\hat{f}_n (\cdot)$ be the FDH estimator constructed according to \eqref{equation: FDH production frontier function estimator}. Note that although Theorem~\ref{theorem: exponential concentration in infty distance} can be used to show that the $\hat{f}_n (\cdot)$ will be uniformly consistent for $f_{(2)}(\cdot)$, for observations with small input quantities, the corresponding efficiency score cannot be consistently estimated, as both the input and the output would be close to zero. Therefore, for some $\boldsymbol{x}_0 \in \mathcal{X}$ whose choice will be made precise in the sequel, we work only with observations which are entry-wise larger than $\boldsymbol{x}_0$ and consequently define the sequence of estimated pseudo-efficiency scores as follows: 
\begin{equation}
\hat{R}_t = 
\begin{cases}
Y_t / \hat{f}_n \left ( \boldsymbol{X}_t \right ) & \text{ if } \boldsymbol{X}_t > \boldsymbol{x}_0 \\
0 & \text{ else }  
\end{cases},  \hspace{2em} \text{for } t = 1, \dots, n,
\label{equation: estimated scores}
\end{equation}
where the inequality in \eqref{equation: estimated scores} should be understood to be entry-wise for vectors. Consequently, for each $\tau = 1, \dots, n$, define the statistic  
\begin{equation}
\hat{L}_{\tau} = 
-2 N \left ( 1 , \tau , \boldsymbol{x}_0 \right ) \log  \left [ \hat{M} \left ( 1, \tau, \boldsymbol{x}_0 \right ) \right ]  
\label{equation: quasi-LR stat}
\end{equation}
where we follow the convention of $0 \times \infty = 0$, and where for any $1 \leq t_1 < t_2 \leq n$,  
\[
N \left ( t_1 , t_2, \boldsymbol{x}_0 \right ) = \sum_{t=t_1}^{t_2} \mathbf{1}_{\left \{  \boldsymbol{X}_t > \boldsymbol{x}_0 \right \}} \quad \mbox{and} \quad \hat{M} \left ( t_1, t_2, \boldsymbol{x}_0 \right ) = \max \left ( \hat{R}_t  \mid t_1 \leq t \leq t_2, \boldsymbol{X}_t > \boldsymbol{x}_0 \right ).
\]
Here $N \left ( t_1 , t_2, \boldsymbol{x}_0 \right )$ counts the number of observations indexed between $t_1$ and $t_2$  with inputs larger than $\boldsymbol{x}_0$ in an entry-wise manner, while $\hat{M}( t_1, t_2, \boldsymbol{x}_0)$ reports the maximum of the estimated efficiency scores occurring between the aforementioned indices. 

Observe that \eqref{equation: quasi-LR stat} corresponds to the quasi-likelihood ratio statistic, derived from testing the null that a sequence of $\tau$ uniform random variables have support $[0,1]$ against the alternative that they have support $[0,\mu]$ for some $\mu \in (0, 1)$. More precisely, if $U_1, \dots, U_\tau$ are mutually independent Uniform random variables on $[0,\mu]$ the likelihood ratio statistic for testing the aforementioned null and alternative is given by
\begin{align*}
    L_\tau = 2 \log \left [ \sup_{\mu < 1} \prod_{t=1}^\tau \frac{1}{\mu} \boldsymbol{1}_{\{ \max \left ( U_t \mid 1 \leq t \leq \tau \right ) \leq \mu \}} \right ] = - 2 \tau \log \left [ \max \left ( U_t \mid 1 \leq t \leq \tau \right ) \right ]. 
\end{align*}
With these definitions in place, the estimator for the change point location in Scenario~\ref{scenario: AMOC} can be taken to be the arg-max of the sequence of quasi-likelihood ratio statistics: 
\begin{equation}
\hat{\eta} = \argmax_{1 \leq \tau \leq n} \hat{L}_\tau. 
\label{equation: single cpt estimator}
\end{equation} 

\subsection{Theory -- detection and localization}

In order to show the consistency of \eqref{equation: single cpt estimator} for the unknown change point location we need to impose the following assumptions on the amount by which the frontier shifts after the change as well as the effective sample size measured as the minimum of the number of data points occurring to the left and to the right of the change. In the sequel, we have put $x_0 \coloneqq \| \boldsymbol{x}_0 \|_2$, where $\boldsymbol{x}_0$ is the threshold in the input we use for trimming. 

\begin{assumption}
Putting $\delta = \min \left ( \eta, n - \eta \right )$ and letting $L$ be as in Assumption~\ref{assumption: production frontier function continuous} it holds that
\begin{enumerate}[(i)]
    \item there is a positive constant $C_{x,d}$ for which $\delta \geq C_{x,d} \log (n)$, and 
    \item $\delta^{\frac{1}{d+1}} \log (1/\mu) \geq \frac{4 L}{x_0} \log^{3/2} (n)$.
\end{enumerate}
\label{assumption: change point spacing}
\end{assumption}
Part (i) of Assumption~\ref{assumption: change point spacing} is needed  to guarantee that enough data is observed after the change so that $f_{(2)}(\cdot)$ can be estimated well using the FDH estimator, whereas Part (ii) guarantees that the signal from the change, for the purpose of localization, is not dominated by the estimation error from the FDH estimator. In particular, this assumption is automatically satisfied, for $n$ sufficiently large, when $\delta = \mathcal{O} (n)$ and $\log (1 / \mu ) = \mathcal{O} (1)$, which is the typical scenario studied for infill asymptotics in the change point literature. Note also that for the problem at hand the quantity $\log (1/\mu)$ plays the role of the ``size'' of the change when moving from technology $f_{(1)}(\cdot)$ to technology $f_{(2)}(\cdot)$. With these assumptions in place we have the following result: 

\begin{theorem} 
Let $\left \{ \left ( \boldsymbol{X}_t, Y_t \right ) \mid t = 1, \dots, n \right \}$ be a sample from \eqref{equation: data generating process} with production frontier functions satisfying Scenario~\ref{scenario: AMOC}, as well as Assumptions~\ref{assumption: production frontier function continuous}, and~\ref{assumption: change point spacing} and stochastic components satisfying Assumptions~\ref{assumption: Z is subset of hypercube},~\ref{assumption: production set is free disposal}, and \ref{assumption: X and R distribution}. If $\boldsymbol{x}_0$ is chosen such that $\max_{t = 1, \dots, n} \mathbb{P} \left ( \boldsymbol{X}_t \leq \boldsymbol{x}_0 \right ) \leq C_1$ for some $C_1 \in (0,1)$ and the threshold $\lambda$, effective sample size $\delta$, and jump size $\log (1/\mu)$ jointly satisfy
\begin{equation}
C_2 \log (n) \leq \lambda \leq C_3 \delta \log (1/\mu) 
\label{equation: AMOC thresh condition}
\end{equation}
then with  sufficiently large $n$, on a set with probability at least $1 - C_4 n^{-1}$, it holds that: 
\begin{enumerate}[(i)]
    \item if $K = 0$ then $\hat{L}_{\hat{\eta}} \leq \lambda$, 
    \item if $K = 1$ then $\hat{L}_{\hat{\eta}} > \lambda$ and moreover $\left | \hat{\eta} - \eta \right | \leq C_5 \log (n) \vee C_6 \frac{\log(n)}{\log (1/\mu)}$.
\end{enumerate}
where $C_1,\ldots, C_6$ are absolute constants which depend only on the constants stated in Assumptions ~\ref{assumption: Z is subset of hypercube},~\ref{assumption: production set is free disposal},~\ref{assumption: X and R distribution},~\ref{assumption: production frontier function continuous} and~\ref{assumption: change point spacing}.
\label{theorem: single change detection}
\end{theorem}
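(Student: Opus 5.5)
The plan is to work on a single high-probability event $\mathcal{E}$ on which three things hold simultaneously, and then argue deterministically on $\mathcal{E}$.

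\emph{Frontier accuracy.} The first ingredient is a uniform bound $\|\hat{f}_n - f_{(K+1)}\|_\infty \leq C\psi_{\delta}\log(n)$, where $\psi_\delta$ is computed from the post-change sample. When $K=0$ this is Theorem~\ref{theorem: exponential concentration in infty distance} applied to all $n$ points with $\zeta \asymp \log n$. When $K=1$ we use $\Psi_{(1)}\subset\Psi_{(2)}$, so every $Y_t \le f_{(2)}(\boldsymbol{X}_t)$ and hence $\hat f_n \le f_{(2)}$. The FDH built only from the $n-\eta\ge\delta$ post-change points lower-bounds $\hat f_n$, and Theorem~\ref{theorem: exponential concentration in infty distance} controls it. Assumption~\ref{assumption: change point spacing}(i) makes the sample large enough for the theorem.

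\emph{Ratio accuracy.} Trimming at $\boldsymbol{x}_0$ together with Assumption~\ref{assumption: production frontier function continuous} gives $f_{(2)}(\boldsymbol{X}_t)\ge x_0/L$ for retained points. Therefore $\hat R_t = \widetilde R_t\,(1+\epsilon_t)$ with $|\epsilon_t|\le C L\psi_\delta\log n/x_0$. Assumption~\ref{assumption: change point spacing}(ii) is designed so that $\log(1+|\epsilon_t|)$ is at most a small fraction of $\log(1/\mu)$, and also so that $\delta\,|\epsilon_t| \lesssim \log n$ (check: $\delta\psi_\delta\log n/x_0 \approx \delta^{d/(d+1)}\log^{1+1/(d+1)}n/x_0$). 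That last product bound looks too large, so I expect the analysis must instead control the error through $N(1,\tau,\boldsymbol{x}_0)\log\hat M$ more cleverly.

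\emph{Counts and maxima.} Bernstein's inequality and a union bound over all $\tau$ give $N(1,\tau,\boldsymbol{x}_0)\asymp\tau$ whenever $\tau\ge C\log n$, using $\mathbb{P}(\boldsymbol{X}_t>\boldsymbol{x}_0)\ge 1-C_1$. For post-change maxima, Assumption~\ref{assumption: X and R distribution}(ii) gives $\mathbb{P}\bigl(\max_{t\in I}\widetilde R_t<1-u\bigr)\le (1-C_{1,R}u)^{|I|}$ for any block $I$ of retained indices. Hence, uniformly over blocks adjacent to a fixed index, $-\log\max_{I}\widetilde R_t \lesssim \log n/|I|$.

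\emph{Case $K=0$.} Here $\widetilde R_t$ has support $[0,1]$ for all $t$. Then $\hat L_\tau = -2N\log\hat M \le -2N\log M + 2N|\epsilon|$, and the first term is $O(\log n)$ uniformly in $\tau$. The second term is the delicate one. I would handle it through the one-sided error $\hat f_n\le f_{(1)}$, which gives $\hat R_t\ge\widetilde R_t$ and hence $\hat L_\tau\le -2N\log M_\tau\le C_2\log n\le\lambda$. This one-sidedness is the key observation. It also settles the $K=1$ upper side: $\hat R_t\ge\widetilde R_t$ always holds, so $\hat L$ is bounded above by its oracle counterpart.

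\emph{Case $K=1$, detection.} At $\tau=\eta$ we have $\hat M\le \mu(1+\epsilon)$, so $\hat L_\eta\ge 2N(1,\eta,\boldsymbol{x}_0)\bigl(\log(1/\mu)-|\epsilon|\bigr)\gtrsim \delta\log(1/\mu)>\lambda$. Here $\epsilon$ is absorbed by (ii) once $\delta$ is large, giving $|\epsilon|\le\frac12\log(1/\mu)$.

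\emph{Case $K=1$, localization.} This is the main obstacle. For $\tau=\eta+k$ with $k>0$, $\hat L_\tau\le -2N(1,\tau)\log M(\eta+1,\tau) \lesssim (\eta+k)\log n/k$, using the oracle upper bound. For this to fall below $\hat L_\eta\approx 2c\eta\log(1/\mu)$ we need $k\gtrsim\log n/\log(1/\mu)$. For $\tau=\eta-k$, the statistic is $2N(1,\eta-k)\log(1/\hat M)$ with $\hat M\ge$ the pre-change maximum of $\widetilde R_t$, which stays near $\mu$. This gives $\hat L_{\eta-k}\le \hat L_\eta - 2ck\bigl(\log(1/\mu)-|\epsilon|\bigr)+O(\log n)$, where the $O(\log n)$ reflects how close the pre-change maximum on $[1,\eta-k]$ is to $\mu$. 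So $k\gtrsim\log n/\log(1/\mu)$ suffices. The $C_5\log n$ term covers the small-count regime where the Bernstein bound needs $k\gtrsim\log n$. The crux is the two-sided comparison of $N\log\hat M$ at $\eta$ versus nearby $\tau$ using only one-sided control of $\epsilon$. I expect to bound the lower side at $\tau=\eta$ by $\log(1/\mu)-C\psi_\delta\log n/x_0$, losing a factor of $(1-o(1))$, which still suffices.
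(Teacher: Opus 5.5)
Much of your plan matches the paper. This includes the sandwich between the post-change FDH and $f_{(2)}$, and the one-sided observation $\hat R_t\ge\widetilde R_t$, which is exactly how the paper proves (i). It also includes the detection bound at $\tau=\eta$. One small omission in (i): you also need the small-count regime $N(1,\tau,\boldsymbol{x}_0)\lesssim\log n$. There $(1-C_{1,R}u)^{|I|}$ is useless, and one uses the upper density instead, $\mathbb{P}(M<s)\le(C_{2,R}s)^{N}$; this is the $C_{2,R}$ half of Lemma~\ref{lemma: max bound}.

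Your right-hand localization is correct but organised differently from the paper's. You bound $\hat L_{\eta+k}\le-2N(1,\eta+k,\boldsymbol{x}_0)\log M(\eta+1,\eta+k,\boldsymbol{x}_0)\lesssim(\eta+k)\log(n)/k$ uniformly in $k$, then compare with $\hat L_\eta\gtrsim\eta\log(1/\mu)$. The paper instead splits $N(1,\tau,\boldsymbol{x}_0)=N(1,\eta,\boldsymbol{x}_0)+N(\eta+1,\tau,\boldsymbol{x}_0)$ and disposes of the second piece with Lemma~\ref{lemma: max bound}. It then needs only the single event $M(\eta+1,\eta+\epsilon_n,\boldsymbol{x}_0)\ge\mu^{1/4}$ plus monotonicity of running maxima. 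Either way, the FDH error is paid for by sacrificing half of $N(1,\eta,\boldsymbol{x}_0)\log(1/\mu)$. That is affordable only because the competitor $\hat L_{\eta+k}$ is small.

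The gap is the left-hand side, $\tau=\eta-k$. Your inequality $\hat L_{\eta-k}\le\hat L_\eta-2ck(\log(1/\mu)-|\epsilon|)+O(\log n)$ does not follow from your ingredients. One-sided control gives $\hat L_{\eta-k}\le-2N(1,\eta-k,\boldsymbol{x}_0)\log\max\{\widetilde R_t: t\le\eta-k,\ \boldsymbol{X}_t>\boldsymbol{x}_0\}$, which carries no FDH error. It also gives $\hat L_\eta\ge2N(1,\eta,\boldsymbol{x}_0)(\log(1/\mu)-|\epsilon|)$, which does carry it. Subtracting leaves $2N(1,\eta,\boldsymbol{x}_0)|\epsilon|$, of order $\eta\,\delta^{-1/(d+1)}\log^{3/2}(n)$, not $2ck|\epsilon|$. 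Assumption~\ref{assumption: change point spacing}(ii) caps this only at $\tfrac{1}{2}N(1,\eta,\boldsymbol{x}_0)\log(1/\mu)$, which swamps $k\log(1/\mu)$ unless $k\gtrsim\eta$. This is the $\delta|\epsilon|$ problem you flagged at the start and then dropped.

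The error would cancel only if $\log\hat M(1,\eta,\boldsymbol{x}_0)-\log\hat M(1,\eta-k,\boldsymbol{x}_0)=O(\log(n)/\eta)$. That means the running maximum of the \emph{estimated} pre-change scores would have to be flat near $\eta$, and one-sided control says nothing about this. There is a concrete mechanism against it: $f_{(2)}(\boldsymbol{x})/\hat f_n(\boldsymbol{x})$ oscillates in $\boldsymbol{x}$ on the scale $\psi_\delta$. So a late pre-change point with $R_t$ near $1$, landing where the FDH gap is large, can raise $\hat M(1,\cdot,\boldsymbol{x}_0)$ by much more than $\log(n)/\eta$. That drops $\hat L$ by about $\eta$ times that amount.

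Separately, your claim that the oracle pre-change maximum sits within $O(\log(n)/N)$ of $\mu$ on the log scale needs more than you have. It requires the law of $\widetilde R_t$ to have density bounded below up to $\mu$, for example $f_{(1)}=\mu f_{(2)}$ on a set of positive measure. Scenario~\ref{scenario: AMOC} only gives $f_{(1)}/f_{(2)}\le\mu$.

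You will not find the missing step in the paper either. It writes out only the $\tau>\eta$ comparison and says the other side is analogous. But the half-signal absorption trick gives nothing for $\tau<\eta$, because there $\hat L_{\eta-k}$ is of the same order as $\hat L_\eta$. Closing the left side needs a genuinely new ingredient. One option is two-sided, local control of the increments of $\tau\mapsto\hat M(1,\tau,\boldsymbol{x}_0)$ on $[1,\eta]$. Another is additional structure on $f_{(1)}/f_{(2)}$ and on the oscillation of the FDH gap.
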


With this above convergence rate in mind,  we then investigate the global information theoretic minimax lower bounds on the rate at which the change points in model \eqref{equation: sequence of production sets} can be localized. To that end, we have the following result, which reveals that the rate at which the change point is localized by our procedure is unimprovable up to a logarithmic term on $n$. 

\begin{proposition}
Let $\left \{ \left ( \boldsymbol{X}_t, Y_t \right ) \mid t = 1, \dots, n \right \}$ be a sample from \eqref{equation: data generating process} with production frontier functions satisfying Assumption~\ref{assumption: production frontier function continuous} and stochastic components satisfying Assumptions~\ref{assumption: Z is subset of hypercube},~\ref{assumption: production set is free disposal},~\ref{assumption: X and R distribution}.  and having a single change point at location $\eta$, effective sample size $\delta = \min \left ( \eta, n - \eta \right )$, and change size $\mu$. Let $P_{n, \delta, \mu}$ denote the joint law of the data and consider the class
\begin{equation*}
    \mathcal{Q}_n = \left \{ P_{n, \delta, \mu} \mid \delta < n / 2 \text{ and } \delta \log \left ( 1 / \mu \right ) > \zeta_n \right \}
\end{equation*}
for any sequence $\left \{ \zeta_n \mid n > 0 \right \}$ such that $\lim_{n \rightarrow \infty} \zeta_n = \infty$. Then, for all sufficiently large $n$ , it holds that
\begin{equation*}
\inf_{\hat{\eta}} \sup_{P \in \mathcal{Q}_n} \mathbb{E}_{P} \left [ \left | \hat{\eta} - \eta \right | \right ] \geq \frac{e^{-2}}{2 \log (1/\mu)} 
\end{equation*}
where the infimum is over all measurable functions of the data and $\eta (P)$ denotes the change point location for a given $P \in \mathcal{Q}_n$. 
\label{lemma: global change minimax lower bound}
\end{proposition}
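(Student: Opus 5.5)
Our plan is Le Cam's two-point method. We build two distributions in $\mathcal{Q}_n$ whose change points are separated by roughly $1/\log(1/\mu)$ and whose joint laws are statistically close. Then no estimator can tell them apart with high probability.

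\textbf{Construction.} Fix $d$, the input distribution (say uniform on $\mathcal{X}$), and $R_t \sim \mathrm{Uniform}[0,1]$ independent of $\boldsymbol{X}_t$. Take frontiers $f_{(2)}(\boldsymbol{x}) = c\|\boldsymbol{x}\|_2$ and $f_{(1)} = \mu f_{(2)}$, with $c$ chosen so that $\mu c$ and $c$ both lie in $[L^{-1},L]$. These satisfy Assumptions~\ref{assumption: Z is subset of hypercube}--\ref{assumption: production frontier function continuous}, provided $\mu$ is not too small; this is harmless because the lower bound is only of interest when $\log(1/\mu)$ is bounded or small. Let $P_0$ have its change at $\eta$ and $P_1$ at $\eta + \Delta$, where $\Delta = \lceil 1/\log(1/\mu) \rceil$ (or $\lfloor\cdot\rfloor$ with care). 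Choose $\eta$ with $\delta \approx n/4$, so both laws lie in $\mathcal{Q}_n$ for large $n$ once $\delta\log(1/\mu) > \zeta_n$. Concretely, pick $\mu$ so that $\log(1/\mu)\gg \zeta_n/n$, which is possible for large $n$ because the class allows such $\mu$; the bound is then stated with that $\mu$.

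\textbf{Distance between the laws.} The two laws differ only at times $t \in \{\eta+1,\dots,\eta+\Delta\}$. At these times $P_0$ draws $Y_t = f_{(2)}(\boldsymbol{X}_t)R_t$, so $Y_t$ is uniform on $[0,f_{(2)}(\boldsymbol{X}_t)]$ given $\boldsymbol{X}_t$. Under $P_1$, $Y_t$ is uniform on $[0,\mu f_{(2)}(\boldsymbol{X}_t)]$. Since $P_1 \ll P_0$ on these coordinates, the key quantity is the Kullback--Leibler divergence
\[
\mathrm{KL}(P_1\,\|\,P_0) = \Delta \log(1/\mu).
\]
Each such coordinate contributes $\log(1/\mu)$: the density ratio is $(1/(\mu f))/(1/f) = 1/\mu$ on the support of $P_1$. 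The reverse divergence is infinite, so we use only this direction, together with the Bretagnolle--Huber inequality
\[
\inf_{\psi}\{P_0(\psi\neq 0)+P_1(\psi\neq 1)\} \geq \tfrac12 e^{-\mathrm{KL}(P_1\|P_0)}.
\]
With $\Delta \le 2/\log(1/\mu)$ this gives at least $\tfrac12 e^{-2}$.

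\textbf{Reduction and main obstacle.} The standard reduction runs as follows. Any estimator $\hat\eta$ induces a test $\psi = \mathbf{1}\{|\hat\eta-\eta-\Delta| < |\hat\eta - \eta|\}$. A wrong decision forces $|\hat\eta - \eta_{\text{true}}| \ge \Delta/2$. Hence
\[
\max_j \mathbb{E}_{P_j}|\hat\eta-\eta(P_j)| \ge \tfrac{\Delta}{2}\cdot\tfrac12\bigl[P_0(\psi\ne0)+P_1(\psi\ne1)\bigr] \ge \tfrac{\Delta}{8}e^{-2}.
\]
The main obstacle is matching the stated constant $e^{-2}/(2\log(1/\mu))$. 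To do so, choose $\Delta$ as large as possible while keeping $\mathrm{KL}$ at most about $2$, i.e.\ $\Delta = \lfloor 2/\log(1/\mu)\rfloor$. Then use the sharper reduction via $\max \ge$ average, bounding $P_0(|\hat\eta-\eta|\ge\Delta/2)+P_1(\cdot)$ by the testing error. This should recover $e^{-2}/(2\log(1/\mu))$ up to rounding. When $\log(1/\mu)$ is large, $\Delta$ degenerates to $1$; then the bound follows trivially from $\Delta=1$, with $\mathrm{KL}=\log(1/\mu)$ replaced by a direct total-variation computation, or one notes the claimed bound is below $1/2$ and integer-valued estimators incur error at least that with constant probability. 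Checking membership in $\mathcal{Q}_n$ (via $\delta<n/2$ and $\delta\log(1/\mu)>\zeta_n$) is routine for large $n$.
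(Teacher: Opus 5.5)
Your approach is the paper's: Le Cam's two-point method with $f_{(1)}=\mu f_{(2)}$, the change moved from $\eta$ to $\eta+\Delta$, and the Kullback--Leibler divergence taken in the only finite direction. Your $\Delta\log(1/\mu)$ is the paper's $\mathrm{KL}(Q_{1:n},P_{1:n})=\tau\log(1/\mu)$, followed by Bretagnolle--Huber (the paper's Tsybakov Lemma 2.6) and $\Delta\asymp 1/\log(1/\mu)$.

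The gap is in the last step, which is exactly where the stated constant lives. Both versions of your reduction are the same inequality, $\max_j\mathbb{E}_{P_j}|\hat\eta-\eta(P_j)|\ge\frac{\Delta}{4}(1-\mathrm{TV})$. The ``sharper'' variant via $P_0(|\hat\eta-\eta|\ge\Delta/2)+P_1(\cdot)$ changes nothing. Combined with $1-\mathrm{TV}\ge\frac12 e^{-\mathrm{KL}}$, this gives $\frac{\Delta}{8}e^{-\Delta\log(1/\mu)}\le\frac{1}{8e\log(1/\mu)}$ for every $\Delta$. Since $1/(8e)\approx 0.046<e^{-2}/2\approx 0.068$, no choice of $\Delta$ reaches $\frac{e^{-2}}{2\log(1/\mu)}$. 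Your $\Delta=\lfloor 2/\log(1/\mu)\rfloor$ yields about $\frac{e^{-2}}{4\log(1/\mu)}$, which is a factor $2$ short, not a rounding loss.

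You need to win back a factor $2$, and there are two places to get it.

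(a) Use the pointwise bound $|\hat\eta-\eta|+|\hat\eta-\eta-\Delta|\ge\Delta$ directly. Then $\mathbb{E}_{P_0}|\hat\eta-\eta|+\mathbb{E}_{P_1}|\hat\eta-\eta-\Delta|\ge\Delta\int\min(\mathrm{d}P_0,\mathrm{d}P_1)$, so $\max_j\ge\frac{\Delta}{2}(1-\mathrm{TV})$.

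(b) Compute the affinity exactly. With uniform efficiencies, $\mathrm{d}P_1/\mathrm{d}P_0\equiv\mu^{-\Delta}$ on the support of $P_1$, and that support has $P_0$-mass $\mu^{\Delta}$. Hence $1-\mathrm{TV}=\mu^{\Delta}=e^{-\mathrm{KL}}$, not merely $\ge\frac12 e^{-\mathrm{KL}}$.

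Using both gives $\frac{\Delta}{2}\mu^{\Delta}$, which is exactly the paper's endpoint $\frac{\tau}{2}\exp(-\tau\log(1/\mu))$. The paper writes its chain as $\tau(1-\mathrm{TV})\ge\frac{\tau}{2}e^{-\mathrm{KL}}$; the first step overstates the reduction by $2$, but the endpoint is valid for this reason. Then $\Delta=\lceil 1/\log(1/\mu)\rceil$, with $\Delta\ge 1/\log(1/\mu)$ and $\Delta\log(1/\mu)\le 2$, gives the claim. If you use only one of (a) or (b), take the same $\Delta$ and use $xe^{-x}\ge 2e^{-2}$ for $x=\Delta\log(1/\mu)\in[1,2]$.

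These arguments cover $\log(1/\mu)\le 2$, and the paper's proof tacitly needs the same, since it requires $\tau\log(1/\mu)\le 2$. Your fallback for larger $\log(1/\mu)$ is wrong: estimators need not err with constant probability there. With $\Delta=1$ the affinity is exactly $\mu$, so two-point bounds are only of order $\mu$. Because $\mu\log(1/\mu)\to 0$, they fall below $\frac{e^{-2}}{2\log(1/\mu)}$ once $\log(1/\mu)$ exceeds a moderate constant. State the restriction to bounded $\log(1/\mu)$ explicitly instead.

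Finally, keep $\mu$ as given in the statement, as the paper does, rather than choosing it. Only the location of the second change point is yours to choose.
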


While the best rate at which each production frontier function can be estimated depends in a precise way on the dimension of the input space \cite{korostelev1995efficient, korostelev1995estimation}, we see that interestingly the minimax rate for estimating change points in model \eqref{equation: data generating process} is dimension agnostic, and depends only on the amount by which the frontier shifts after the change. This is because estimating the production frontier function is a genuinely multidimensional estimation problem whereas, following the logic in \eqref{equation: pseudo-efficiency scores}, estimating each change point location boils down to a one-dimensional estimation problem in the presence of an (infinite dimensional) nuisance parameter that can be estimated at the $o_\mathbb{P} (1)$ rate  according to Theorem~\ref{theorem: exponential concentration in infty distance}. 

\subsection{Inference} \label{section: inference on a single change}

In this section we describe the asymptotic distribution of the change point estimator defined via \eqref{equation: single cpt estimator}. Consequently, we outline a simple procedure for constructing a confidence interval for the unobserved change point location. We observe that, compared with existing results on the distribution of change point estimators (e.g., see \cite{dumbgen1991asymptotic}), our results are obtained using only elementary calculations due to the simple structure of the quasi-likelihood ratio statistic used to construct the estimator. 
\begin{theorem}
Let $\hat{\eta}$ be as defined in \eqref{equation: single cpt estimator} and suppose that the assumptions in Theorem~\ref{theorem: single change detection} hold. Assume additionally that $\mu$ does not depend on the sample size $n$. As $n \rightarrow \infty$, it holds that 
\begin{enumerate}[(i)]
    \item $\left ( \hat{\eta} - \eta \right ) $ is asymptotically stochastically dominated by $Z - 1$ where 
    \begin{equation*}
        Z \sim \text{Geometric} \left ( \theta \left ( \boldsymbol{x}_0, \mu, C_{1,X}, C_{1,R} \right ) \right ),
    \end{equation*}
    with
    \begin{equation}
        \theta \left ( \boldsymbol{x}_0, \mu, C_{1,X}, C_{1,R} \right ) = C_{1,R} \left ( 1 - \mu \right ) C_{1,X} \operatorname{mes} \left ( \mathcal{X} \setminus \left \{ \boldsymbol{x} \leq  \boldsymbol{x}_0 \right \} \right ). 
        \label{equation: geom param non-iid}
    \end{equation}
    \item If additionally the random variables $\left ( \boldsymbol{X}_t, R_t \right )_{t=1,\dots,n}$ are identically distributed then $\left ( \hat{\eta} - \eta \right ) $ is asymptotically stochastically dominated by $Z - 1$ where
    \begin{equation*}
        Z \sim \text{Geometric} \left ( \theta \left ( \boldsymbol{x}_0, \mu \right ) \right ),
    \end{equation*}
    with
    \begin{equation}
        \theta \left ( \boldsymbol{x}_0, \mu \right ) = \mathbb{P} \left ( R_1 \boldsymbol{1}_{\left \{ \boldsymbol{X}_1 > \boldsymbol{x}_0 \right \}} \geq \mu \right ).
        \label{equation: geom pama iid}
    \end{equation}
\end{enumerate}
\label{lemma: single change inference}
\end{theorem}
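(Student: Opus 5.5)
The plan is to reduce the event $\{\hat\eta - \eta \geq s\}$ to a waiting-time event for the first post-change observation that is ``clearly efficient''. By independence, that waiting time is geometric (or dominated by a geometric). I work on the event $\mathcal{E}_n$ of Theorem~\ref{theorem: single change detection}, which has probability at least $1 - C_4 n^{-1}$. On $\mathcal{E}_n$ we have $K=1$ detected and $|\hat\eta - \eta| \leq C\log(n)$, using that $\mu$ is fixed. I also intersect with the event of Theorem~\ref{theorem: exponential concentration in infty distance}, applied to the $\geq \delta$ post-change observations, on which $\|\hat f_n - f_{(2)}\|_\infty \leq \psi_\delta \log n \to 0$.

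Two structural facts are used throughout. First, since $\Psi_{(1)} \subset \Psi_{(2)}$, every observation lies in $\Psi_{(2)}$, so $\hat f_n \leq f_{(2)}$ pointwise. Hence $\hat R_t \geq R_t$ for every post-change $t$ with $\boldsymbol{X}_t > \boldsymbol{x}_0$. Second, for $t \leq \eta$ and $\boldsymbol{X}_t > \boldsymbol{x}_0$ we have $\hat R_t \leq \mu f_{(2)}(\boldsymbol{X}_t)/\hat f_n(\boldsymbol{X}_t)$. By the bilipschitz Assumption~\ref{assumption: production frontier function continuous}, $f_{(2)}$ is bounded below by $x_0/L$ on $\{\boldsymbol{x} > \boldsymbol{x}_0\}$. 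So the uniform consistency gives $M_\eta := \hat M(1,\eta,\boldsymbol{x}_0) \leq \mu(1+\epsilon_n)$ with $\epsilon_n \to 0$.

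Next I analyse the path $\tau \mapsto \hat L_\tau$ for $\tau > \eta$. Let $T = \min\{s \geq 1 : \boldsymbol{X}_{\eta+s} > \boldsymbol{x}_0,\ \hat R_{\eta+s} > M_\eta\}$. For $\eta \leq \tau < \eta + T$, the running maximum is frozen at $M_\eta$ while $N(1,\tau,\boldsymbol{x}_0)$ is nondecreasing, so $\hat L_\tau$ is nondecreasing there. The key claim is that, with probability tending to one, $\hat L_\tau < \hat L_{\eta+T-1}$ for all $\tau \geq \eta + T$. Together with the localization from Theorem~\ref{theorem: single change detection}, this gives $\hat\eta - \eta \leq T - 1$ with probability $1 - o(1)$; no lower bound on $\hat\eta - \eta$ is needed, since the domination is one-sided.

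To prove the claim, only $\tau \in [\eta+T, \eta + C\log n]$ needs to be considered. On this range $N(1,\tau,\boldsymbol{x}_0)/N(1,\eta,\boldsymbol{x}_0) \leq 1 + O(\log n/\delta) \to 1$, because $N(1,\eta,\boldsymbol{x}_0) \gtrsim \delta$ by a Chernoff bound and the choice of $\boldsymbol{x}_0$. Meanwhile $\log M$ jumps from $\log M_\eta$ to $\log \hat R_{\eta+T}$. Since $M_\eta \to \mu$ and $\hat R_{\eta+T}$ has a conditional law with density on $(M_\eta, 1]$ by Assumption~\ref{assumption: X and R distribution}(ii), the ratio $\log \hat R_{\eta+T}/\log M_\eta$ is bounded away from $1$ except on an event of vanishing probability. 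Concretely, for every $\epsilon>0$ I would show $\mathbb{P}(\hat R_{\eta+T} \leq M_\eta(1+\epsilon)) \leq C\epsilon$, then let $\epsilon \to 0$ slowly.

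I expect this step to be the main obstacle. It requires controlling near-ties in the running maximum while $\hat f_n$, and hence all the $\hat R_t$, depend on the whole sample. I would handle this by conditioning on the high-probability frontier event and sandwiching $\hat R_t$ between $R_t$ and $R_t(1+\epsilon_n)$ for post-change $t$, using the lower bound $f_{(2)} \geq x_0/L$ on the trimmed region.

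Finally I bound the law of $T$. Since $\hat R_{\eta+s} \geq R_{\eta+s}$ and $M_\eta \leq \mu(1+\epsilon_n)$, we get $T \leq T' := \min\{s : \boldsymbol{X}_{\eta+s} > \boldsymbol{x}_0,\ R_{\eta+s} > \mu(1+\epsilon_n)\}$, which is a function of the independent pairs $(\boldsymbol{X}_{\eta+s}, R_{\eta+s})$ alone. Each trial succeeds with probability at least
\[
\int_{\{\boldsymbol{x} > \boldsymbol{x}_0\}} \int_{\mu(1+\epsilon_n)}^1 C_{1,R}\, C_{1,X} \, dr \, d\boldsymbol{x} \;\to\; \theta\left(\boldsymbol{x}_0,\mu,C_{1,X},C_{1,R}\right),
\]
by Assumption~\ref{assumption: X and R distribution}. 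Independent trials with success probabilities uniformly at least $\theta - o(1)$ give $\mathbb{P}(T' > s) \leq (1-\theta+o(1))^s$, which yields part~(i).

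Under identical distribution the trials are i.i.d. with success probability $\mathbb{P}(R_1\boldsymbol{1}_{\{\boldsymbol{X}_1 > \boldsymbol{x}_0\}} > \mu(1+\epsilon_n))$. This tends to $\theta(\boldsymbol{x}_0,\mu)$ by continuity of the distribution, which gives part~(ii).
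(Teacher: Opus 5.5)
Your proposal is correct in outline and takes a genuinely different route from the paper.

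\textbf{The paper's route.} It first proves $\mathbb{P}(\hat\eta<\eta)\to0$, splitting into $\tau<\nu$ and $\nu\le\tau<\eta$. It then rewrites $\{\hat\eta-\eta\ge k\}$ as a family of comparisons between the post-change maxima $\hat M(\tau+1,\eta+k,\boldsymbol{x}_0)$ and $\exp\{(N(1,\tau,\boldsymbol{x}_0)/N(1,\eta+k,\boldsymbol{x}_0))\log\hat M(1,\tau,\boldsymbol{x}_0)\}$. For each fixed $k$ it passes to the limit using $\hat M(1,\eta,\boldsymbol{x}_0)\to\rho$ in probability, $N$-ratios tending to one, $\hat M\to M$ in distribution on fixed post-change windows, and Slutsky/continuous mapping. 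The limit is finally bounded by $\mathbb{P}(M(\eta+1,\eta+k,\boldsymbol{x}_0)<\mu)=\prod_t\mathbb{P}(R_t\mathbf{1}_{\{\boldsymbol{X}_t>\boldsymbol{x}_0\}}<\mu)$.

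\textbf{Your route.} You argue pathwise: $\hat\eta-\eta\le T-1$ for the first post-change record time $T$. You then couple $T\le T'$, where $T'$ is a waiting time depending only on the independent post-change pairs and a deterministic threshold. This avoids any distributional limit for the data-dependent $\hat M$'s. The dependence of $\hat f_n$ on the whole sample is absorbed by $R_t\le\hat R_t\le R_t(1+\epsilon_n)$ and $M_\eta\le\mu(1+\epsilon_n)$. The geometric tail then appears as an exact product of independent failure probabilities, with trackable finite-$n$ error terms. As you note, one-sided domination needs no analogue of the paper's preliminary step $\mathbb{P}(\hat\eta<\eta)\to0$. The paper's route is shorter and additionally yields $\hat\eta\ge\eta$ with probability tending to one, but its limit passage is less explicit than yours.

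\textbf{Three remarks.}

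First, the near-tie step you flag is real for your definition of $T$. It can be handled by sandwiching $M_\eta$ between a pre-change-only maximum and that maximum times $1+\epsilon_n$, conditioning on the pre-change data, and using $C_{2,R}$. It can also be avoided altogether. Define $T$ with threshold $M_\eta(1+\epsilon_n)$ instead of $M_\eta$, with $\epsilon_n\to0$ also chosen so that $\epsilon_n\delta/\log n\to\infty$. Then for $\eta+T\le\tau\le\eta+C\log n$ you get directly $\hat L_\tau<-2N(1,\tau,\boldsymbol{x}_0)\log[M_\eta(1+\epsilon_n)]\le\hat L_\eta$ for large $n$. This holds because $(N(1,\tau,\boldsymbol{x}_0)-N(1,\eta,\boldsymbol{x}_0))|\log M_\eta|=O(\log n)$, while $N(1,\eta,\boldsymbol{x}_0)\log(1+\epsilon_n)\gtrsim\delta\epsilon_n$. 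No monotonicity or anti-concentration argument is needed, and $T$ is still dominated by the waiting time for $\{\boldsymbol{X}_{\eta+s}>\boldsymbol{x}_0,\ R_{\eta+s}>\mu(1+\epsilon_n)^2\}$.

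Second, $M_\eta$ converges to $\rho=\sup_{\boldsymbol{x}>\boldsymbol{x}_0}f_{(1)}(\boldsymbol{x})/f_{(2)}(\boldsymbol{x})\le\mu$, not to $\mu$. You only need $M_\eta$ bounded away from $0$ and $1$. That holds because $f_{(1)}/f_{(2)}\ge L^{-2}$ by Assumption~\ref{assumption: production frontier function continuous}, and $M_\eta\le\mu(1+\epsilon_n)$.

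Third, your success-probability integral over $\{\boldsymbol{x}>\boldsymbol{x}_0\}$ yields $C_{1,R}(1-\mu)C_{1,X}\operatorname{mes}(\{\boldsymbol{x}>\boldsymbol{x}_0\})$. This coincides with the stated $\theta$ only for $d=1$; for $d\ge2$ the set $\mathcal{X}\setminus\{\boldsymbol{x}\le\boldsymbol{x}_0\}$ is strictly larger. The paper's proof makes the same identification when it lower-bounds $\mathbb{P}(\boldsymbol{X}_t>\boldsymbol{x}_0)$. So this is an issue with part (i) as stated rather than with your argument specifically.
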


Theorem~\ref{lemma: single change inference} implies that one can construct an asymptotically conservative confidence interval for the change point location by consistently estimating \eqref{equation: geom param non-iid} or  \eqref{equation: geom pama iid}. We now outline how these quantities can be estimated. Let $\hat{f}_{(1),n} \left ( \cdot \right )$ be the FDH estimator constructed using data indexed on $\left \{ 1, \dots, \hat{\eta} \right \}$ and $\hat{f}_{(2),n} \left ( \cdot \right )$ be the estimator constructed using data indexed on $\left \{ \hat{\eta} + 1, \dots, n \right \}$. The parameter $\mu$ defined in Scenario~\ref{scenario: AMOC} can be estimated via
\begin{equation}
    \hat{\mu} = \sup_{\boldsymbol{x} > \boldsymbol{x}_0} \frac{\hat{f}_{(1),n} \left ( \boldsymbol{x} \right )}{ \hat{f}_{(2),n} \left ( \boldsymbol{x} \right )}. 
    \label{equation: jump size estimators}
\end{equation}
Similarly, the unobserved efficiency scores can be estimated via
\begin{equation*}
    \hat{R}_t = \frac{Y_t}{\hat{f}_{(1),n} \left ( \boldsymbol{X}_t \right ) \mathbf{1}_{\left \{ 1 \leq t \leq \hat{\eta} \right \}} + \hat{f}_{(2),n} \left ( \boldsymbol{X}_t \right ) \mathbf{1}_{\left \{  \hat{\eta} + 1 \leq t \leq n \right \}}}, \hspace{2em} t = 1, \dots, n. 
\end{equation*}
Here the quantities $C_{1,X}$ and $C_{1,R}$ can be estimated with further assumptions, such as the distributions of $(\boldsymbol{X}_t, R_t)$ vary slowly. Note that without any such assumptions, it would be impossible to estimate $C_{1,X}$ and $C_{1,R}$, because we could only observe one data from $(\boldsymbol{X}_t, R_t)$ for each $t$. Now under the slow varying distribution assumption, we can estimate these quantities using the minimum of histogram density estimators for inputs and the estimated efficiency scores. That is, let $h_n$ and $h_{1,n}, \dots, h_{d,n}$ be bandwidths, let $H_n$ be the window size (e.g. taken as $n/\log(n)$) and let $\mathcal{P} \left ( \mathcal{X}; h_{1,n}, \dots,h_{d,n} \right )$ denote the partition of $\mathcal{X}$ into non-overlapping hyper-cubes with sides of length at most $h_{1,n}, \dots, h_{d,n}$. Then the histogram based estimators for $C_{1,R}$ and $C_{1,X}$ are given by
\begin{align}
& \hat{C}_{1,R} = \min_{1 \le i \le n-H_n+1}\min_{1 \leq j \leq h^{-1}_n} \frac{1}{H_n} \sum_{t=i}^{i+H_n-1} h_n^{-1} \boldsymbol{1}_{\left \{ (j-1) h_n < \hat{R}_t \leq j h_n \text{ and } \boldsymbol{X}_t > \boldsymbol{x}_0 \right \}} \\
& \hat{C}_{1,X} =  \min_{1 \le i \le n-H_n+1} \min_{\mathcal{X}' \in \mathcal{P} \left ( \mathcal{X}; h_{1,n}, \dots,h_{d,n} \right )} \frac{1}{H_n} \sum_{t=i}^{i+H_n-1} \prod_{j=1}^d h_{j,n}^{-1} \boldsymbol{1}_{\left \{ \boldsymbol{X}_t \in \mathcal{X}' \right \}}.
\label{equation: histogram estimators}
\end{align}
Data driven choices for the bandwidths can be found in \cite{devroye2004bin} and \cite{wand1997data}. 
Then one may estimate \eqref{equation: geom param non-iid} via the plug-in estimator $\theta ( \boldsymbol{x}_0, \hat{\mu}, \hat{C}_{1,X}, \hat{C}_{1,R})$. Finally, in the simpler setting where the inputs and efficiency scores are identically distributed, one can estimate the parameter of interest \eqref{equation: geom pama iid} much more conveniently via 
\begin{equation*}
\hat{\theta} \left ( \boldsymbol{x}_0, \hat{\mu} \right ) = \frac{1}{n} \sum_{t=1}^n \mathbf{1}_{\left \{\hat{R}_t \geq \hat{\mu} \text{ and } \boldsymbol{X}_t > \boldsymbol{x}_0 \right \}}. 
\end{equation*}

\section{Detecting multiple changes} \label{section: multiple change points}

\subsection{Methodology}

Next, we study the problem in which multiple change points occur in the data, where after each change,  the frontier shifts upwards uniformly, although not necessarily by the same ratio, over all regions of the input space: 
\begin{scenario}{M}\label{scenario: multiple changes}
The number of change points in the data satisfies $K \in \mathbb{N}_{<n}$, and for each $k \in \left \{ 1, \dots, K \right \}$ there is a $\mu_k \in (0,1)$ such that for all $\boldsymbol{x} \in \mathcal{X}$ it holds that $f_{(k)} (\boldsymbol{x}) / f_{(k+1)} (\boldsymbol{x}) \leq \mu_k$. 
\end{scenario}

We now outline our procedure for multiple change point detection. Following the reasoning in Section~\ref{section: FDH estimator}, under Scenario~\ref{scenario: multiple changes} the production frontier function corresponding to the rightmost stationary segment, that is $ t \in \left \{ \eta_{K} + 1, \dots, \eta_{K+1} \right \}$, can be consistently estimated by applying the FDH estimator to the entire data sequence. Consequently, forming the sequence of pseudo-efficiency scores as in \eqref{equation: estimated scores} the rightmost change point can be estimated using a localized version of the detection statistic defined in \eqref{equation: quasi-LR stat}. That is, with integers $t_1,t_2$ such that $1 \leq t_1 < t_2 \leq n$ for any $\tau \in \{ t_1, \dots, t_2 \}$ we define the localized analogue of \eqref{equation: quasi-LR stat} as follows: 
\begin{equation}
\hat{L}_{t_1, \tau, t_2} = - 2 N \left ( t_1 , \tau , \boldsymbol{x}_0 \right ) \log \left [ \hat{M} \left ( t_1, \tau, \boldsymbol{x}_0 \right ) \right ]. 
\label{equation: localized quasi-LR stat}
\end{equation}
To estimate the rightmost change point, we take the arg-max of \eqref{equation: localized quasi-LR stat} over left-expanding intervals, and estimate the change point location as the arg-max over the narrowest interval on which the maximum of the test statistic exceeds a threshold $\lambda$ to be defined precisely later on; Figure \ref{figure: left-expanding intervals} provides a visual illustration. The intuition is that, up to and including the first interval on which a statistic exceeds the chosen threshold, every such interval will contain either one or no change points. Similar approaches to change point detection have been proposed in \cite{anastasiou2022detecting} and in \cite{fang2016segmentation}. Specifically we take the arg-max of \eqref{equation: localized quasi-LR stat} over the sequence of intervals $\left \{ n - j, \dots, n \right \}$ with $j = 1, \dots, n - 1$. Therefore letting 
\begin{equation}
j^* = \min \left \{ 1 \leq j \leq n - 1 \mid \max_{n-j \leq \tau \leq n} \hat{L}_{n-j,\tau,n} > \lambda \right \}
\label{equation: smallest left interval}
\end{equation}
be the index of the shortest left expanding interval on which a change is detected, the right most-change point is estimated via 
\begin{equation}
\tilde{\eta} =  \argmax_{n-j^* \leq \tau \leq n} \hat{L}_{n-j^*,\tau,n}
\label{equation: pilot change point estimates}
\end{equation}
Here if \eqref{equation: smallest left interval} is undefined we conclude there are no change points in the data.

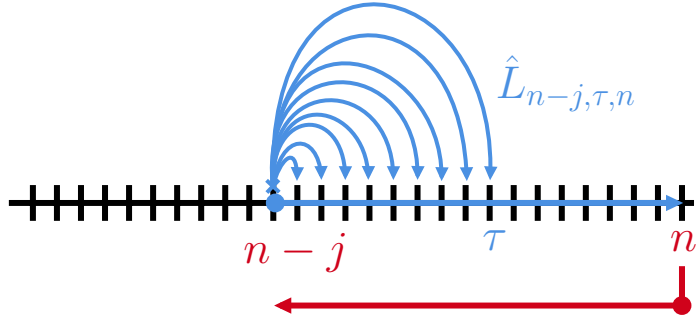
\begin{figure}
\centering
\caption{Illustration of the left-expanding intervals used in Algorithm \ref{algorithm: multiple change detection}. Detecting the right-most change point, at the $j$-th iteration, if an $\hat{L}_{n-j,\tau, n}$ with $\tau \in \{ n-j, \dots, n\}$ exceeds the threshold $\lambda$, then $\tilde{\eta}$ is estimated as the arg-max over the $\tau$'s. Else, $j$ is incremented by $1$ and the local statistics \eqref{equation: localized quasi-LR stat} are again computed for all $\tau$ between $n-j$ and $n$.}

\tikzset{every picture/.style={line width=0.75pt}} 

\begin{tikzpicture}[x=0.75pt,y=0.75pt,yscale=-0.6,xscale=0.6]

\draw [line width=2.25]    (24.33,219) -- (594.33,219) (44.33,204) -- (44.33,234)(64.33,204) -- (64.33,234)(84.33,204) -- (84.33,234)(104.33,204) -- (104.33,234)(124.33,204) -- (124.33,234)(144.33,204) -- (144.33,234)(164.33,204) -- (164.33,234)(184.33,204) -- (184.33,234)(204.33,204) -- (204.33,234)(224.33,204) -- (224.33,234)(244.33,204) -- (244.33,234)(264.33,204) -- (264.33,234)(284.33,204) -- (284.33,234)(304.33,204) -- (304.33,234)(324.33,204) -- (324.33,234)(344.33,204) -- (344.33,234)(364.33,204) -- (364.33,234)(384.33,204) -- (384.33,234)(404.33,204) -- (404.33,234)(424.33,204) -- (424.33,234)(444.33,204) -- (444.33,234)(464.33,204) -- (464.33,234)(484.33,204) -- (484.33,234)(504.33,204) -- (504.33,234)(524.33,204) -- (524.33,234)(544.33,204) -- (544.33,234)(564.33,204) -- (564.33,234)(584.33,204) -- (584.33,234) ;
\draw [color={rgb, 255:red, 208; green, 2; blue, 27 }  ,draw opacity=1 ][line width=2.25]    (250,305) -- (584.33,305) ;
\draw [shift={(584.33,305)}, rotate = 0] [color={rgb, 255:red, 208; green, 2; blue, 27 }  ,draw opacity=1 ][fill={rgb, 255:red, 208; green, 2; blue, 27 }  ,fill opacity=1 ][line width=2.25]      (0, 0) circle [x radius= 5.36, y radius= 5.36]   ;
\draw [shift={(245,305)}, rotate = 0] [fill={rgb, 255:red, 208; green, 2; blue, 27 }  ,fill opacity=1 ][line width=0.08]  [draw opacity=0] (14.29,-6.86) -- (0,0) -- (14.29,6.86) -- cycle    ;
\draw [color={rgb, 255:red, 208; green, 2; blue, 27 }  ,draw opacity=1 ][line width=2.25]    (584.33,305) -- (584.33,272) ;
\draw [color={rgb, 255:red, 74; green, 144; blue, 226 }  ,draw opacity=1 ][fill={rgb, 255:red, 208; green, 2; blue, 27 }  ,fill opacity=1 ][line width=2.25]    (246,219) -- (580.33,219) ;
\draw [shift={(585.33,219)}, rotate = 180] [fill={rgb, 255:red, 74; green, 144; blue, 226 }  ,fill opacity=1 ][line width=0.08]  [draw opacity=0] (14.29,-6.86) -- (0,0) -- (14.29,6.86) -- cycle    ;
\draw [shift={(246,219)}, rotate = 0] [color={rgb, 255:red, 74; green, 144; blue, 226 }  ,draw opacity=1 ][fill={rgb, 255:red, 74; green, 144; blue, 226 }  ,fill opacity=1 ][line width=2.25]      (0, 0) circle [x radius= 5.36, y radius= 5.36]   ;
\draw [color={rgb, 255:red, 74; green, 144; blue, 226 }  ,draw opacity=1 ][line width=1.5]    (244,205) .. controls (248.99,183.69) and (263.75,167.88) .. (264.04,196.73) ;
\draw [shift={(264,200.57)}, rotate = 271.68] [fill={rgb, 255:red, 74; green, 144; blue, 226 }  ,fill opacity=1 ][line width=0.08]  [draw opacity=0] (11.61,-5.58) -- (0,0) -- (11.61,5.58) -- cycle    ;
\draw [color={rgb, 255:red, 74; green, 144; blue, 226 }  ,draw opacity=1 ][line width=1.5]    (244,205) .. controls (250.01,158.25) and (281.37,159.54) .. (284.16,196.17) ;
\draw [shift={(284.33,199.67)}, rotate = 268.57] [fill={rgb, 255:red, 74; green, 144; blue, 226 }  ,fill opacity=1 ][line width=0.08]  [draw opacity=0] (11.61,-5.58) -- (0,0) -- (11.61,5.58) -- cycle    ;
\draw [color={rgb, 255:red, 74; green, 144; blue, 226 }  ,draw opacity=1 ][line width=1.5]    (244,205) .. controls (241.26,139.14) and (302.47,136.65) .. (304.17,196.08) ;
\draw [shift={(304.2,199.8)}, rotate = 270.72] [fill={rgb, 255:red, 74; green, 144; blue, 226 }  ,fill opacity=1 ][line width=0.08]  [draw opacity=0] (11.61,-5.58) -- (0,0) -- (11.61,5.58) -- cycle    ;
\draw [color={rgb, 255:red, 74; green, 144; blue, 226 }  ,draw opacity=1 ][line width=1.5]    (244,205) .. controls (236.32,109.26) and (341.76,111.48) .. (344.16,195.89) ;
\draw [shift={(344.2,199.8)}, rotate = 270.52] [fill={rgb, 255:red, 74; green, 144; blue, 226 }  ,fill opacity=1 ][line width=0.08]  [draw opacity=0] (11.61,-5.58) -- (0,0) -- (11.61,5.58) -- cycle    ;
\draw [color={rgb, 255:red, 74; green, 144; blue, 226 }  ,draw opacity=1 ][line width=1.5]    (244,205) .. controls (231.13,88.75) and (364.62,91.6) .. (364.38,196.46) ;
\draw [shift={(364.33,199.67)}, rotate = 271.59] [fill={rgb, 255:red, 74; green, 144; blue, 226 }  ,fill opacity=1 ][line width=0.08]  [draw opacity=0] (11.61,-5.58) -- (0,0) -- (11.61,5.58) -- cycle    ;
\draw [color={rgb, 255:red, 74; green, 144; blue, 226 }  ,draw opacity=1 ][line width=1.5]    (244,205) .. controls (243.35,117.45) and (320.93,135.65) .. (323.28,195.92) ;
\draw [shift={(323.33,199.67)}, rotate = 270.72] [fill={rgb, 255:red, 74; green, 144; blue, 226 }  ,fill opacity=1 ][line width=0.08]  [draw opacity=0] (11.61,-5.58) -- (0,0) -- (11.61,5.58) -- cycle    ;
\draw [color={rgb, 255:red, 74; green, 144; blue, 226 }  ,draw opacity=1 ][line width=1.5]    (244,205) .. controls (236.41,47.26) and (384.33,91.75) .. (384.38,197.45) ;
\draw [shift={(384.33,200.67)}, rotate = 271.59] [fill={rgb, 255:red, 74; green, 144; blue, 226 }  ,fill opacity=1 ][line width=0.08]  [draw opacity=0] (11.61,-5.58) -- (0,0) -- (11.61,5.58) -- cycle    ;
\draw [color={rgb, 255:red, 74; green, 144; blue, 226 }  ,draw opacity=1 ][fill={rgb, 255:red, 0; green, 0; blue, 0 }  ,fill opacity=0 ][line width=1.5]    (244,205) .. controls (243.77,200.2) and (243.69,195.54) .. (243.77,191) .. controls (243.77,190.94) and (243.77,190.88) .. (243.77,190.82) .. controls (243.78,190.1) and (243.8,189.39) .. (243.82,188.68) .. controls (243.86,187.2) and (243.92,185.73) .. (243.99,184.27) .. controls (244.11,181.9) and (244.26,179.57) .. (244.46,177.28) .. controls (244.58,175.88) and (244.72,174.49) .. (244.87,173.12) .. controls (258.72,47.11) and (395.24,41.14) .. (404.87,190.09) .. controls (404.94,191.1) and (405,192.11) .. (405.05,193.13) .. controls (405.11,194.3) and (405.16,195.49) .. (405.21,196.68) ;
\draw [shift={(405.33,200.67)}, rotate = 268.67] [fill={rgb, 255:red, 74; green, 144; blue, 226 }  ,fill opacity=1 ][line width=0.08]  [draw opacity=0] (11.61,-5.58) -- (0,0) -- (11.61,5.58) -- cycle    ;
\draw [color={rgb, 255:red, 74; green, 144; blue, 226 }  ,draw opacity=1 ][line width=1.5]    (244,205) .. controls (237.64,-15.25) and (425.31,24.18) .. (424.96,189.14) .. controls (424.95,191.42) and (424.91,193.73) .. (424.83,196.05) ;
\draw [shift={(424.67,200)}, rotate = 272.91] [fill={rgb, 255:red, 74; green, 144; blue, 226 }  ,fill opacity=1 ][line width=0.08]  [draw opacity=0] (11.61,-5.58) -- (0,0) -- (11.61,5.58) -- cycle    ;
\draw [shift={(244,205)}, rotate = 313.35] [color={rgb, 255:red, 74; green, 144; blue, 226 }  ,draw opacity=1 ][line width=1.5]    (-7.27,0) -- (7.27,0)(0,7.27) -- (0,-7.27)   ;

\draw (572,240.4) node [anchor=north west][inner sep=0.75pt]  [font=\LARGE,color={rgb, 255:red, 208; green, 2; blue, 27 }  ,opacity=1 ]  {$n$};
\draw (217,241.4) node [anchor=north west][inner sep=0.75pt]  [font=\LARGE,color={rgb, 255:red, 208; green, 2; blue, 27 }  ,opacity=1 ]  {$n-j$};
\draw (416,239.4) node [anchor=north west][inner sep=0.75pt]  [font=\LARGE,color={rgb, 255:red, 74; green, 144; blue, 226 }  ,opacity=1 ]  {$\tau $};
\draw (427,92.4) node [anchor=north west][inner sep=0.75pt]  [font=\LARGE,color={rgb, 255:red, 74; green, 144; blue, 226 }  ,opacity=1 ]  {$\hat{L}_{n-j,\tau ,n}$};
\end{tikzpicture}
\label{figure: left-expanding intervals}
\end{figure}

However, to estimate the next right-most change point we cannot simply repeat the procedure for the remaining data indexed on $\left \{ 1, \dots, \tilde{\eta} - 1 \right \}$. This is because we may have that $\tilde{\eta} > \eta_K + 1$, in which case the data indexed on $\left \{ 1, \dots, \tilde{\eta} - 1 \right \}$ will contain at least one observation from the production set $\Psi_{(K+1)}$, and consequently the FDH estimator based on data indexed on this set is not guaranteed to estimate $f_{(K)} \left ( \cdot \right )$ consistently. To overcome this problem we put 
\begin{equation}
m = n - j^*
\label{equation: m's}
\end{equation}
and instead repeat the procedure on $\left \{ 1, \dots, m \right \}$. As can be seen from the proof of Theorem~\ref{theorem: multiple change detection}, on a high probability set we will have that $m < \eta_{K}$ and so the FDH estimator computed with data indexed on the aforementioned set is once again consistent for $f_{(K)} \left ( \cdot \right )$. The steps described are therefore repeated sequentially until no more change points can be detected in the data. 

Finally, in an attempt to improve finite sample performance, we apply a local refitting step to each of the change point locations recovered by the procedure described above. More precisely, let $\tilde{\Theta} = \left \{ \tilde{\eta}_1, \dots, \tilde{\eta}_{\hat{K}} \right \}$ be the (ordered) set of estimated change point locations recovered as in \eqref{equation: pilot change point estimates} and let $\tilde{\mathcal{M}} = \left \{ m_1, \dots, m_{\hat{K}} \right \}$ be the (likewise ordered) sequence of $m$'s recovered as in \eqref{equation: m's}. Putting $\hat{\eta}_0 = m_1 = 1$ and $\hat{\eta}_{\hat{K}+1} = n$, as well as
\begin{align}
\tilde{s}_k = \frac{\tilde{\eta}_{k-1} + m_k}{2} \hspace{1em} \text{and} \hspace{1em} \tilde{e}_k = m_{k+1}, \hspace{2em}  \mbox{ for }  k = 1, \dots, \hat{K}.
\label{equation: s e refit points}
\end{align}
Then each estimated change point location is locally refitted as
\begin{equation}
\hat{\eta}_k =  \argmax_{\tilde{s}_k \leq \tau \leq \tilde{e}_k} \hat{L}_{\tilde{s}_k,\tau,\tilde{e}_k}, \hspace{1em} k = 1, \dots, \hat{K}, 
\label{equation: local refitting}
\end{equation}
where, however, at each $k$, the estimated quasi-efficiency scores are calculated as in \eqref{equation: estimated scores} using the estimated production frontier function which was used to detect the corresponding $\tilde{\eta}_k$. This is suppressed for economy of notation in the definition of \eqref{equation: local refitting}. Overall, the change point detection procedure is summarized in Algorithm~\ref{algorithm: multiple change detection}. 

\begin{algorithm}[p]
\DontPrintSemicolon
\SetKwInOut{Input}{Input}
\SetKwInOut{Output}{Output}
\Input{the observed data sequence $\left ( \boldsymbol{X}_1, Y_1 \right ), \dots, \left ( \boldsymbol{X}_n, Y_n \right )$, a threshold $\lambda$ against which to compare local quasi-Likelihood ratio statistics, and a minimum value for the norm of the inputs $\boldsymbol{x}_0$.}
\Output{a set of estimated change points $\hat{\Theta} \subset \left \{ 1, \dots, n \right \}$.}
\BlankLine
\Begin{
$\tilde{\Theta} \leftarrow \emptyset$; $\hat{M} \leftarrow \emptyset$; $\hat{\mathcal{F}} \leftarrow \emptyset$ \\
$\hat{K} \leftarrow 0$; $m \leftarrow n$; $\text{test} \leftarrow \texttt{True}$ \\
\While{{\upshape test}}
{
$\text{detection} \leftarrow \texttt{False}$ \\
Estimate $\hat{f}_n \left ( \cdot \right )$ from $\left \{ \left ( \boldsymbol{X}_t, Y_t \right ) \mid t = 1, \dots, m \right \}$ via \eqref{equation: FDH production frontier function estimator} \\
Obtain $\hat{R}_1, \dots, \hat{R}_m$ as in \eqref{equation: estimated scores} \\
\For{$j \in \left \{ 1, \dots, m-1 \right \}$}{
$L^* \leftarrow \max \left \{ \hat{L}_{m-j, \tau , m} \mid \tau = m-j, \dots, m \right \}$ \\
\If{$L^* > \lambda$}{
$\tilde{\eta} \leftarrow  \argmax_{m-j \leq \tau \leq m} \hat{L}_{m-j,\tau,m}$ \\
$m \leftarrow m - j$ \\
$\hat{M} \leftarrow \left \{ m \right \} \cup \hat{M}$ \\
$\tilde{\Theta} \leftarrow \left \{ \tilde{\eta} \right \} \cup \tilde{\Theta}$ \\
$\hat{K} \leftarrow \hat{K} + 1 $ \\
$\hat{\mathcal{F}} \leftarrow \left \{ \hat{f}_m \left ( \cdot \right ) \right \} \cup \hat{\mathcal{F}}$ \\
$\text{detection} \leftarrow \texttt{True}$ \\
BREAK \\
}
}
\If{{\upshape $m < 1$ or $\neg \text{detection}$}}{test $\leftarrow$ \texttt{False}}
}
$\hat{\Theta} \leftarrow \emptyset$ \\
\If{$\hat{K} = 0$}{STOP}
\For{$k = 1, \dots, \hat{K}$}
{
Obtain $\hat{R}_1, \dots \hat{R}_n$ as in \eqref{equation: estimated scores} using $\hat{f}_k \left ( \cdot \right )$ from $\hat{\mathcal{F}}$ \\
Assign $\tilde{s}_k$ and $\tilde{e}_k$ as in \eqref{equation: s e refit points} \\
$\hat{\eta}_k \leftarrow \argmax_{\tilde{s}_k  \leq \tau \leq \tilde{e}_k} \hat{L}_{\tilde{s}_k, \tau, \tilde{e}_k}$ \\
$\hat{\Theta} \leftarrow \hat{\Theta} \cup \left \{ \hat{\eta}_k \right \}$
}
}
\caption{An algorithm for multiple change point detection in model \eqref{equation: data generating process}.}
\label{algorithm: multiple change detection}
\end{algorithm}

\subsection{Theory}

In this section we investigate the consistency and localization rate of Algorithm~\ref{algorithm: multiple change detection}. In order to prove consistency of Algorithm~\ref{algorithm: multiple change detection} we need to impose the following assumption, which generalizes Assumption~\ref{assumption: change point spacing} to the multiple change point setting. 

\begin{assumption}
For each $k = 1, \dots, K$, write $\delta_k = \min \left ( \eta_{k} - \eta_{k-1}, \eta_{k+1} - \eta_k \right )$. Moreover put $\underline{\delta} = \min_{k=1,\dots,K} \delta_k$. Letting $L$ be as in Assumption~\ref{assumption: production frontier function continuous}. It holds that
\begin{enumerate}[(i)]
    \item there is a constant $C_{x,d} > 0$ for which $\underline{\delta} \geq C_{x,d} \log (n)$, and 
    \item $\delta_k^{\frac{1}{d+1}} \log (1/\mu_k) \geq \frac{4 L}{x_0} \log^{3/2} (n)$ for each $k = 1, \dots, K$. 
\end{enumerate}
 
\label{assumption: multiple change point spacing}
\end{assumption}

Similar to Assumption~\ref{assumption: change point spacing}, both (i) and (ii) in Assumption~\ref{assumption: multiple change point spacing} are satisfied with sufficiently large $n$, if $\underline{\delta} = \mathcal{O} (n)$ and $\log (1 / \mu_k ) = \mathcal{O} (1)$ for each $k = 1, \dots, K$, which is the standard setup in the infill asymptotics analysis. With this assumption in place, we have the following result: 

\begin{theorem} 
Let $\left \{ \left ( \boldsymbol{X}_t, Y_t \right ) \mid t = 1, \dots, n \right \}$ be a sample from \eqref{equation: data generating process} with production frontier functions satisfying Scenario~\ref{scenario: multiple changes}, as well as Assumptions~\ref{assumption: production frontier function continuous} and~\ref{assumption: multiple change point spacing} and stochastic components satisfying Assumptions~\ref{assumption: Z is subset of hypercube},~\ref{assumption: production set is free disposal},~\ref{assumption: X and R distribution}. If $\boldsymbol{x}_0$ is chosen such that $\max_{t = 1, \dots, n} \mathbb{P} \left ( \boldsymbol{X}_t \leq \boldsymbol{x}_0 \right ) \leq C_1$ for some $C_1 \in (0,1)$ and the threshold $\lambda$, effective sample sizes $\delta_k$, and jump size $\log (1/\mu_k)$ jointly satisfy
\begin{equation}
C_2 \log (n) \leq \lambda \leq C_3 \delta_k \log (1/\mu_k) \quad \text{ for all } k = 1, \dots, K
\label{equation: multiple change threshold thresh condition}
\end{equation}
then for $n$ sufficiently large, on a set with probability at least $1 - C_4 n^{-1}$, it holds  that
\begin{enumerate}[(i)]
    \item $\hat{K} = K$, and 
    \item $\left | \hat{\eta}_k - \eta_k \right | \leq C_5 \log (n) \vee C_6 \frac{\log(n)}{\log (1/\mu_k)}$ for each $k = 1, \dots, K$,
\end{enumerate}
where $C_1,\ldots, C_6$ are absolute constants depending only on the constants stated in Assumptions~\ref{assumption: Z is subset of hypercube},~\ref{assumption: production set is free disposal},~\ref{assumption: X and R distribution},~\ref{assumption: production frontier function continuous}, and~\ref{assumption: multiple change point spacing}. 
\label{theorem: multiple change detection}
\end{theorem}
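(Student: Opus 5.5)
We argue by induction over the sequential detection steps of Algorithm~\ref{algorithm: multiple change detection}, working throughout on one high-probability event $\mathcal{E}$ and reducing each step to the single-change analysis behind Theorem~\ref{theorem: single change detection}.

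\textbf{The event $\mathcal{E}$.} It is the intersection of the following events.
\begin{enumerate}[(a)]
\item For every candidate right endpoint $m$ with $\eta_{k} + C\log(n) \le m \le \eta_{k+1}$, the FDH estimator built from $\{1,\dots,m\}$ satisfies $\|\hat f_m - f_{(k+1)}\|_\infty \le C \delta_k^{-1/(d+1)}\log^{1/(d+1)+1}(n)$. Since $\Psi_{(j)}\subseteq\Psi_{(k+1)}$ for $j\le k+1$, we have $\hat f_m \le f_{(k+1)}$. The reverse bound uses only the $m-\eta_k \ge \min(\delta_k, m-\eta_k)$ observations from the last segment, so it follows from Theorem~\ref{theorem: exponential concentration in infty distance} with $\zeta \asymp \log(n)$ and a union bound over the at most $n$ values of $m$.
\item Uniform deviation bounds, over all $O(n^2)$ pairs $t_1<t_2$, for the counts $N(t_1,t_2,\boldsymbol{x}_0)$ and for the maxima of the true scores $R_t$ on trimmed observations. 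For instance, on any window of length at least $C\log(n)$, some $t$ has $\boldsymbol{X}_t>\boldsymbol{x}_0$ and $R_t\ge 1-\epsilon$; this uses Assumption~\ref{assumption: X and R distribution} and $\mathbb{P}(\boldsymbol X_t\le \boldsymbol x_0)\le C_1$.
\end{enumerate}
Each event fails with probability $O(n^{-2})$ per index, so $\mathbb{P}(\mathcal{E})\ge 1-C_4n^{-1}$.

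\textbf{Translating (a) into bounds on scores.} Trimming at $\boldsymbol x_0$ together with the bilipschitz property gives $f_{(k+1)}(\boldsymbol X_t)\ge x_0/L$. Hence the relative error satisfies $|\hat R_t/\tilde R_t-1|\lesssim L x_0^{-1}\delta_k^{-1/(d+1)}\log^{3/2}(n)$, where $\tilde R_t = Y_t/f_{(k+1)}(\boldsymbol X_t)$. Assumption~\ref{assumption: multiple change point spacing}(ii) makes this at most $\tfrac14\log(1/\mu_k)$ on the log scale. Consequently scores with $t\in(\eta_{k-1},\eta_k]$ satisfy $\log\hat R_t\le -\tfrac34\log(1/\mu_k)$, while scores with $t>\eta_k$ are close to their true efficiencies.

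\textbf{Induction hypothesis.} When the algorithm is at current right end $m$, we assume $\eta_k+c\,\delta_k\le m\le\eta_{k+1}$ for the relevant $k$, and no spurious detections have occurred.

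\emph{No early detection.} Take an interval $\{m-j,\dots,m\}$ that lies in $(\eta_k,m]$. Its scores are close to i.i.d.-like variables with support $[0,1]$, so the statistic $-2N\log\hat M$ is at most $C\log(n)<\lambda$ by (b). The same reasoning gives part (i) of the theorem once no changes remain.

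\emph{Detection.} Once $m-j\le\eta_k - \lambda/(c\log(1/\mu_k))$, evaluating at $\tau=\eta_k$ gives a statistic of order $N\log(1/\mu_k)\gtrsim (\eta_k-(m-j))\log(1/\mu_k)$. This exceeds $\lambda$, which is possible because $\lambda\le C_3\delta_k\log(1/\mu_k)$. Therefore $j^*$ is reached with $m-j^*\in[\eta_k-C\lambda/\log(1/\mu_k),\,\eta_k)$. In particular the new right end satisfies $m<\eta_k$ and is at least $\eta_{k-1}+c\delta_k$, which propagates the induction.

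\emph{Localisation of $\tilde\eta$.} For $\tau<\eta_k$ the statistic increases with $\tau$. For $\tau>\eta_k$, the maximum $\hat M$ jumps toward $1$ within $C\log(n)$ steps by (b), so the statistic collapses to $O(\log n)$. This pins $\tilde\eta$ to within $C\log(n)\vee C\log(n)/\log(1/\mu_k)$ of $\eta_k$. Together with the argument above, this yields $\hat K=K$.

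\textbf{Refitting.} On $\mathcal E$, the window $[\tilde s_k,\tilde e_k]$ contains only $\eta_k$. Its left end lies at distance $\gtrsim\delta_k$ from $\eta_k$, because $\tilde s_k$ is a midpoint between $\tilde\eta_{k-1}$ and $m_k$. The frontier used on this window is the one that consistently estimates $f_{(k+1)}$. So the single-change argument of Theorem~\ref{theorem: single change detection}(ii) applies verbatim on the window.

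That argument compares $\hat L_{\tilde s_k,\tau,\tilde e_k}$ for $\tau=\eta_k\pm r$:
\begin{itemize}
\item Moving right by $r$ adds observations whose maximum exceeds $\mu_k e^{\log(1/\mu_k)/4}$ with probability $1-(1-\theta)^{r}$.
\item Moving left by $r$ loses $\gtrsim r$ terms, each contributing $\tfrac34\cdot 2\log(1/\mu_k)$.
\end{itemize}
These estimates deliver the bound $C_5\log(n)\vee C_6\log(n)/\log(1/\mu_k)$.

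\textbf{Main obstacle.} The hard step is the bookkeeping that keeps the current right end $m$ strictly inside $(\eta_{k-1},\eta_k)$ while still leaving $\gtrsim\delta_k$ points of the new segment. This is needed so that the FDH estimate in (a) has enough data for Assumption~\ref{assumption: multiple change point spacing}(ii) to apply at the next stage. I would handle it by showing that $\eta_k-m_k\le C\lambda/\log(1/\mu_k)\le c\,\delta_k$ using the upper half of~\eqref{equation: multiple change threshold thresh condition}, choosing $C_3$ small enough.
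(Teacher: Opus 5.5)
Your route is the paper's. False alarms on change-free stretches are excluded because $\hat f_m\le f_{(k+1)}$ gives $\hat R_t\ge R_t$ there, so Lemma~\ref{lemma: max bound} applies. Detection is forced before the left end of the expanding window passes $\eta_k$ by more than a small fraction of $\delta_k$, so the restart leaves enough post-change data for the next FDH fit. Localisation is then delegated to the single-change comparison on the refitting windows.

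The step that does not go through as you state it is the left-hand half of that comparison. The same objection applies to your claim that the statistic increases in $\tau$ for $\tau<\eta_k$. Write $W_\tau=-\log\hat M(\tilde s_k,\tau,\boldsymbol x_0)$, which is non-increasing in $\tau$. Then
\[
\hat L_{\tilde s_k,\eta_k,\tilde e_k}-\hat L_{\tilde s_k,\eta_k-r,\tilde e_k}
=2N(\eta_k-r+1,\eta_k,\boldsymbol x_0)\,W_{\eta_k}-2N(\tilde s_k,\eta_k-r,\boldsymbol x_0)\bigl(W_{\eta_k-r}-W_{\eta_k}\bigr).
\]
Your ``moving left loses $\gtrsim r$ terms'' accounts only for the first term. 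The second term is non-zero whenever the largest pre-change score in the window sits among its last $r$ indices. That happens with probability of order $r/(\eta_k-\tilde s_k)$, not $O(n^{-2})$.

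Nothing in your event $\mathcal E$ bounds this term. Lemma~\ref{lemma: max bound} would give $O(\log n)$ if the pre-change scores were $\mu_k R_t$. But they are $R_t f_{(k)}(\boldsymbol X_t)/\hat f_m(\boldsymbol X_t)$, and the $\boldsymbol x$-dependent factor changes the upper tail of their maximum. For example, suppose $f_{(k)}/f_{(k+1)}$, which Scenario~\ref{scenario: multiple changes} allows to vary, equals $\mu_k$ only at an isolated point with Lipschitz decay. Then the relative top spacing among $N$ such scores is of order $N^{-1/(d+1)}$, and the second term is of order $N^{d/(d+1)}$. That is far above $r\log(1/\mu_k)$ when $r\asymp\log(n)/\log(1/\mu_k)$. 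The FDH error can have a similar effect even for a constant ratio, since $f_{(k+1)}/\hat f_m$ is not constant in $\boldsymbol x$.

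In such cases $\hat\eta_k$ lands to the left of $\eta_k$ by more than $C_5\log(n)\vee C_6\log(n)/\log(1/\mu_k)$ with probability much larger than $n^{-1}$. So part (ii) needs either a genuinely different treatment of the left side or a weaker probability statement. The paper's proof does not supply one either: it proves only the right-hand comparison and declares the left one ``analogous''. You have inherited its hole rather than introduced a new one.

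Two smaller repairs are also needed.
\begin{itemize}
\item Your threshold $\mu_k e^{\log(1/\mu_k)/4}=\mu_k^{3/4}$ coincides with your own upper bound for pre-change scores and leaves no margin; the paper uses $\mu_k^{1/4}$.
\item Your relative-error bound needs a lower bound on $\hat f_m(\boldsymbol X_t)$, not on $f_{(k+1)}(\boldsymbol X_t)$. Assumption~\ref{assumption: multiple change point spacing}(ii) does not make $\|\hat f_m-f_{(k+1)}\|_\infty$ small compared with $x_0/L$ when $\log(1/\mu_k)$ is large. The paper gets $\hat f_m\ge x_0/(2L)$ from the event that some post-change observation has $R_t\ge 1/2$ and an input of norm at least $x_0/2$.
\end{itemize}
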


Note that similarly to what was done in  Section~\ref{section: inference on a single change}, one can construct asymptotically valid confidence intervals for each of the change point locations returned by Algorithm~\ref{algorithm: multiple change detection} by applying the results of Theorem~\ref{lemma: single change inference} to each of the change point locations in turn.

\section{Local changes in technology} \label{section: local changes in technology}

\subsection{Problem setup}

In this section, we show how the procedure introduced in Section~\ref{section: multiple change points} can be extended to the setting in which large changes in technology occur only over a subset of the input space. In practice, changes in technology may not manifest themselves over the entire input space. For example, a particular new technology might only lead to more efficient production when the input is sufficiently large. This setting is characterized by the following scenario: 
\begin{scenario}{L}
\label{scenario: local change jump sizes}
$\Psi_{(1)} \subset \cdots \subset \Psi_{(K+1)}$. For each $k \in \left \{ 1, \dots, K \right \}$ there is a $\mu_k \in (0,1)$ and a hyper-cube $\mathcal{X}_{(k)} \subseteq \mathcal{X}$ with all sides of equal length $\chi_{(k)} > 0$ such that $f_{(k)} (\boldsymbol{x}) / f_{(k+1)} (\boldsymbol{x}) \le \mu_k$ for all $\boldsymbol{x} \in \mathcal{X}_{(k)}$.
\end{scenario}
See also Figure~\ref{figure: local change illustration} for a visual illustration of Scenario~\ref{scenario: local change jump sizes}.  In this scenario, the procedure outlined in Section~\ref{section: multiple change points} may perform poorly, since estimated efficiency scores before a change point may still attain values close to $1$, if the corresponding input lies in a region of the input space over which the production frontier function does not change. This makes it impossible for the vanilla form of the method to detect changes in the technology, which would require the estimated efficiency scores to be bounded away from 1 before the change.

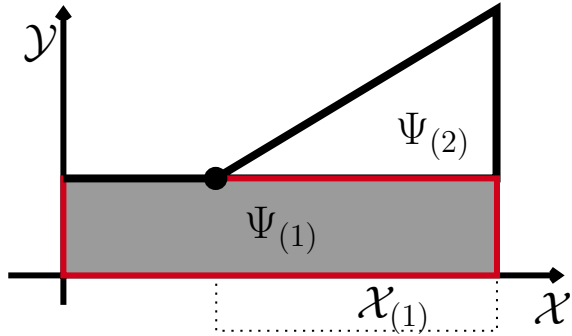
\begin{figure}[!htbp]
\centering
\caption{Visual illustration of local changes in technology with $K =1$ change points.  The grey region (\textcolor{gray}{$\blacksquare$}) corresponds to the production set under the old technology $\Psi_{(1)}$ and the white region ($\square$) corresponds to the production set under the new technology $\Psi_{(2)}$. The dotted rectangle $\mathcal{X}_{(1)}$ corresponds to the subset of the input space over which gains in productivity are realised under the new technology.
}
\label{figure: local change illustration}

\tikzset{every picture/.style={line width=0.75pt}} 

\begin{tikzpicture}[x=0.75pt,y=0.75pt,yscale=-0.6,xscale=0.6]

\draw  [dash pattern={on 0.84pt off 2.51pt}] (244,160.66) -- (477.57,160.66) -- (477.57,288.66) -- (244,288.66) -- cycle ;
\draw  [line width=3]  (477.2,19.17) -- (243.57,160.95) -- (477.2,160.95) -- cycle ;
\draw [line width=2.25]  (71,242.25) -- (529.57,242.25)(116.86,19.95) -- (116.86,266.95) (522.57,237.25) -- (529.57,242.25) -- (522.57,247.25) (111.86,26.95) -- (116.86,19.95) -- (121.86,26.95)  ;
\draw  [color={rgb, 255:red, 208; green, 2; blue, 27 }  ,draw opacity=1 ][fill={rgb, 255:red, 155; green, 155; blue, 155 }  ,fill opacity=1 ][line width=2.25]  (116.86,160.95) -- (477.57,160.95) -- (477.57,242.25) -- (116.86,242.25) -- cycle ;
\draw [line width=3]    (116.86,160.95) -- (243.57,160.95) ;
\draw [shift={(243.57,160.95)}, rotate = 0] [color={rgb, 255:red, 0; green, 0; blue, 0 }  ][fill={rgb, 255:red, 0; green, 0; blue, 0 }  ][line width=3]      (0, 0) circle [x radius= 6.37, y radius= 6.37]   ;

\draw (81,29.4) node [anchor=north west][inner sep=0.75pt]  [font=\LARGE]  {$\mathcal{Y}$};
\draw (505,254.4) node [anchor=north west][inner sep=0.75pt]  [font=\LARGE]  {$\mathcal{X}$};
\draw (265,182.4) node [anchor=north west][inner sep=0.75pt]  [font=\LARGE]  {$\Psi _{( 1)}$};
\draw (391,103.4) node [anchor=north west][inner sep=0.75pt]  [font=\LARGE]  {$\Psi _{( 2)}$};
\draw (358,249.4) node [anchor=north west][inner sep=0.75pt]  [font=\LARGE]  {$\mathcal{X}_{( 1)}$};

\end{tikzpicture}
\end{figure}

\subsection{Methodology} \label{section: local change methodology}

We now outline a procedure for multiple change point detection under Scenario~\ref{scenario: local change jump sizes}, which is a modification based on that from Section~\ref{section: multiple change points}. The main idea is to define a multi-scale grid over the input space $\mathcal{X}$, and then for every element in the grid,  search for a change as was done in Section~\ref{section: multiple change points} by restricting ourselves to observations whose inputs fall in these given sets. 

Recall that $\mathcal{X} = \times_{j=1}^{d} \mathcal{X}_j$ with $\mathcal{X}_j = \left [ 0, \overline{x}_j \right]$ and  $\underline{x} = \min_j \overline{x}_j$. So as not to overburden the notation, in the following, we assume that $\overline{x}_1=\cdots=\overline{x}_d = \underline{x}$. Note that we can always rescale the input to make this happen without affecting our procedure, because both the FDH and change detection in the support of a distribution are invariant to scaling.
To this end, for some strictly increasing sequence $\left \{ A_n \mid n \in \mathbb{N} \right \}$, we introduce the following grid 
\begin{align}
& \mathfrak{X}  = \bigcup_{k = 0}^{\left \lceil \log_{2} \left ( \underline{x} A_n^{1/d} \right ) \right \rceil} \Big\{ \mathcal{X}' = \times_{j=1}^d \left [ x_j \underline{x} 2^{-k}, \left ( x_j + 1 \right ) \underline{x} 2^{-k}  \right ] \Big| \Big . \nonumber \\
& \hspace{11em} \Big . \left ( x_1, \dots, x_d \right )^\top \in  \frac{1}{2} \mathbb{N}^{d}, \mathcal{X}' \subseteq \mathcal{X} \setminus \left \{ \boldsymbol{x} \mid  \boldsymbol{x} \leq \boldsymbol{x}_0 \right \} \Big \},
\label{equation: multi-scale grid}
\end{align}
which can be recognised as the multi-scale partition of $\mathcal{X}$ into axis-aligned hyper-cubes with sides of equal length decaying dyadically from $\underline{x}$ to $A_n^{-1/d}$. See Figure~\ref{figure: multiscale grid illustration} for a visual illustration. See also \cite{hotz2012locally} for a similar construction. Consequently, we introduce an analogue of the detection statistic defined in \eqref{equation: localized quasi-LR stat} for data whose input lies in a particular element of \eqref{equation: multi-scale grid}. That is, with integers $t_1,t_2$ such that $1 \leq t_1 < t_2 \leq n$ for any $\tau \in \{ t_1, \dots, t_2 \}$ and some $\mathcal{X}' \in \mathfrak{X}$ we put
\begin{equation}
\hat{L}_{t_1, \tau, t_2}^{\mathcal{X}'} = - 2 N \left ( t_1, \tau, \mathcal{X}' \right ) \log \left [ \hat{M} (t_1, \tau, \mathcal{X}') \right ] 
\label{equation: localized quasi-LR stat on mathcal X}
\end{equation}
where we follow the convention $0 \times \infty = 0$, and for $1 \leq t_1 \leq t_2 \leq n$ we have put 
\begin{equation*}
N \left ( t_1 , t_2, \mathcal{X}' \right ) = \sum_{t=t_1}^{t_2} \mathbf{1}_{\left \{ \boldsymbol{X}_t \in \mathcal{X}' \right \}} \quad \text{and} \quad \hat{M} \left ( t_1, t_2, \mathcal{X}' \right ) = \max \left ( \hat{R}_t \mid t_1 \leq t \leq t_2, X_t \in \mathcal{X}' \right );     
\end{equation*}
in particular, the first quantity counts the number of inputs indexed between $t_1$ and $t_2$ taking values in the set $\mathcal{X}'$, and the second quantity takes the maximum of $\hat{R}$'s whose corresponding $\boldsymbol{X}$ takes values in the same set. 
   
\begin{figure}[!htbp]
\centering
\caption{Illustration of the axis aligned hyper-cubes used in the construction of the multi-scale grid \eqref{equation: multi-scale grid} at scales $k=0,1,2$. Each dashed cube represents a given scale, and the grey and white cube contained within represent the next smallest scale.}
\label{figure: multiscale grid illustration}

\tikzset{every picture/.style={line width=0.75pt}} 

\begin{tikzpicture}[x=0.75pt,y=0.75pt,yscale=-0.6,xscale=0.6]

\draw  [fill={rgb, 255:red, 155; green, 155; blue, 155 }  ,fill opacity=0.5 ] (154.59,58.25) -- (187.6,25.24) -- (264.61,25.24) -- (264.61,120.59) -- (231.61,153.59) -- (154.59,153.59) -- cycle ; \draw   (264.61,25.24) -- (231.61,58.25) -- (154.59,58.25) ; \draw   (231.61,58.25) -- (231.61,153.59) ;
\draw  [fill={rgb, 255:red, 155; green, 155; blue, 155 }  ,fill opacity=0.5 ] (21,102.25) -- (54.01,69.25) -- (131.02,69.25) -- (131.02,164.6) -- (98.01,197.6) -- (21,197.6) -- cycle ; \draw   (131.02,69.25) -- (98.01,102.25) -- (21,102.25) ; \draw   (98.01,102.25) -- (98.01,197.6) ;
\draw  [fill={rgb, 255:red, 155; green, 155; blue, 155 }  ,fill opacity=0.5 ] (61.87,58.25) -- (94.87,25.24) -- (171.88,25.24) -- (171.88,120.59) -- (138.88,153.59) -- (61.87,153.59) -- cycle ; \draw   (171.88,25.24) -- (138.88,58.25) -- (61.87,58.25) ; \draw   (138.88,58.25) -- (138.88,153.59) ;
\draw  [fill={rgb, 255:red, 155; green, 155; blue, 155 }  ,fill opacity=0.5 ] (21,209.13) -- (54.01,176.12) -- (131.02,176.12) -- (131.02,271.47) -- (98.01,304.48) -- (21,304.48) -- cycle ; \draw   (131.02,176.12) -- (98.01,209.13) -- (21,209.13) ; \draw   (98.01,209.13) -- (98.01,304.48) ;
\draw  [fill={rgb, 255:red, 155; green, 155; blue, 155 }  ,fill opacity=0.5 ] (61.87,165.12) -- (94.87,132.12) -- (171.88,132.12) -- (171.88,227.46) -- (138.88,260.47) -- (61.87,260.47) -- cycle ; \draw   (171.88,132.12) -- (138.88,165.12) -- (61.87,165.12) ; \draw   (138.88,165.12) -- (138.88,260.47) ;
\draw  [fill={rgb, 255:red, 155; green, 155; blue, 155 }  ,fill opacity=0.5 ] (113.73,209.13) -- (146.74,176.12) -- (223.75,176.12) -- (223.75,271.47) -- (190.74,304.48) -- (113.73,304.48) -- cycle ; \draw   (223.75,176.12) -- (190.74,209.13) -- (113.73,209.13) ; \draw   (190.74,209.13) -- (190.74,304.48) ;
\draw  [fill={rgb, 255:red, 155; green, 155; blue, 155 }  ,fill opacity=0.5 ] (154.59,165.12) -- (187.6,132.12) -- (264.61,132.12) -- (264.61,227.46) -- (231.61,260.47) -- (154.59,260.47) -- cycle ; \draw   (264.61,132.12) -- (231.61,165.12) -- (154.59,165.12) ; \draw   (231.61,165.12) -- (231.61,260.47) ;
\draw  [fill={rgb, 255:red, 255; green, 255; blue, 255 }  ,fill opacity=0.9 ] (113.73,102.25) -- (146.74,69.25) -- (223.75,69.25) -- (223.75,164.6) -- (190.74,197.6) -- (113.73,197.6) -- cycle ; \draw   (223.75,69.25) -- (190.74,102.25) -- (113.73,102.25) ; \draw   (190.74,102.25) -- (190.74,197.6) ;
\draw  [dash pattern={on 0.84pt off 2.51pt}] (10,85.49) -- (90.16,5.33) -- (277.18,5.33) -- (277.18,236.89) -- (197.03,317.05) -- (10,317.05) -- cycle ; \draw  [dash pattern={on 0.84pt off 2.51pt}] (277.18,5.33) -- (197.03,85.49) -- (10,85.49) ; \draw  [dash pattern={on 0.84pt off 2.51pt}] (197.03,85.49) -- (197.03,317.05) ;
\draw  [fill={rgb, 255:red, 155; green, 155; blue, 155 }  ,fill opacity=0.5 ] (431.03,190.7) -- (446.12,175.61) -- (481.34,175.61) -- (481.34,219.21) -- (466.24,234.3) -- (431.03,234.3) -- cycle ; \draw   (481.34,175.61) -- (466.24,190.7) -- (431.03,190.7) ; \draw   (466.24,190.7) -- (466.24,234.3) ;
\draw  [fill={rgb, 255:red, 155; green, 155; blue, 155 }  ,fill opacity=0.5 ] (369.94,210.82) -- (385.03,195.73) -- (420.24,195.73) -- (420.24,239.33) -- (405.15,254.43) -- (369.94,254.43) -- cycle ; \draw   (420.24,195.73) -- (405.15,210.82) -- (369.94,210.82) ; \draw   (405.15,210.82) -- (405.15,254.43) ;
\draw  [fill={rgb, 255:red, 155; green, 155; blue, 155 }  ,fill opacity=0.5 ] (388.62,190.7) -- (403.71,175.61) -- (438.93,175.61) -- (438.93,219.21) -- (423.84,234.3) -- (388.62,234.3) -- cycle ; \draw   (438.93,175.61) -- (423.84,190.7) -- (388.62,190.7) ; \draw   (423.84,190.7) -- (423.84,234.3) ;
\draw  [fill={rgb, 255:red, 155; green, 155; blue, 155 }  ,fill opacity=0.5 ] (369.94,259.7) -- (385.03,244.6) -- (420.24,244.6) -- (420.24,288.21) -- (405.15,303.3) -- (369.94,303.3) -- cycle ; \draw   (420.24,244.6) -- (405.15,259.7) -- (369.94,259.7) ; \draw   (405.15,259.7) -- (405.15,303.3) ;
\draw  [fill={rgb, 255:red, 155; green, 155; blue, 155 }  ,fill opacity=0.5 ] (388.62,239.57) -- (403.71,224.48) -- (438.93,224.48) -- (438.93,268.08) -- (423.84,283.17) -- (388.62,283.17) -- cycle ; \draw   (438.93,224.48) -- (423.84,239.57) -- (388.62,239.57) ; \draw   (423.84,239.57) -- (423.84,283.17) ;
\draw  [fill={rgb, 255:red, 155; green, 155; blue, 155 }  ,fill opacity=0.5 ] (412.34,259.7) -- (427.43,244.6) -- (462.65,244.6) -- (462.65,288.21) -- (447.56,303.3) -- (412.34,303.3) -- cycle ; \draw   (462.65,244.6) -- (447.56,259.7) -- (412.34,259.7) ; \draw   (447.56,259.7) -- (447.56,303.3) ;
\draw  [fill={rgb, 255:red, 155; green, 155; blue, 155 }  ,fill opacity=0.5 ] (431.03,239.57) -- (446.12,224.48) -- (481.34,224.48) -- (481.34,268.08) -- (466.24,283.17) -- (431.03,283.17) -- cycle ; \draw   (481.34,224.48) -- (466.24,239.57) -- (431.03,239.57) ; \draw   (466.24,239.57) -- (466.24,283.17) ;
\draw  [fill={rgb, 255:red, 255; green, 255; blue, 255 }  ,fill opacity=0.9 ] (412.34,210.82) -- (427.43,195.73) -- (462.65,195.73) -- (462.65,239.33) -- (447.56,254.43) -- (412.34,254.43) -- cycle ; \draw   (462.65,195.73) -- (447.56,210.82) -- (412.34,210.82) ; \draw   (447.56,210.82) -- (447.56,254.43) ;
\draw  [dash pattern={on 0.84pt off 2.51pt}] (358.73,207.06) -- (397.2,168.59) -- (486.95,168.59) -- (486.95,273.91) -- (448.48,312.37) -- (358.73,312.37) -- cycle ; \draw  [dash pattern={on 0.84pt off 2.51pt}] (486.95,168.59) -- (448.48,207.06) -- (358.73,207.06) ; \draw  [dash pattern={on 0.84pt off 2.51pt}] (448.48,207.06) -- (448.48,312.37) ;
\draw  [fill={rgb, 255:red, 155; green, 155; blue, 155 }  ,fill opacity=0.5 ] (627.03,252.54) -- (633.83,245.74) -- (649.7,245.74) -- (649.7,265.39) -- (642.9,272.18) -- (627.03,272.18) -- cycle ; \draw   (649.7,245.74) -- (642.9,252.54) -- (627.03,252.54) ; \draw   (642.9,252.54) -- (642.9,272.18) ;
\draw  [fill={rgb, 255:red, 155; green, 155; blue, 155 }  ,fill opacity=0.5 ] (599.51,261.61) -- (606.31,254.81) -- (622.18,254.81) -- (622.18,274.45) -- (615.38,281.25) -- (599.51,281.25) -- cycle ; \draw   (622.18,254.81) -- (615.38,261.61) -- (599.51,261.61) ; \draw   (615.38,261.61) -- (615.38,281.25) ;
\draw  [fill={rgb, 255:red, 155; green, 155; blue, 155 }  ,fill opacity=0.5 ] (607.93,252.54) -- (614.73,245.74) -- (630.59,245.74) -- (630.59,265.39) -- (623.8,272.18) -- (607.93,272.18) -- cycle ; \draw   (630.59,245.74) -- (623.8,252.54) -- (607.93,252.54) ; \draw   (623.8,252.54) -- (623.8,272.18) ;
\draw  [fill={rgb, 255:red, 155; green, 155; blue, 155 }  ,fill opacity=0.5 ] (599.51,283.62) -- (606.31,276.82) -- (622.18,276.82) -- (622.18,296.47) -- (615.38,303.27) -- (599.51,303.27) -- cycle ; \draw   (622.18,276.82) -- (615.38,283.62) -- (599.51,283.62) ; \draw   (615.38,283.62) -- (615.38,303.27) ;
\draw  [fill={rgb, 255:red, 155; green, 155; blue, 155 }  ,fill opacity=0.5 ] (607.93,274.56) -- (614.73,267.76) -- (630.59,267.76) -- (630.59,287.4) -- (623.8,294.2) -- (607.93,294.2) -- cycle ; \draw   (630.59,267.76) -- (623.8,274.56) -- (607.93,274.56) ; \draw   (623.8,274.56) -- (623.8,294.2) ;
\draw  [fill={rgb, 255:red, 155; green, 155; blue, 155 }  ,fill opacity=0.5 ] (618.62,283.62) -- (625.41,276.82) -- (641.28,276.82) -- (641.28,296.47) -- (634.48,303.27) -- (618.62,303.27) -- cycle ; \draw   (641.28,276.82) -- (634.48,283.62) -- (618.62,283.62) ; \draw   (634.48,283.62) -- (634.48,303.27) ;
\draw  [fill={rgb, 255:red, 155; green, 155; blue, 155 }  ,fill opacity=0.5 ] (627.03,274.56) -- (633.83,267.76) -- (649.7,267.76) -- (649.7,287.4) -- (642.9,294.2) -- (627.03,294.2) -- cycle ; \draw   (649.7,267.76) -- (642.9,274.56) -- (627.03,274.56) ; \draw   (642.9,274.56) -- (642.9,294.2) ;
\draw  [fill={rgb, 255:red, 255; green, 255; blue, 255 }  ,fill opacity=0.9 ] (618.62,261.61) -- (625.41,254.81) -- (641.28,254.81) -- (641.28,274.45) -- (634.48,281.25) -- (618.62,281.25) -- cycle ; \draw   (641.28,254.81) -- (634.48,261.61) -- (618.62,261.61) ; \draw   (634.48,261.61) -- (634.48,281.25) ;
\draw  [dash pattern={on 0.84pt off 2.51pt}] (591.71,258.17) -- (611.28,238.6) -- (656.94,238.6) -- (656.94,292.95) -- (637.37,312.52) -- (591.71,312.52) -- cycle ; \draw  [dash pattern={on 0.84pt off 2.51pt}] (656.94,238.6) -- (637.37,258.17) -- (591.71,258.17) ; \draw  [dash pattern={on 0.84pt off 2.51pt}] (637.37,258.17) -- (637.37,312.52) ;
\draw [line width=1.5]    (171.88,132.12) -- (358.66,197.66) ;
\draw [shift={(362.43,198.99)}, rotate = 199.34] [fill={rgb, 255:red, 0; green, 0; blue, 0 }  ][line width=0.08]  [draw opacity=0] (13.4,-6.43) -- (0,0) -- (13.4,6.44) -- (8.9,0) -- cycle    ;
\draw [shift={(171.88,132.12)}, rotate = 19.34] [color={rgb, 255:red, 0; green, 0; blue, 0 }  ][fill={rgb, 255:red, 0; green, 0; blue, 0 }  ][line width=1.5]      (0, 0) circle [x radius= 4.36, y radius= 4.36]   ;
\draw [line width=1.5]    (438.93,224.48) -- (577.24,272.49) ;
\draw [shift={(581.02,273.8)}, rotate = 199.14] [fill={rgb, 255:red, 0; green, 0; blue, 0 }  ][line width=0.08]  [draw opacity=0] (13.4,-6.43) -- (0,0) -- (13.4,6.44) -- (8.9,0) -- cycle    ;
\draw [shift={(438.93,224.48)}, rotate = 19.14] [color={rgb, 255:red, 0; green, 0; blue, 0 }  ][fill={rgb, 255:red, 0; green, 0; blue, 0 }  ][line width=1.5]      (0, 0) circle [x radius= 4.36, y radius= 4.36]   ;
\draw   (113.73,102.25) -- (146.74,69.25) -- (223.75,69.25) -- (223.75,164.6) -- (190.74,197.6) -- (113.73,197.6) -- cycle ; \draw   (223.75,69.25) -- (190.74,102.25) -- (113.73,102.25) ; \draw   (190.74,102.25) -- (190.74,197.6) ;
\draw   (412.34,210.82) -- (427.43,195.73) -- (462.65,195.73) -- (462.65,239.33) -- (447.56,254.43) -- (412.34,254.43) -- cycle ; \draw   (462.65,195.73) -- (447.56,210.82) -- (412.34,210.82) ; \draw   (447.56,210.82) -- (447.56,254.43) ;

\draw (68.26,318.81) node [anchor=north west][inner sep=0.75pt]  [font=\LARGE]  {$k=0$};
\draw (366.73,318.81) node [anchor=north west][inner sep=0.75pt]  [font=\LARGE]  {$k=1$};
\draw (577.92,316.48) node [anchor=north west][inner sep=0.75pt]  [font=\LARGE]  {$k=2$};

\end{tikzpicture}
\end{figure}
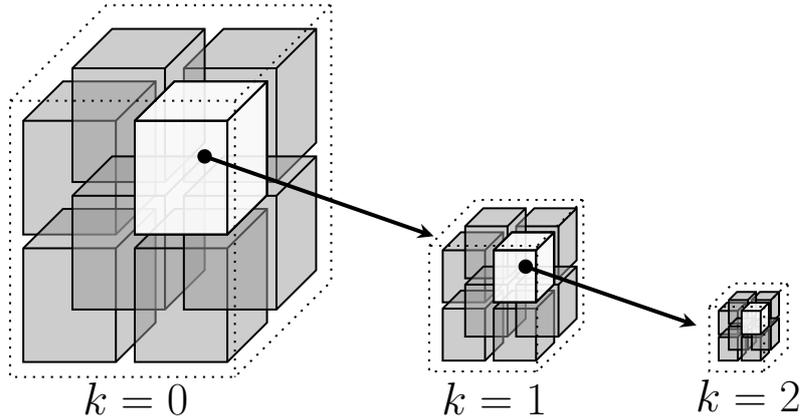

With the above definitions in place, we are in the position to describe in full the procedure for detecting multiple change points under scenario~\ref{scenario: local change jump sizes}. Similar to Algorithm~\ref{algorithm: multiple change detection}, to estimate the rightmost change point we take the arg-max of \eqref{equation: localized quasi-LR stat on mathcal X} over left-expanding intervals. However, within each interval we take the supremum of the corresponding detection statistic over all elements in the grid \eqref{equation: multi-scale grid}. Therefore, putting $j^* = \min \left \{ 1 \leq j \leq n \mid \max_{n-j \leq \tau \leq n} \max_{\mathcal{X}' \in \mathfrak{X}} \hat{L}_{n-j, \tau, n}^{\mathcal{X}'} \right \}$ the rightmost change point is estimated via
\begin{equation*}
\tilde{\eta} = \argmax_{n-j^* \leq \tau \leq n} \max_{\mathcal{X}' \in \mathfrak{X}} \hat{L}_{n-j^*, \tau, n}^{\mathcal{X}'}. 
\end{equation*}
If $j^*$ is undefined, we conclude that the data does not contain any change points. Otherwise, when such a $j^*$ is found, similar to Algorithm~\ref{algorithm: multiple change detection} we re-start the search for the next change point recursively at the index $m = n - j^*$. Let $\tilde{\Theta} = \left \{ \tilde{\eta}_1, \dots, \tilde{\eta}_{\hat{K}} \right \}$ be the (ordered) set of estimated change point locations recovered obtained by performing the above steps until either no data is left or no local statistic has value larger than $\lambda$. In an attempt to improve finite sample performance, we apply the following local re-fitting step. Let $(\tilde{s}_k, \tilde{e}_k)_{k=1,\dots,\hat{K}}$ be as defined in \eqref{equation: s e refit points}. For $k = 1, \dots, \hat{K}$, each estimated change point is locally refitted as
\begin{equation}
\hat{\eta}_k =  \argmax_{\tilde{s}_k \leq \tau \leq \tilde{e}_k} \max_{\mathcal{X}' \in \mathfrak{X}} \hat{L}_{\tilde{s}_k,\tau,\tilde{e}_k}^{\mathcal{X}'}, 
\label{equation: local change cpt extimator}
\end{equation}
where, again, at each $k$, the local statistics are computed using scores obtained from the estimated production frontier function which was used to detect the corresponding $\tilde{\eta}_k$. 

\subsection{Theory for local changes under Scenario~\ref{scenario: local change jump sizes}} \label{section: local change theory}

In this Section we show that, with appropriately chosen tuning parameters, under Scenario~\ref{scenario: local change jump sizes} the procedure described in Section~\ref{section: local change methodology} is able to detect all change point locations with high probability and moreover that all change points are well localized. To achieve a good rate of change point localization  under Scenario~\ref{scenario: local change jump sizes}, we need to impose the following assumption:    
\begin{assumption}
With $\underline{\delta} $ defined as in Assumption \ref{assumption: multiple change point spacing} and $L$ as in Assumption~\ref{assumption: production frontier function continuous}, it holds that
\begin{enumerate}[(i)]
    \item there is a positive constant $C_\chi$  for which $\min_k \operatorname{mes} \left ( \mathcal{X}_{(k)} \right ) \geq C_\chi \log^{-1} (n)$,
    \item there is a positive constant $C_{x,d}$ for which $\underline{\delta} \geq C_{x,d} \log^3 (n)$, and 
    \item $\delta_k^{\frac{1}{d+1}} \log (1/\mu_k) \geq \frac{4 L}{x_0} \log^{3/2} (n)$ for each $k = 1, \dots, K$. 
\end{enumerate}
\label{assumption: size of regions}
\end{assumption}

Part (i) of Assumption~\ref{assumption: size of regions} allows the measure of each region of the input space on which a prominent change occurs to vanish as $n \rightarrow \infty$, but maintains that the rate of decay is slow enough that with $n$ sufficiently large at least one element in $\mathfrak{X}$ will be contained within the region.  Part (ii) controls the spacing the change points, while Part (iii) ensures that, with high probability, for each $k = 1, \dots, K$ enough data points fall in an element of $\mathfrak{X}$ contained within $\mathcal{X}_{(k)}$, which matches Part (ii) of Assumption \ref{assumption: multiple change point spacing}. With this assumption in place, we have the following result: 

\begin{theorem}
Let $\left \{ \left ( \boldsymbol{X}_t, Y_t \right ) \mid t = 1, \dots, n \right \}$ be a sample from \eqref{equation: data generating process} with production frontier functions satisfying Scenario~\ref{scenario: local change jump sizes}, as well as Assumptions~\ref{assumption: size of regions} and \ref{assumption: production frontier function continuous},  and stochastic components satisfying Assumptions~\ref{assumption: Z is subset of hypercube},~\ref{assumption: production set is free disposal},~\ref{assumption: X and R distribution}. If $\boldsymbol{x}_0$ is chosen such that $\max_{t = 1, \dots, n} \mathbb{P} \left ( \boldsymbol{X}_t \leq \boldsymbol{x}_0 \right ) \leq C_1$ for some $C_1 \in (0,1)$, $A_n$ is chosen such that $A_n \leq C_2 \log(n)$, and the threshold $\lambda$, effective sample sizes $\delta_k$, and jump sizes $\log (1/\mu_k)$ jointly satisfy
\begin{equation}
C_3 \log (n) \leq \lambda \leq C_4 \left ( \operatorname{mes} (\mathcal{X}_{(k)}) \right ) \delta_k \log (1/\mu_k) \quad \text{ for all } k = 1, \dots, K,
\label{equation: multiple change threshold thresh condition 2}
\end{equation}
then for any sufficiently large $n$, on a set with probability at least $1 - C_5 n^{-1} (\log(n))^3 $, it holds  that
\begin{enumerate}[(i)]
    \item $\hat{K} = K$, and 
    \item $\left | \hat{\eta}_k - \eta_k \right | \leq C_6 \log^2 (n) \vee C_7 \log^2 (n) \left ( \left ( \operatorname{mes} (\mathcal{X}_{(k)}) \right ) \log \left ( 1/\mu_k \right ) \right )^{-1} $ for each $k = 1, \dots, K$,
\end{enumerate}
where $C_1, \ldots, C_7$ are absolute constants depending only on the constants stated in Assumptions \ref{assumption: Z is subset of hypercube}, \ref{assumption: production set is free disposal}, \ref{assumption: X and R distribution}, \ref{assumption: production frontier function continuous}, and \ref{assumption: size of regions}. 
\label{theorem: consistent detection under local alternatives}
\end{theorem}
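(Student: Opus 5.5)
We would prove Theorem~\ref{theorem: consistent detection under local alternatives} by reducing it, on a single high-probability event, to the argument behind Theorem~\ref{theorem: multiple change detection}. The reduction is to apply that argument cell-by-cell on the grid $\mathfrak{X}$ and take a union bound over cells.

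\textbf{Step 1: the good cell and the counting.} By Assumption~\ref{assumption: size of regions}(i), each $\mathcal{X}_{(k)}$ is a cube of side $\chi_{(k)} \gtrsim \log^{-1/d}(n)$. Since $\mathfrak{X}$ uses half-integer shifts and dyadic scales down to $A_n^{-1/d}$ with $A_n \le C_2 \log (n)$, some $\mathcal{X}'_k \in \mathfrak{X}$ satisfies $\mathcal{X}'_k \subseteq \mathcal{X}_{(k)}$ and $\operatorname{mes}(\mathcal{X}'_k) \ge c\, \operatorname{mes}(\mathcal{X}_{(k)})$. Here $c$ depends only on $d$, because a cube of side $\chi$ contains a grid cube of side at least $\chi/4$. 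Also $|\mathfrak{X}| \lesssim A_n \lesssim \log(n)$.

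By Assumption~\ref{assumption: X and R distribution}(i), each indicator $\mathbf{1}_{\{\boldsymbol{X}_t\in\mathcal{X}'\}}$ has success probability between $C_{1,X}\operatorname{mes}(\mathcal{X}')$ and $C_{2,X}\operatorname{mes}(\mathcal{X}')$. Bernstein's inequality, with a union bound over cells and over all $O(n^2)$ intervals $[t_1,t_2]$, then gives the following. Whenever $t_2-t_1 \gtrsim \log^2 (n)$,
\[
N(t_1,t_2,\mathcal{X}') \asymp (t_2-t_1)\operatorname{mes}(\mathcal{X}').
\]
This lower bound on interval length is what produces the $\log^2 (n)$ rather than $\log (n)$ localisation rate.

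\textbf{Step 2: frontier accuracy and the behaviour of the statistic.} Theorem~\ref{theorem: exponential concentration in infty distance} is applied to the last segment of each recursively trimmed sample, as in the proof of Theorem~\ref{theorem: multiple change detection}. With $\zeta \asymp \log (n)$, it gives $\|\hat f - f_{(k+1)}\|_\infty \lesssim \psi_{\delta_k}\log (n)$. The trimming at $\boldsymbol{x}_0$ and the bilipschitz property (Assumption~\ref{assumption: production frontier function continuous}) ensure $f \gtrsim x_0/L$ on retained inputs. Combined with Assumption~\ref{assumption: size of regions}(iii), this shows $\hat R_t = R_t(1+o(1/\log n))$ uniformly over the post-change segment.

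Before the change, inputs in $\mathcal{X}'_k$ give $\hat R_t \le \mu_k(1+o(1))$. Hence
\[
\hat L^{\mathcal{X}'_k}_{t_1,\eta_k,t_2} \gtrsim \operatorname{mes}(\mathcal{X}_{(k)})\,\delta_k \log(1/\mu_k),
\]
which exceeds $\lambda$ by~\eqref{equation: multiple change threshold thresh condition 2}.

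On a no-change segment every cell behaves as in the null case. Conditional on $N$, the quantity $-2N\log\max R_t$ is controlled by Assumption~\ref{assumption: X and R distribution}(ii), because $\mathbb{P}(\max R \le 1-u) \le (1-C_{1,R}u)^N$. This gives an exponential tail, which yields $\max_{\mathcal{X}'}\hat L \le C_3 \log (n)$ after the union bound over $O(n^2\log n)$ triples. The FDH error is absorbed because $N\cdot o(1/\log n)\cdot$ stays bounded for $N\le n$, using the rate in (iii). The $n^{-1}\log^3 (n)$ failure probability comes from this union bound: $\log (n)$ cells, together with the extra log factors from the recursion.

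\textbf{Step 3: induction over the recursion, then refitting.} Following the proof of Theorem~\ref{theorem: multiple change detection}, we show inductively that on the good event each left-expanding search first exceeds $\lambda$ at some $m=n-j^*$ lying in $(\eta_{k-1},\eta_k)$. The search cannot stop early, since by Step 2 the null statistics stay below $\lambda$. It cannot overshoot past the middle of the previous segment, by the detection bound. Consequently the trimmed sample always ends in the correct segment, which gives $\hat K=K$.

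For localisation on $[\tilde s_k,\tilde e_k]$, the argmax cannot lie at $\tau=\eta_k+r$. The only way for the maximum score over those $r$ extra points to exceed $\mu_k$ in every cell is for at least one of them to land in $\mathcal{X}'_k$ with a score near one; no such cost is paid in the other direction. Comparing $\hat L_{\cdot,\tau,\cdot}$ with $\hat L_{\cdot,\eta_k,\cdot}$ cellwise then gives
\[
|\hat\eta_k-\eta_k|\lesssim \log^2 (n) \vee \log^2 (n)\big/\big(\operatorname{mes}(\mathcal{X}_{(k)})\log(1/\mu_k)\big).
\]
The extra $\log (n)$ factor relative to Theorem~\ref{theorem: multiple change detection} arises because the argmax is taken over all cells, including ones where no change occurs.

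\textbf{Main obstacle.} The hardest step is the last one. A competing cell $\mathcal{X}''\ne\mathcal{X}'_k$, possibly only partially inside $\mathcal{X}_{(k)}$, might have its argmax at a wrong $\tau$. We would first try bounding $\hat L^{\mathcal{X}''}_{\cdot,\tau,\cdot}$ for $\tau>\eta_k+r$ by its null-type fluctuation plus the pre-change signal. We would then show this is dominated by $\hat L^{\mathcal{X}'_k}_{\cdot,\eta_k,\cdot}$ minus a loss linear in $r\operatorname{mes}(\mathcal{X}_{(k)})\log(1/\mu_k)$, using the counting bounds of Step 1.
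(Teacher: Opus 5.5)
Your detection argument follows the paper: a grid cell $\mathcal{X}'_k\subseteq\mathcal{X}_{(k)}$ of comparable measure, cellwise count bounds, a signal $\gtrsim\operatorname{mes}(\mathcal{X}_{(k)})\delta_k\log(1/\mu_k)$ in that cell, and induction over the left-expanding search. Part (ii), however, is not proved. You name the competing cells as the main obstacle and propose to dominate them by a ``loss linear in $r\operatorname{mes}(\mathcal{X}_{(k)})\log(1/\mu_k)$''. For $\tau=\eta_k+r$ no such deterministic loss exists; that picture describes $\tau<\eta_k$. Adding post-change points only increases $N(\tilde s_k,\tau,\mathcal{X}'')$, and a cell's statistic falls only when a post-change point in that same cell lifts $\hat M$. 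Hitting $\mathcal{X}'_k$ alone, as in your Step 3 sentence, is not enough, because the argmax runs over all of $\mathfrak{X}$.

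The missing argument has three parts.
(a) Since $\hat R_t\ge R_t$ after $\eta_k$,
\[
\hat L^{\mathcal{X}''}_{\tilde s_k,\tau,\tilde e_k}\le -2N(\tilde s_k,\eta_k,\mathcal{X}'')\log M(\eta_k+1,\tau,\mathcal{X}'')-2N(\eta_k+1,\tau,\mathcal{X}'')\log M(\eta_k+1,\tau,\mathcal{X}'').
\]
This pairs the pre-change count with the post-change maximum, not with the pre-change signal as you suggest. The second term is a null statistic, at most $\lambda$ by Lemma~\ref{lemma: multiscale gird max bound}.
(b) The crude ratio $N(\tilde s_k,\eta_k,\mathcal{X}'')/N(\tilde s_k,\eta_k,\mathcal{X}'_k)\lesssim\operatorname{mes}(\mathcal{X}_{(k)})^{-1}$ holds. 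Hence the first term is at most $\frac12N(\tilde s_k,\eta_k,\mathcal{X}'_k)\log(1/\mu_k)$ once $-\log M(\eta_k+1,\tau,\mathcal{X}'')\le c\operatorname{mes}(\mathcal{X}_{(k)})\log(1/\mu_k)$. That bound is beaten by $\hat L^{\mathcal{X}'_k}_{\tilde s_k,\eta_k,\tilde e_k}\ge N(\tilde s_k,\eta_k,\mathcal{X}'_k)\log(1/\mu_k)$ under the threshold condition.
(c) You need a hitting event making the bound in (b) hold simultaneously for every $\mathcal{X}''\in\mathfrak{X}$ and all $\tau\ge\eta_k+\epsilon_{n,k}$. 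The smallest cells only have $\mathbb{P}(\boldsymbol{X}_t\in\mathcal{X}'')\gtrsim A_n^{-1}\gtrsim\log^{-1}(n)$. So their failure probability is about $\exp\{-c\,\epsilon_{n,k}\log^{-1}(n)\min(1,\operatorname{mes}(\mathcal{X}_{(k)})\log(1/\mu_k))\}$. Making it $O(n^{-2})$ forces $\epsilon_{n,k}\asymp\log^2(n)\vee\log^2(n)/(\operatorname{mes}(\mathcal{X}_{(k)})\log(1/\mu_k))$.

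This is where both the $\log^2(n)$ and the factor $(\operatorname{mes}(\mathcal{X}_{(k)})\log(1/\mu_k))^{-1}$ in (ii) come from. Your Bernstein length condition reflects the same small cell mass, but by itself it does not bound $|\hat\eta_k-\eta_k|$.

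Your Step 2 also contains two incorrect claims, though neither is needed.
\begin{itemize}
\item Assumption~\ref{assumption: size of regions}(iii) does not give $\hat R_t=R_t(1+o(1/\log n))$. It only bounds the relative FDH error on the trimmed region by a constant fraction of $\log(1/\mu_k)$. That is enough, since it yields $\hat R_t\le\mu_k^{1/2}$ for pre-change inputs in $\mathcal{X}_{(k)}$.
\item The claim ``$N\cdot o(1/\log n)$ stays bounded for $N\le n$'' is false. No absorption is needed on null stretches. All data in the current sample lie in $\Psi_{(k+1)}$, so $\hat f\le f_{(k+1)}$ and hence $R_t\le\hat R_t\le 1$ on the terminal segment. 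Every null statistic is therefore dominated by its oracle version, and Lemma~\ref{lemma: multiscale gird max bound} applies directly.
\end{itemize}
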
 

As before, we also investigate the global information theoretic lower bounds on the rate at which the change points in model \eqref{equation: sequence of production sets} can be localized under local changes in technology. To this end, we have the following result:

\begin{proposition}
Let $\left \{ \left ( \boldsymbol{X}_t, Y_t \right ) \mid t = 1, \dots, n \right \}$ be a sample from \eqref{equation: data generating process} with production frontier functions satisfying Assumption~\ref{assumption: production frontier function continuous} and stochastic components satisfying Assumptions~\ref{assumption: Z is subset of hypercube},~\ref{assumption: production set is free disposal},~\ref{assumption: X and R distribution}  having a single change point at location $\eta$, effective sample size $\delta = \min \left ( \eta, n - \eta \right )$, and change size $\mu$ occurring on a region $\mathcal{X}' \subseteq \mathcal{X}$ of the input space. Let $P_{n, \delta, \mu, \mathcal{X}'}$ denote the joint law of the data and consider the class
\begin{equation*}
\mathcal{Q}_n = \left \{ P_{n, \delta, \mu, \mathcal{X}'} \mid \delta < n / 2 \text{ and } \delta \operatorname{mes} \left ( \mathcal{X}' \right ) \log \left ( 1 / \mu \right ) > \zeta_n \right \}
\end{equation*}
for any sequence $\left \{ \zeta_n \mid n > 0 \right \}$ such that $\lim_{n \rightarrow \infty} \zeta_n = \infty$, and where $\operatorname{mes} \left ( \mathcal{A} \right )$ denotes the Lebesgue measure of a set $\mathcal{A}$. Then, for all $n$ sufficiently large it holds that
\begin{equation*}
\inf_{\hat{\eta}} \sup_{P \in \mathcal{Q}_n} \mathbb{E}_{P} \left [ \left | \hat{\eta} - \eta \right | \right ] \geq \frac{C}{\operatorname{mes} \left ( \mathcal{X}' \right ) \log (1/\mu)} 
\end{equation*}
where the infimum is over all measurable functions of the data, $\eta (P)$ denotes the change point location of $P \in \mathcal{Q}_n$, and $C > 0$ is an absolute constant depending on the constants in Assumption~\ref{assumption: production frontier function continuous} and Assumptions~\ref{assumption: Z is subset of hypercube},~\ref{assumption: production set is free disposal},~\ref{assumption: X and R distribution}. 
\label{lemma: local change minimax lower bound}
\end{proposition}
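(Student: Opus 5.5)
The plan is a two-point Le Cam argument in the style of the proof of Proposition~\ref{lemma: global change minimax lower bound}. The only new ingredient is that the change is confined to the region $\mathcal{X}'$, so each observation near the change point carries information only with probability of order $\operatorname{mes}(\mathcal{X}')$. Throughout I fix a concrete data generating process satisfying Assumptions~\ref{assumption: Z is subset of hypercube}--\ref{assumption: production frontier function continuous}:
\begin{itemize}
\item $\boldsymbol{X}_t$ i.i.d.\ uniform on $\mathcal{X}$, so that $\mathbb{P}(\boldsymbol{X}_t \in \mathcal{X}') \asymp \operatorname{mes}(\mathcal{X}')$ with constants depending on $\overline{x}_j$;
\item $R_t$ i.i.d.\ uniform on $[0,1]$, independent of $\boldsymbol{X}_t$;
\item two bilipschitz frontiers $f_{(1)} \le f_{(2)}$ that coincide off $\mathcal{X}'$ and satisfy $f_{(1)}/f_{(2)} = \mu$ on $\mathcal{X}'$.
\end{itemize}
A literally piecewise-defined ratio would break continuity at $\partial \mathcal{X}'$. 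I would therefore let the ratio interpolate between $\mu$ and $1$ on a thin shell around $\mathcal{X}'$, or just inside it, and absorb this into the constant $C$. Bilipschitz continuity with a fixed $L$ forces $\mu$ to be bounded below in terms of $\operatorname{diam}(\mathcal{X}')$, but this is harmless because the class only needs two members.

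\textbf{Two hypotheses.} Let $P_0$ have the change at $\eta$ and $P_1$ at $\eta + \Delta$, with $\Delta \asymp 1/(\operatorname{mes}(\mathcal{X}')\log(1/\mu))$. Both lie in $\mathcal{Q}_n$ for $n$ large, because $\delta\,\operatorname{mes}(\mathcal{X}')\log(1/\mu) > \zeta_n$ leaves room to shift by $\Delta \ll \delta$. The two laws differ only at $t \in \{\eta+1, \dots, \eta+\Delta\}$: under $P_0$ these observations follow $f_{(2)}$, under $P_1$ they follow $f_{(1)}$. By Le Cam's bound,
\[
\inf_{\hat\eta}\max_{i}\mathbb{E}_{P_i}|\hat\eta-\eta_i| \ \ge\ \frac{\Delta}{2}\bigl(1-\mathrm{TV}(P_0,P_1)\bigr).
\]
Equivalently, I would use $\tfrac{\Delta}{4}\exp(-\mathrm{KL})$ via Bretagnolle--Huber, which matches the $e^{-2}$ of the earlier proposition.

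\textbf{The main obstacle: the divergence.} The single-observation laws have nested, non-equal supports, so $\mathrm{KL}(P_1^{(t)}\,\|\,P_0^{(t)})$ is the finite direction and the reverse is infinite. Conditional on $\boldsymbol{X}_t = \boldsymbol{x} \in \mathcal{X}'$, $Y_t$ is uniform on $[0,\mu f_{(2)}(\boldsymbol{x})]$ versus uniform on $[0,f_{(2)}(\boldsymbol{x})]$, giving KL $=\log(1/\mu)$. Off $\mathcal{X}'$ the conditional laws are identical; on the interpolation shell the KL is at most $\log(1/\mu)$ times a controlled measure. Hence the per-observation KL is at most $C'\operatorname{mes}(\mathcal{X}')\log(1/\mu)$, and by tensorization
\[
\mathrm{KL}(P_1\,\|\,P_0) \le C'\Delta\,\operatorname{mes}(\mathcal{X}')\log(1/\mu).
\]
Choosing $\Delta = \lfloor 1/(C'\operatorname{mes}(\mathcal{X}')\log(1/\mu)) \rfloor$ makes this at most $1$. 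Since $\zeta_n \to \infty$ ensures $\Delta \ge 1$ is admissible, we get
\[
\inf_{\hat\eta}\sup_{P \in \mathcal{Q}_n}\mathbb{E}_P|\hat\eta-\eta| \ \ge\ \frac{\Delta}{4}e^{-1} \ \ge\ \frac{C}{\operatorname{mes}(\mathcal{X}')\log(1/\mu)}.
\]
Only the one-sided KL is needed in Bretagnolle--Huber, so the infinite reverse divergence causes no problem. The genuinely delicate step is the frontier construction: it must stay bilipschitz and free disposal while the ratio equals $\mu$ on $\mathcal{X}'$ and $1$ elsewhere, and the transition shell must not inflate the effective measure beyond a constant multiple of $\operatorname{mes}(\mathcal{X}')$. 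I would try $f_{(2)}(\boldsymbol{x}) = \|\boldsymbol{x}\|_1$ scaled, and $f_{(1)} = f_{(2)}\cdot g$ with $g$ a Lipschitz monotone bump. If monotonicity fails, I would instead place $\mathcal{X}'$ in the upper corner of $\mathcal{X}$ so that free disposal is preserved.
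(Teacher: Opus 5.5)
Your skeleton matches the paper's proof. It uses Le Cam's two-point bound with $1-\mathrm{TV}\ge\frac12 e^{-\mathrm{KL}}$ in the finite direction, tensorizes over the observations between the two candidate change points, and bounds the per-observation KL by $\log(1/\mu)$ times the input mass of $\mathcal{X}'$ plus an interpolation region of comparable measure (the paper's $\mathcal{X}''$ with $\operatorname{mes}(\mathcal{X}'')\le 2\operatorname{mes}(\mathcal{X}')$). The shift is of order $(\operatorname{mes}(\mathcal{X}')\log(1/\mu))^{-1}$. The paper simply posits the interpolating piece and lets the bilipschitz constant become an unspecified $K$. You rightly single this out as the delicate step, but your way of resolving it does not work.

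Since $f_{(2)}-f_{(1)}$ is $2L$-Lipschitz and equals $(1-\mu)f_{(2)}(\boldsymbol{x})$ at $\boldsymbol{x}\in\mathcal{X}'$, every point where the two frontiers agree lies at distance at least $(1-\mu)f_{(2)}(\boldsymbol{x})/(2L)$ from $\boldsymbol{x}$. Unless $\mathcal{X}'$ sits near the origin, $f_{(2)}$ is of order one on it, because $f_{(2)}(\boldsymbol{x})\ge\|\boldsymbol{x}\|_2/L$. Then the transition region has width of order $(1-\mu)/L$, independent of $\operatorname{mes}(\mathcal{X}')$. The per-observation KL is then of constant order rather than of order $\operatorname{mes}(\mathcal{X}')\log(1/\mu)$, and the factor $1/\operatorname{mes}(\mathcal{X}')$ is lost.

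Your escape routes do not help:
\begin{itemize}
\item Moving the transition inside $\mathcal{X}'$ only shrinks the set where the ratio actually equals $\mu$.
\item Restricting $1-\mu\lesssim L\operatorname{diam}(\mathcal{X}')$ is not harmless: the two hypotheses must carry the prescribed $\mu$ for the bound to be stated in terms of it.
\item The upper-corner fallback goes the wrong way. With $f_{(1)}=f_{(2)}\,g$ and $g$ dropping to $\mu$ as one enters an upper-corner $\mathcal{X}'$, monotonicity of $f_{(1)}$ forces $\log f_{(2)}$ to rise by $\log(1/\mu)$ across the transition. That again requires a constant-width shell.
\end{itemize}

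What works is the lower corner, where $f_{(2)}$ is small. Take $\mathcal{X}'=[0,\chi]^d$ and $g(\boldsymbol{x})=\mu+(1-\mu)\min\{1,(\|\boldsymbol{x}\|_\infty/\chi-1)_+\}$.
\begin{itemize}
\item This $g$ is coordinatewise nondecreasing, so $f_{(1)}=f_{(2)}\,g\le f_{(2)}$ is free disposal.
\item On the shell $[0,2\chi]^d\setminus[0,\chi]^d$ one has $f_{(2)}\le 2\sqrt{d}\,L\chi$ because $f_{(2)}(\boldsymbol{0})=0$. Hence $\|\nabla f_{(1)}\|_2\le L(1+2\sqrt{d}(1-\mu))$ uniformly in $\chi$.
\item With uniform $R_t$ the per-observation KL equals $\mathbb{E}\log(1/g(\boldsymbol{X}))\le 2^d C_{2,X}\operatorname{mes}(\mathcal{X}')\log(1/\mu)$, which is exactly what your argument needs.
\end{itemize}

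A second, smaller error concerns the range of $\Delta$. The condition $\zeta_n\to\infty$ is what gives $\Delta\le n-\eta-1$, i.e.\ room to shift; it does not give $\Delta\ge 1$. The latter needs $C'\operatorname{mes}(\mathcal{X}')\log(1/\mu)\le 1$, which is not among the hypotheses. When $\operatorname{mes}(\mathcal{X}')\log(1/\mu)\to\infty$, your KL route with $\Delta=1$ only yields $\frac14\mu^{C'\operatorname{mes}(\mathcal{X}')}$, far below the target.

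The paper's choice $\tau=\lceil(\operatorname{mes}(\mathcal{X}')\log(1/\mu))^{-1}\rceil$ hides the same restriction: its exponential factor is a constant only when $\operatorname{mes}(\mathcal{X}')\log(1/\mu)=O(1)$. Either state the restriction as an assumption, or treat that regime separately with $\Delta=1$ and the bound $1-\mathrm{TV}\ge\frac12(\int\sqrt{p_0p_1})^2$. The per-observation affinity is $\mathbb{E}\sqrt{g(\boldsymbol{X})}\ge 1-2^dC_{2,X}\operatorname{mes}(\mathcal{X}')$, which stays bounded away from zero for small $\operatorname{mes}(\mathcal{X}')$ however small $\mu$ is.
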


In light of Proposition~\ref{lemma: local change minimax lower bound}, we conclude that the rate at which change points are able to be localized by the procedure described in Section~\ref{section: local change methodology} is minimax optimal up to a log factor.

\section{Simulation studies} \label{section: simulation studies}

In this section, we assess the performance of our proposed methodology on several synthetic data examples. We compare the proposed method with existing approaches from the nonparametric change point detection literature, since one conceivable alternative for detecting changes in the production frontier function in a nonparametric fashion would involve looking for changes in the joint distribution of inputs and outputs.  Specifically, we compare the proposed approach with the moving sum algorithm based on distances between joint characteristic functions (MOJO) proposed by \cite{mcgonigle2025nonparametric}, the E-divisive (ECP) procedure of \cite{matteson2014nonparametric}, and Kernel change point detection (KCP) procedure of \cite{celisse2018new} and \cite{arlot2019kernel} as implemented by \cite{cabrieto2022r}. In the sequel we refer to our methodology as FCP (\underline{f}rontier \underline{c}hange \underline{p}oint detection) for short. MOJO, ECP, and KCP are calibrated via bootstrap, and in each case we set the number of bootstrap replications to $100$. For FCP, we set the threshold parameter to be $\lambda = \log^2(n)$, where $n$ is the number of data points observed, and the trimming parameter $\boldsymbol{x}_0$ to be the coordinate-wise $\alpha$-th quantile of the inputs using $\alpha = 0.1$. All of the results in this section can be reproduced using the \texttt{R} code available at \url{https://anonymous.4open.science/r/frontier-change-num-ex-978B/}

\subsection{Simulation setup} \label{section: simulation studies M}

We simulate $n = 1000$ data points from model \eqref{equation: data generating process} with $K \in \{ 2, 3, 4 \}$ change points being equidistant (subject to rounding) from each other and from the start and end time at $t=1$ and $t=n$. The production frontier function on each stationary segment is given by $f_{(k)} (\cdot) = 1.75 \times f_{(k-1)} (\cdot)$ for each $k = 2, \dots, K +1$ (i.e. corresponding to $\mu_k = 1/1.75 \approx 0.57$) and with $\boldsymbol{x} = (x_1, \dots, x_d)^\top \in \mathcal{X}$ the baseline production frontier function $f_{(1)} (\boldsymbol{x})$ is chosen as follows: 
\begin{enumerate}[(M1)]
    \item Constant: $f (\boldsymbol{x}) = A$.
    \item Additive: $f (\boldsymbol{x}) = A + \sum_{i=1}^d \alpha_i x_i $.
    \item Cobb–Douglas: $f (\boldsymbol{x}) = A \prod_{i=1}^d x_i^{\alpha_i} $. 
    \item Logistic: $f (\boldsymbol{x}) = e^{A \left (\sum_{i=1}^d \alpha_i x_i -B \right )} \times \left ( 1 + e^{A \left (\sum_{i=1}^d \alpha_i x_i -B \right )} \right )^{-1}$. 
\end{enumerate}
The values of $A$, $B$, and $\alpha_i$'s are reported in Table~\ref{table: parameter values} below.

\begin{table}[!ht]
\centering
\caption{Parameter values set for models (M1) - (M5) used in Section~\ref{section: simulation studies M} and Section~\ref{section: simulation studies L}}
\begin{tabular}{|l|c c c c c |}
\hline
\multirow{2}{*}{Model} & \multicolumn{5}{c|}{Parameter values} \\
 & A & B & $\omega$ & $\alpha_1$ & $\alpha_2$ \\
\hline
(M1) Constant & $1.000$ & \texttt{NA} & \texttt{NA} & \texttt{NA} & \texttt{NA} \\
(M2) Additive & $3.000$ & \texttt{NA} & \texttt{NA} & $3.000$ & $3.000$ \\
(M3) Cobb-Douglas & $1.000$ & \texttt{NA} & \texttt{NA} & $0.300$ & $0.300$ \\
(M4) Logistic & $4.000$ & $0.500$ & \texttt{NA} & $1.000$ & $1.000$ \\
(M5) Piecewise Linear; \emph{d = 1} & 1.000 & -0.875 & 1.250 & 1.500 & \texttt{NA} \\
(M5) Piecewise Linear; \emph{d = 2} & 1.000 & -2.750 & 2.500 & 1.500 & 1.500 \\
\hline
\end{tabular}
\label{table: parameter values}
\end{table}

We consider inputs with dimensions $d \in \{1,2\}$. For each $t = 1, \dots, n$, the inputs are mutually independent uniform random variables with support $[1,2]$, and the efficiency scores are drawn from one of the following distributions: 
\begin{enumerate}[(R1)]
    \item Uniform distribution with support $[0,1]$ for all $t$. 
    \item Truncated Normal distribution $N(\mu_t,\sigma^2)$ on $[0,1]$ with $\mu_t = 0.5 + t/n$ and $\sigma^2 = 0.1$. 
    \item Truncated Normal distribution $N(\mu_t,\sigma^2)$ on $[0,1]$ with  $\mu_t = 1.5 - t / n$ and $\sigma^2 = 0.1$.
\end{enumerate}
Note that simulations with efficiency scores distributed as (R1) model the scenario where the level of technology increases over time while efficiency level remains constant, simulations with efficiency scores distributed as (R2) model the scenario where both technology and efficiency increase over time, and simulations with efficiency scores distributed as (R3) model the scenario where the level of technology increases over time while the level of efficiency decreases. 

\subsection{Simulation results}

We simulate data from each combination of production frontier function and efficiency score distribution $500$ times, and on each iteration, record the Hausdorff distance between the change points recovered by each change point detection algorithm and the truth, as well as the absolute difference between the number of estimated and true change points by each algorithm. The results of the simulation study are presented in Table~\ref{table: d1 simulation study} for the case $d = 1$ and in Table~\ref{table: d2 simulation study} for the case $d = 2$. 


We observe that in the case of one-dimensional inputs, FCP uniformly outperforms all off-the-shelf nonparametric change point detection methods which we compare to. In the case of two-dimensional inputs, FCP remains the best performing method in terms of detecting all changes present in the data; however, the localization rate is occasionally slightly worse than that of ECP as evidenced by the Hausdorff distance measurement.  We believe that this is due to the fully nonparametric nature of the FDH. In practice, if one is willing to impose additional continuity or shape constraints, the localization rate can be improved by smoothing or post-processing the FDH estimator. For instance, one may replace the FDH estimator with the data envelopment analysis estimator of \cite{charnes1978measuring}, which corresponds to taking the convex hull of the FDH estimator and offers better finite sample performance at the cost of the additional concavity assumption on the shape of the frontier. Finally, we remark that here all the competitors are based on detecting changes nonparametrically in the distributions, so tend to over-estimate the number of changes in (R2) and (R3), in which the efficiency score distributions vary slowly. Their performance would also deteriorate significantly in the settings where there is no change in the frontiers, but with  pronounced changes in the joint distribution of the inputs and outputs.

\subsection{Simulation studies under Scenario~\ref{scenario: local change jump sizes}} \label{section: simulation studies L}

Here we present simulation studies under Scenario~\ref{scenario: local change jump sizes}, in which the production frontier function only changes over certain localized regions of the input space. We investigate the performance of the proposed extension in Section~\ref{section: local change methodology} for dealing with local changes in technology, which we will refer to as MS-FCP (\underline{m}ulti-\underline{s}cale \underline{f}rontier \underline{c}hange \underline{p}oint detection), for short. We  additionally choose the distribution of the efficiency scores to illustrate how off-the-shelf nonparametric change point detection procedures can again fail at the task of detecting changes in production frontiers.

We simulate $n = 1000$ data points from model \eqref{equation: data generating process} with $K = 2$ change points being  equidistant (subject to rounding) from each other and from $t=1$ and $t=1000$.  We again consider $d \in \{ 1, 2 \}$ dimensional inputs simulated as in Section~\ref{section: simulation studies M}. The production frontier function is given by the constant function (M1) on the first stationary segment, by (M5) detailed below on the second stationary segment
\begin{enumerate}[(M5)]
    \item Piecewise Linear: $f (\boldsymbol{x}) = \begin{cases}
A & \text{ if } \left \| \boldsymbol{x} \right \|_1 < \omega  \\
B + \sum_{j=1}^d \alpha_j x_j & \text{ otherwise}
\end{cases}$,
\end{enumerate}
and by $f_{(3)}(\cdot) = 1.75 \times f_{(2)}(\cdot)$ on the third stationary segment. For (M5), the values of $A$, $B$, $\omega$, and $\alpha_i$ are reported in Table~\ref{table: parameter values} in Section~\ref{section: simulation studies M}. Note that the first change point is local in the sense of Scenario~\ref{scenario: local change jump sizes}, as efficiency gains are only realized for inputs having $\ell_1$ norm greater than or equal to $\omega$, whereas the second change is global and satisfies Scenario~\ref{scenario: multiple changes}. The efficiency scores are drawn according to the following distribution: 
\begin{enumerate}[(R4)]
\item Uniform distribution on $[0,1]$ if $t \leq n / 2$, and a mixture of uniform distributions on $[0,1]$ and $[0.8,1]$ with equal weights of a half if $t > n/2$. 
\end{enumerate}
Therefore, the joint distribution of the inputs and outputs changes at times $\left \lfloor n / 3 \right \rfloor$, $\left \lfloor n / 2 \right \rfloor$, and $\left \lfloor 2n / 3 \right \rfloor$ whereas the production frontier function only changes locally at time $\left \lfloor n / 3 \right \rfloor$, and globally at time $\left \lfloor 2n / 3 \right \rfloor$.

\begin{table}[!hb]
\centering
\caption{Performance of FCP, MOJO, ECP, and KCP on simulated data with $d=1$ dimensional inputs, over $500$ replications. See Section~\ref{section: simulation studies M} for details. In the Setup column: $K$ represents the number of change points in the data and $F_R$ represents the distribution of the efficiency scores. For each production frontier function (Constant, Additive, Cobb-Douglas, or Logistic): the quantity $d_H (\Theta, \hat{\Theta})$ represents the average Hausdorff distance between the estimated set of change points and the truth, and the quantity $d (K,\hat{K})$ represents the average absolute difference between the true number of change points and the estimated number of change points. Lowest values are in bold.}
\resizebox{\columnwidth}{!}{%
\begin{tabular}{|c|l|rc|rc|rc|rc|}
  \hline
  \multirow{2}{*}{Setup} & \multirow{2}{*}{Method} & \multicolumn{2}{c|}{Constant} & \multicolumn{2}{c|}{Additive} & \multicolumn{2}{c|}{Cobb–Douglas} & \multicolumn{2}{c|}{Logistic} \\
& & $d_H (\Theta, \hat{\Theta})$ & $d(K,\hat{K})$ & $d_H (\Theta, \hat{\Theta})$ & $d(K,\hat{K})$ & $d_H (\Theta, \hat{\Theta})$ & $d(K,\hat{K})$ & $d_H (\Theta, \hat{\Theta})$ & $d(K,\hat{K})$ \\ 
\hline 
& FCP & \textbf{2.74} & \textbf{0.00} & \textbf{2.80} & \textbf{0.00} & \textbf{2.76} & \textbf{0.00} & \textbf{2.72} & \textbf{0.00} \\ 
$K = 2$ & MOJO & 278.11 & 1.01 & 309.02 & 1.16 & 287.48 & 1.03 & 279.59 & 0.99 \\ 
$F_R = \text{(R1)}$ & ECP & 11.75 & 0.09 & 12.68 & 0.09 & 11.85 & 0.08 & 11.16 & 0.07 \\ 
& KCP & 326.04 & 0.99 & 326.38 & 0.99 & 326.58 & 0.99 & 326.67 & 0.99 \\ 
\hline
& FCP & \textbf{0.75} & \textbf{0.00} & \textbf{0.81} & \textbf{0.00} & \textbf{0.76} & \textbf{0.00} & \textbf{0.76} & \textbf{0.00} \\ 
$K = 2$ & MOJO & 1.98 & 0.00 & 7.15 & 0.03 & 1.86 & 0.00 & 1.85 & 0.00 \\ 
$F_R = \text{(R2)}$ & ECP & 71.00 & 0.83 & 48.95 & 0.54 & 64.45 & 0.72 & 68.27 & 0.79 \\ 
& KCP & 330.02 & 1.00 & 329.91 & 1.00 & 329.98 & 1.00 & 330.01 & 1.00 \\ 
\hline
& FCP & \textbf{0.76} & \textbf{0.00} & \textbf{0.79} & \textbf{0.00} & \textbf{0.79} & \textbf{0.00} & \textbf{0.78} & \textbf{0.00} \\ 
$K = 2$ & MOJO & 155.61 & 0.46 & 255.30 & 0.77 & 174.19 & 0.52 & 157.13 & 0.47 \\ 
$F_R = \text{(R3)}$ & ECP & 95.38 & 0.93 & 93.88 & 0.92 & 94.77 & 0.91 & 98.21 & 0.94 \\ 
& KCP & 158.19 & 0.48 & 168.10 & 0.51 & 158.94 & 0.48 & 161.50 & 0.49 \\ 
\hline
& FCP & \textbf{3.40} & \textbf{0.00} & \textbf{3.46} & \textbf{0.00} & \textbf{3.47} & \textbf{0.00} & \textbf{3.49} & \textbf{0.00} \\ 
$K = 3$ & MOJO & 262.44 & 1.55 & 269.56 & 1.72 & 266.87 & 1.61 & 262.13 & 1.56 \\ 
$F_R = \text{(R1)}$ & ECP & 17.50 & 0.11 & 16.35 & 0.08 & 17.65 & 0.11 & 16.91 & 0.10 \\ 
& KCP & 249.05 & 1.93 & 249.12 & 1.94 & 249.09 & 1.93 & 249.10 & 1.93 \\ 
\hline
& FCP & \textbf{1.15} & \textbf{0.00} & \textbf{1.21} & \textbf{0.00} & \textbf{1.18} & \textbf{0.00} & \textbf{1.21} & \textbf{0.00} \\ 
$K = 3$ & MOJO & 6.02 & 0.02 & 94.53 & 0.38 & 15.17 & 0.06 & 4.58 & 0.01 \\ 
$F_R = \text{(R2)}$ & ECP & 18.12 & 0.22 & 13.99 & 0.18 & 16.43 & 0.21 & 17.74 & 0.23 \\ 
& KCP & 248.84 & 1.93 & 248.76 & 1.91 & 248.78 & 1.92 & 248.80 & 1.92 \\ 
\hline
& FCP & \textbf{1.07} & \textbf{0.00} & \textbf{1.18} & \textbf{0.00} & \textbf{1.17} & \textbf{0.00} & \textbf{1.21} & \textbf{0.00} \\ 
$K = 3$ & MOJO & 248.52 & 1.01 & 248.76 & 1.05 & 247.95 & 1.01 & 248.47 & 1.01 \\ 
$F_R = \text{(R3)}$ & ECP & 40.05 & 0.54 & 36.99 & 0.48 & 39.33 & 0.52 & 38.76 & 0.51 \\ 
& KCP & 249.43 & 2.00 & 249.43 & 2.00 & 249.44 & 2.00 & 249.44 & 2.00 \\ 
\hline
& FCP & \textbf{4.56} & \textbf{0.00} & \textbf{4.73} & \textbf{0.00} & \textbf{4.66} & \textbf{0.00} & \textbf{4.73} & \textbf{0.00} \\ 
$K = 4$ & MOJO & 235.64 & 2.25 & 254.02 & 2.38 & 239.42 & 2.25 & 237.88 & 2.25 \\ 
$F_R = \text{(R1)}$ & ECP & 117.99 & 0.60 & 109.08 & 0.55 & 120.80 & 0.60 & 119.33 & 0.60 \\ 
& KCP & 215.81 & 2.95 & 213.24 & 2.93 & 214.84 & 2.94 & 215.11 & 2.95 \\ 
\hline
& FCP & \textbf{1.43} & \textbf{0.00} & \textbf{1.51} & \textbf{0.00} & \textbf{1.49} & \textbf{0.00} & \textbf{1.56} & \textbf{0.00} \\ 
$K = 4$ & MOJO & 90.10 & 0.45 & 191.23 & 1.00 & 149.69 & 0.75 & 106.83 & 0.54 \\ 
$F_R = \text{(R2)}$ & ECP & 9.22 & 0.10 & 8.50 & 0.09 & 9.80 & 0.12 & 9.04 & 0.10 \\ 
& KCP & 202.62 & 2.92 & 202.69 & 2.91 & 202.61 & 2.92 & 202.66 & 2.93 \\ 
\hline
& FCP & \textbf{1.50} & \textbf{0.00} & \textbf{1.55} & \textbf{0.00} & \textbf{1.47} & \textbf{0.00} & \textbf{1.61} & \textbf{0.00} \\ 
$K = 4$ & MOJO & 199.85 & 1.68 & 199.97 & 1.82 & 199.81 & 1.69 & 199.84 & 1.67 \\ 
$F_R = \text{(R3)}$ & ECP & 22.72 & 0.32 & 19.85 & 0.27 & 23.51 & 0.33 & 22.98 & 0.32 \\ 
& KCP & 200.72 & 3.00 & 200.77 & 3.00 & 200.72 & 3.00 & 200.72 & 3.00 \\ 
\hline
\end{tabular}
}
\label{table: d1 simulation study}
\end{table}
\clearpage

\begin{table}[!ht]
\centering
\caption{Performance of FCP, MOJO, ECP, and KCP on simulated data with $d=2$ dimensional inputs, over $500$ replications. See Section~\ref{section: simulation studies M} for details. In the Setup column: $K$ represents the number of change points in the data and $F_R$ represents the distribution of the efficiency scores. Here $d_H (\Theta, \hat{\Theta})$ represents the average Hausdorff distance between the estimated set of change points and the truth, and the quantity $d (K,\hat{K})$ represents the average absolute difference between the true and estimated number of change points. Lowest values are in bold.}

\resizebox{\columnwidth}{!}{%
\begin{tabular}{|c|l|rc|rc|rc|rc|}
  \hline
  \multirow{2}{*}{Setup} & \multirow{2}{*}{Method} & \multicolumn{2}{c|}{Constant} & \multicolumn{2}{c|}{Additive} & \multicolumn{2}{c|}{Cobb–Douglas} & \multicolumn{2}{c|}{Logistic} \\
& & $d_H (\Theta, \hat{\Theta})$ & $d(K,\hat{K})$ & $d_H (\Theta, \hat{\Theta})$ & $d(K,\hat{K})$ & $d_H (\Theta, \hat{\Theta})$ & $d(K,\hat{K})$ & $d_H (\Theta, \hat{\Theta})$ & $d(K,\hat{K})$ \\ 
  \hline
 & FCP & \textbf{6.09} & \textbf{0.00} & \textbf{9.37} & \textbf{0.02} & \textbf{8.44} & \textbf{0.01} & \textbf{6.15} & \textbf{0.00} \\ 
$K=2$ & MOJO & 251.31 & 0.89 & 302.15 & 1.10 & 278.59 & 1.02 & 246.01 & 0.89 \\ 
$F_R = \text{(R1)}$ & ECP & 11.12 & 0.07 & 12.06 & 0.08 & 12.42 & 0.08 & 11.33 & 0.07 \\ 
 & KCP & 327.53 & 1.00 & 327.44 & 1.00 & 327.46 & 1.00 & 327.53 & 1.00 \\ 
\hline
 & FCP & \textbf{1.34} & \textbf{0.00} & \textbf{2.28} & \textbf{0.00} & \textbf{2.20} & \textbf{0.00} & \textbf{1.35} & \textbf{0.00} \\ 
$K=2$ & MOJO & 1.94 & 0.01 & 4.13 & 0.02 & 2.00 & 0.01 & 1.94 & 0.01 \\ 
$F_R = \text{(R2)}$ & ECP & 73.69 & 0.84 & 55.77 & 0.60 & 57.94 & 0.61 & 68.56 & 0.78 \\ 
 & KCP & 329.92 & 1.00 & 329.95 & 1.00 & 329.96 & 1.00 & 329.92 & 1.00 \\ 
\hline
 & FCP & \textbf{2.69} & \textbf{0.00} & \textbf{4.16} & \textbf{0.00} & \textbf{3.93} & \textbf{0.00} & \textbf{2.68} & \textbf{0.00} \\ 
 $K=2$ & MOJO & 77.86 & 0.23 & 253.32 & 0.76 & 128.32 & 0.38 & 73.21 & 0.22 \\ 
 $F_R = \text{(R3)}$ & ECP & 103.50 & 0.94 & 101.89 & 0.96 & 101.27 & 0.95 & 102.90 & 0.95 \\ 
 & KCP & 205.61 & 0.62 & 215.98 & 0.65 & 211.43 & 0.64 & 205.61 & 0.62 \\ 
\hline
 & FCP & \textbf{11.00} & \textbf{0.00} & 21.03 & \textbf{0.02} & 17.90 & \textbf{0.01} & \textbf{11.19} & \textbf{0.00} \\ 
 $K=3$ & MOJO & 266.66 & 1.60 & 276.76 & 1.76 & 273.45 & 1.68 & 267.67 & 1.62 \\ 
 $F_R = \text{(R1)}$ & ECP & 15.62 & 0.08 & \textbf{17.19} & 0.08 & \textbf{16.27} & 0.07 & 16.58 & 0.09 \\ 
 & KCP & 249.45 & 1.99 & 249.39 & 1.99 & 249.39 & 1.98 & 249.42 & 1.99 \\ 
\hline
 & FCP & \textbf{3.50} & \textbf{0.00} & \textbf{5.32} & \textbf{0.00} & \textbf{5.01} & \textbf{0.00} & \textbf{3.50} & \textbf{0.00} \\ 
 $K = 3$ & MOJO & 4.20 & 0.01 & 80.47 & 0.32 & 13.88 & 0.05 & 3.81 & 0.01 \\ 
 $F_R = \text{(R2)}$ & ECP & 15.33 & 0.19 & 15.73 & 0.19 & 14.48 & 0.18 & 16.33 & 0.20 \\ 
 & KCP & 248.98 & 1.99 & 248.95 & 1.97 & 248.97 & 1.98 & 248.98 & 1.99 \\ 
\hline
 & FCP & \textbf{7.65} & \textbf{0.00} & \textbf{13.41} & \textbf{0.01} & \textbf{11.43} & \textbf{0.01} & \textbf{7.68} & \textbf{0.00} \\ 
 $K = 3$ & MOJO & 248.46 & 1.00 & 248.78 & 1.05 & 248.83 & 1.01 & 247.98 & 1.00 \\ 
 $F_R = \text{(R3)}$ & ECP & 42.72 & 0.54 & 41.67 & 0.52 & 43.03 & 0.55 & 43.41 & 0.54 \\ 
 & KCP & 249.39 & 2.00 & 249.40 & 2.00 & 249.40 & 2.00 & 249.39 & 2.00 \\ 
\hline
 & FCP & \textbf{25.45} & \textbf{0.04} & \textbf{57.97} & \textbf{0.23} & \textbf{49.34} & \textbf{0.18} & \textbf{25.60} & \textbf{0.04} \\ 
 $K = 4$ & MOJO & 229.84 & 2.26 & 253.83 & 2.38 & 245.35 & 2.37 & 229.32 & 2.27 \\ 
 $F_R = \text{(R1)}$ & ECP & 144.64 & 0.73 & 111.21 & 0.57 & 137.97 & 0.70 & 148.08 & 0.74 \\ 
 & KCP & 229.40 & 2.98 & 229.83 & 2.98 & 230.26 & 2.98 & 229.40 & 2.98 \\ 
\hline
 & FCP & \textbf{8.67} & \textbf{0.01} & 16.29 & \textbf{0.03} & 13.61 & \textbf{0.01} & \textbf{8.70} & \textbf{0.01} \\ 
 $K = 4$ & MOJO & 71.11 & 0.36 & 188.68 & 0.97 & 163.02 & 0.82 & 69.19 & 0.35 \\ 
 $F_R = \text{(R2)}$ & ECP & 11.18 & 0.14 & \textbf{8.56} & 0.09 & \textbf{8.89} & 0.09 & 10.38 & 0.13 \\ 
 & KCP & 205.02 & 2.98 & 206.15 & 2.95 & 205.40 & 2.96 & 205.01 & 2.97 \\ 
\hline
 & FCP & \textbf{17.53} & \textbf{0.04} & 40.06 & \textbf{0.16} & 34.57 & \textbf{0.13} & \textbf{17.57} & \textbf{0.04} \\ 
 $K = 4$ & MOJO & 199.65 & 1.52 & 199.98 & 1.81 & 199.71 & 1.63 & 199.73 & 1.52 \\ 
 $F_R = \text{(R3)}$& ECP & 22.52 & 0.32 & \textbf{22.70} & 0.33 & \textbf{22.16} & 0.32 & 23.67 & 0.33 \\ 
 & KCP & 200.84 & 3.00 & 200.93 & 3.00 & 200.91 & 2.99 & 200.83 & 3.00 \\ 
\hline
\end{tabular}
}
\label{table: d2 simulation study}
\end{table}
\clearpage

We repeat the simulation described above $500$ times. For MS-FCP we set the threshold and trimming parameter as was done for FCP in Section~\ref{section: simulation studies M}, and additionally set $\underline{x}A_n^{1/d} = 4$. The results are reported in Table~\ref{table: local change simualtion}. MS-FCP uniformly outperforms the competing methods with both one- and two-dimensional inputs. Unsurprisingly, the previous FCP method is able to detect the second change, which satisfies Scenario~\ref{scenario: multiple changes}, but generally cannot detect the first change, which is local. Finally, as evidenced by the relatively high average Hausdorff distances and the absolute difference between the estimated and true number of changes, the competing nonparametric change point detection methods tend to spuriously detect the change in the distribution of the efficiency scores at $\left \lfloor n / 2 \right \rfloor$, which is not of interest for the problem being studied here.

\begin{table}[!htbp]
\centering
\caption{
Performance of MS-FCP, FCP, MOJO, ECP, and KCP on simulated data with $d=1$ and $d=2$ dimensional inputs, over $500$ replications. See Section~\ref{section: simulation studies L} for details. The quantity $d_H (\Theta, \hat{\Theta})$ represents the average Hausdorff distance between the estimated set of change points and the truth, and the quantity $d (K,\hat{K})$ represents the average absolute difference between the true number of change points and the estimated number of change points. Lowest values are in bold.}
\begin{tabular}{|l|rr|rr|}
\hline
\multirow{2}{*}{Method} & \multicolumn{2}{c|}{$d = 1$} & \multicolumn{2}{c|}{$d = 2$} \\ 
& $d_H (\Theta, \hat{\Theta})$ & $d(K,\hat{K})$ & $d_H (\Theta, \hat{\Theta})$ & $d(K,\hat{K})$ \\
  \hline
MS-FCP & \textbf{13.95} & \textbf{0.00} & \textbf{57.19} & \textbf{0.29} \\ 
FCP & 306.37 & 0.89 & 230.90 & 0.59 \\ 
MOJO & 298.72 & 0.85 & 251.08 & 0.74 \\ 
ECP & 122.23 & 0.35 & 97.40 & 0.60 \\ 
KCP & 331.37 & 1.00 & 331.02 & 1.00 \\ 
   \hline
\end{tabular}
\label{table: local change simualtion}
\end{table}

\section{Real data examples} \label{section: real data example}

In this section, we illustrate the practical value of our proposed methodology by analyzing two real world datasets. In both data examples we choose the trimming parameter $\boldsymbol{x}_0$ to be entry-wise the $\alpha$-th quantile of the inputs setting $\alpha = 0.1$, and as in the simulation studies in Section~\ref{section: simulation studies M} we set the threshold parameter to $\lambda = \log^2(n)$ where $n$ is the number of data points observed. In practice, when only given a small number of observations, one might not be able to distinguish between a temporary but universal drop in the efficiencies and a down-shift of the production frontier. To resolve this identifiability issue and to reinforce the assumption of monotonic increases in technology, we analyse the data with a slight variant of Algorithm~\ref{algorithm: multiple change detection}, which is described in \ref{section: real data algorithm} of the supplementary materials. We remark that this modification comes with the same theoretical guarantees and does not lead to noticeable differences in our simulation experiments. 

\subsection{Analysis of a large Brazilian bank}  \label{section: Brazilian bank analysis}

We analyse quarterly revenue data from a large Brazilian commercial bank. The period of observation is from the first quarter of 2019 to the first quarter of 2025, and we observe financial statements from $1791$ unique branches for a total of $n = 8,604$ data points. We use the administrative expenses as the input variable (i.e. $d=1$), which are measured as an aggregate of personnel expenses, fixed expenses, and promotion and advertising expenses. For the output variable, we use the operating revenues, which are measured as an aggregate of financial revenue (revenue derived from deposit transactions with other financial institutions), credit revenue (revenue derived from credit transactions), and service revenue (revenue derived from the provision of services to the institution's members). Prior to analysis we rescale both the inputs and outputs by the number of active account holders reported by each branch in the given period. We also apply min-max scaling to the inputs and output to facilitate interpretation.

Applying our algorithm to the aforementioned data, we recover a single change point in the first quarter of 2022. There are a number of possible explanations for this change point location. For instance, on 22 April 2022, the Brazilian Ministry of Health declared an end to the national public health emergency initially declared in response to the COVID-19 pandemic. Therefore, it is reasonable to believe that after the first quarter of 2022 consumption activities would begin to rebound. Additionally, there is evidence that the COVID-19 pandemic changed the way consumers engage with banks, predominately by shifting preferences from in-person to online banking \citep{baicu2020impact}. The estimated efficiency scores obtained by dividing each observed output by the maximum output given by the FDH estimate of the most efficient production frontier function via Equation~\eqref{equation: estimated scores}, as well as the estimated change point location, are shown in Figure~\ref{figure: Brazilian bank change point location}. Additionally, the production frontier functions estimated using the FDH using data before and after the estimated change point location are shown in Figure~\ref{figure: Brazilian bank FDH}, demonstrating how the production frontier shifts upward before and after the change. 

\begin{figure}[!ht]
    \centering
    \caption{Coloured points (\textcolor{red}{x}\textcolor{blue}{x}x\textcolor{green}{x}) represent estimated efficiency scores, via Equation~\eqref{equation: estimated scores}, for branches of a large Brazilian bank. Black dashed line (\textbf{- - -}) represents the change point location recovered by our procedure. Points are coloured according to the branch IDs.}
    \includegraphics[width=0.55\linewidth]{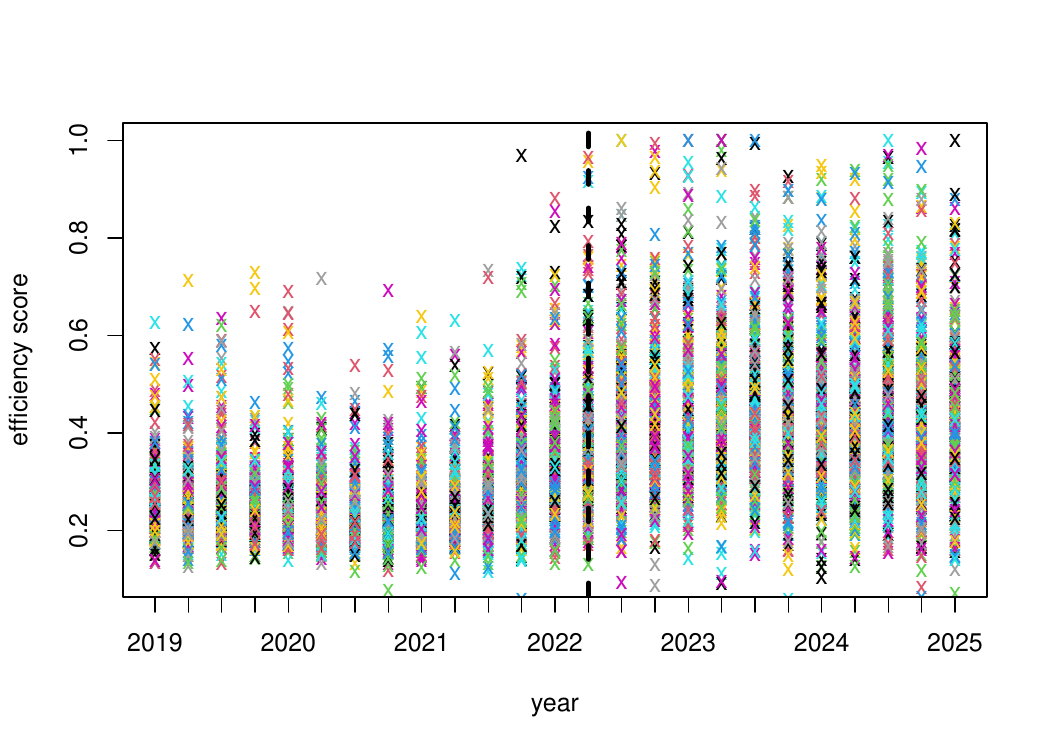}
    \label{figure: Brazilian bank change point location}
\end{figure}
\begin{figure}[!htbp]
    \centering
    \caption{Coloured lines (\textcolor{red}{\textbf{---}}, \textbf{\textcolor{blue}{---}}) represent the estimated production frontier functions before and after the change point location shown in Figure~\ref{figure: Brazilian bank change point location} using the FDH estimator.}
    \includegraphics[width=0.55\linewidth]{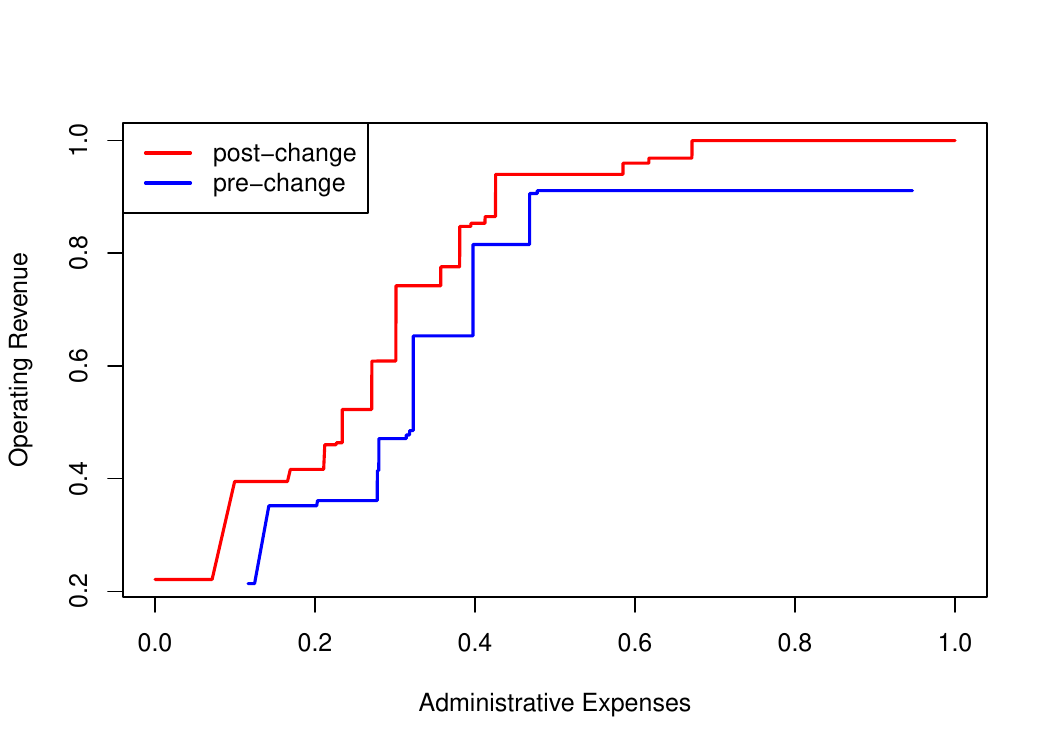}
    \label{figure: Brazilian bank FDH}
\end{figure}

\subsection{Analysis of South American economies} \label{Section: South America analysis}

We analyse national account data for six South American economies, namely, Argentina, Brazil, Colombia, Chile, Uruguay and Ecuador, obtained from the Penn World Table dataset originated from \cite{summers1991penn}. The same data were previously analysed by \cite{kumar2002technological} via DEA, and by \cite{tsionas2004markov} using a parametric Markov stochastic frontier model. The data consists of yearly observations from $1950$ to $2019$ for a total of $n = 318$ data points. We take as inputs the human capital index, based on years of schooling and returns to education, as well as the average number of hours worked per person (i.e. $d=2$). We take as output the expenditure side gross domestic product  (GDP). Prior to analysis we rescale both the inputs and outputs by each country's population in the given period. 

Applying Algorithm~\ref{algorithm: varaint of algo 1} to the aforementioned data, we recover a single change point in the year 1971. The estimated efficiency scores as well as the estimated change point location are shown in Figure~\ref{figure: South America change point location}. The 1970's marked a decisive turning point in the economic trajectory of Latin America, as the region became increasingly dependent on external borrowing to sustain growth. This debt-led expansion was closely tied to the global Oil Crises, which reshaped international financial flows and created new vulnerabilities for developing economies. For governments across the region, many of which were pursuing Import Substitution Industrialization strategies, the availability of cheap credit appeared to offer a unique opportunity to accelerate industrialization and modernize infrastructure. Brazil, experiencing what was termed its “Economic Miracle,” relied heavily on external borrowing to finance energy imports and expand state-owned enterprises, while Argentina similarly turned to foreign loans to sustain industrial growth. The result was a dramatic surge in sovereign debt during the mid-to-late 1970s, facilitated by lenders who regarded government borrowing as a secure investment. In this context, debt accumulation was not merely a financial strategy but a central pillar of development policy. 

\begin{figure}[!ht]
    \centering
    \caption{Coloured points 
    represent estimated efficiency scores, via equation \eqref{equation: estimated scores}, for six South American economies in the Penn World Table dataset described in~\ref{Section: South America analysis}. Black dashed line (\textbf{- - -}) represents the change point location recovered by our procedure. Points are coloured according to the country code.}
    \includegraphics[width=0.7\linewidth]{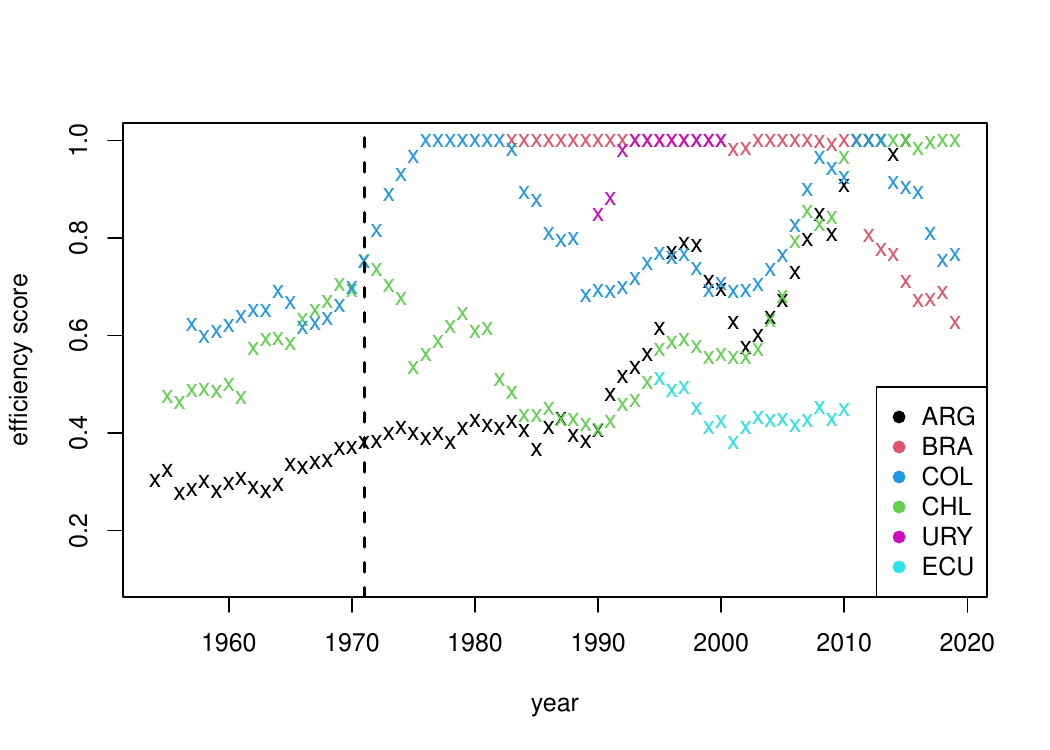}
    \label{figure: South America change point location}
\end{figure}

\section{Discussion}

In this work we studied the problem of estimating locations in time at which the level of technology in an economy changes, assuming the economy's underlying production function(s) expand sharply at discrete points in time. Two algorithms based on the FDH estimator for the production frontier function were presented, respectively for the settings in which the production frontier expands globally and locally. Localization rates for these algorithms were provided and shown to be unimprovable from a minimax perspective. Additionally, in the setting where the production function expands globally a simple method was proposed for constructing asymptotically conservative confidence intervals for the change point locations. 

There are a number of directions in which the present work could be extended. 

First, it is possible to extend our approach to handle serially dependent inputs and outputs, which is outlined in the \ref{section: serially dependent} of the supplementary materials. 

Second, in practice it is possible to observe outlying observations \citep{sexton1986methodology, simar2003detecting} for which the corresponding efficiency of the output is greater than $100\%$; this may occur for instance as a result of measurement error. In such settings the FDH estimator is no longer consistent, and our proposed methodology may therefore perform poorly. To remedy this issue one may use a quantile variant of the FDH estimator similar to the one proposed in \cite{aragon2005nonparametric}, which is robust to outlying observations. More precisely, for some $\alpha \in (0,1)$ one may replace the FDH estimator with the estimator 
\begin{equation}
    \hat{f}_{n,\alpha} ( \boldsymbol{x} ) = Q_{\alpha} \left ( Y_t \mid t \in \{ s : \boldsymbol{X}_s \leq \boldsymbol{x} \} \right ), \quad \boldsymbol{x} \in \mathcal{X}
    \label{equation: quantile FDH}
\end{equation}
where $Q_\alpha (Z_1, \dots, Z_m)$ represents the $\alpha$-th empirical quantile of the sample $Z_1, \dots, Z_m$. Provided the number of outlying observations grows at a slower rate than the stochastic order of the $\ell_\infty$ error of the FDH estimator, one can let $\alpha$ go to one with $n$ at an appropriate rate and obtain an analogous result to Theorem~\ref{theorem: exponential concentration in infty distance} for the quantity $\| \hat{f}_{n, \alpha} - f_{(1)} \|_\infty$. Consequently, using \eqref{equation: quantile FDH} in place of \eqref{equation: FDH production frontier function estimator} for estimating the efficiency scores would result in a method with the same theoretical guarantees as the methods studied in this paper. 

Finally, the FDH estimator suffers from the curse of dimensionality, in the sense that the rate of convergence slows as the dimensionality of the inputs increases. Therefore, the methodology proposed in the main paper is not appropriate for high dimensional data. One possible direction for alleviating this issue is to impose additional constraints on the production functions, for example, by assuming that most inputs do not impact the frontier, so that a variable selection procedure can be incorporated into our method a priori.

The extensions discussed in this section are left to be fully explored in future research. 

\addcontentsline{toc}{section}{References}
\bibliography{ref}
\bibliographystyle{alpha}

\newpage

\appendix

\makeatletter
\renewcommand{\theassumption}{\Alph{section}.\arabic{assumption}}
\setcounter{figure}{0}
\makeatother

\begin{center}

{\Large\bf SUPPLEMENTARY MATERIALS}

\end{center}

\section{Further details on numerical illustrations}

\subsection{Further details on real data analysis}

This section provides further details on the real data illustrations presented in Section~\ref{section: Brazilian bank analysis} and Section~\ref{Section: South America analysis}. Figures~\ref{figure: Brazilian bank inputs and ouputs distributions} and~\ref{figure: South America inputs and ouputs distributions} show scatter plots of the inputs and outputs over time for the real data examples in Sections~\ref{section: Brazilian bank analysis} and~\ref{Section: South America analysis} respectively. Regarding the analysis of South American economies presented in Section~\ref{Section: South America analysis}: in Figure \eqref{fig: gdp} it can be seen that the output is increasing monotonically over time, whereas Figures \eqref{fig: hc} and \eqref{fig: work} show that the inputs likewise exhibit monotonic behaviour. Note that this is not the case for the input in the real data example in Section~\ref{section: Brazilian bank analysis}, as can be seen in Figure~\ref{fig: expenses}.  We remark that when the support of the input distribution changes in a monotonic manner, the issue of identifiability may arise in the frontier models. One possible approach for resolving this is to explicitly model the panel structure of the data; currently the data are treated as a single time series, potentially with multiple observations at the same time point. Detecting changes in the production frontier function for panel data is an interesting direction for future research. 

\begin{figure}[!ht]
     \centering
     \caption{Administrative expenses \eqref{fig: expenses} and operating revenues \eqref{fig: revenue} for branches of a large Brazilian bank described in Section~\ref{section: Brazilian bank analysis}. Each quantity is normalized by the number of active customers of the given branch in the given year, after which min-max scaling is applied. Points are coloured according to the corresponding branch ID.}
     \begin{subfigure}[b]{0.3\textwidth}
         \centering
         \includegraphics[width=\textwidth]{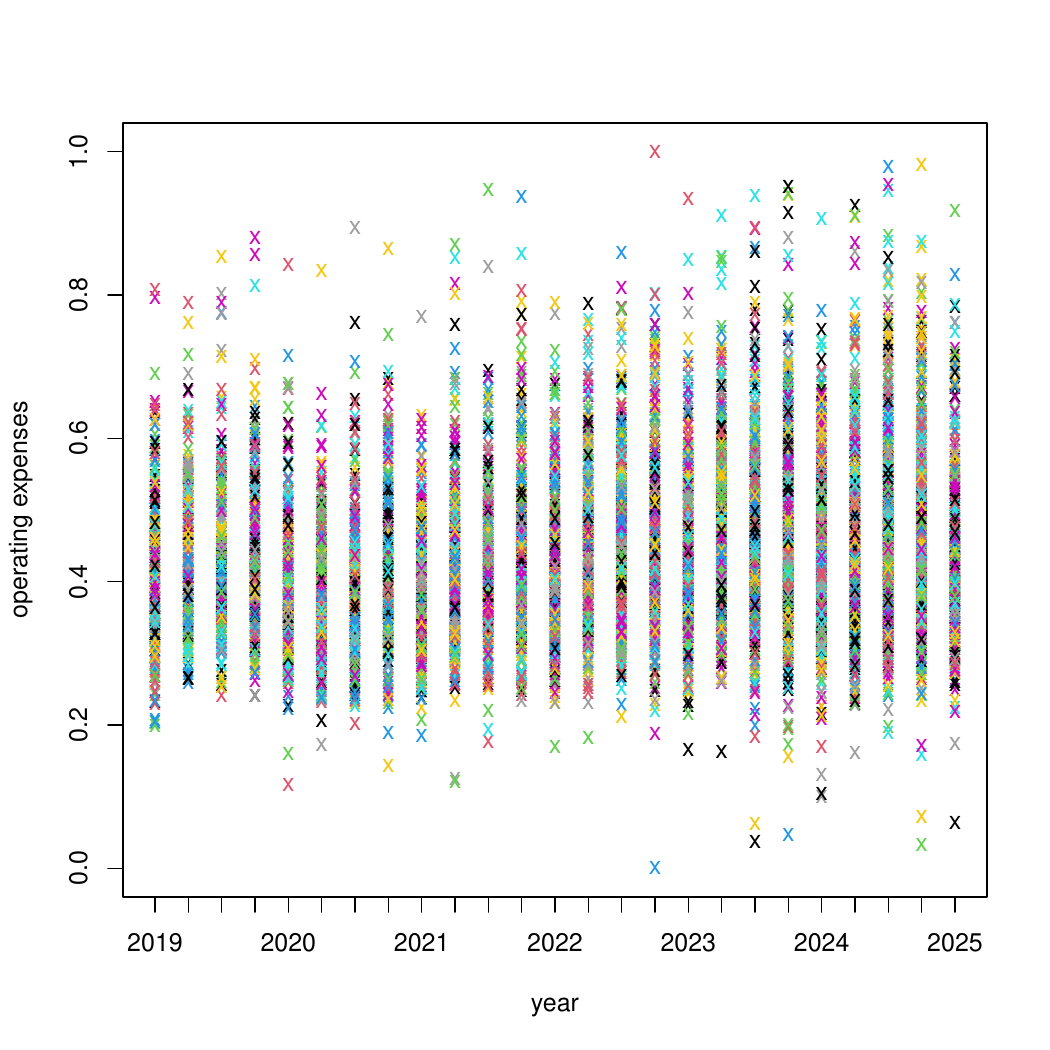}
         \caption{}
         \label{fig: expenses}
     \end{subfigure}
     \begin{subfigure}[b]{0.3\textwidth}
         \centering
         \includegraphics[width=\textwidth]{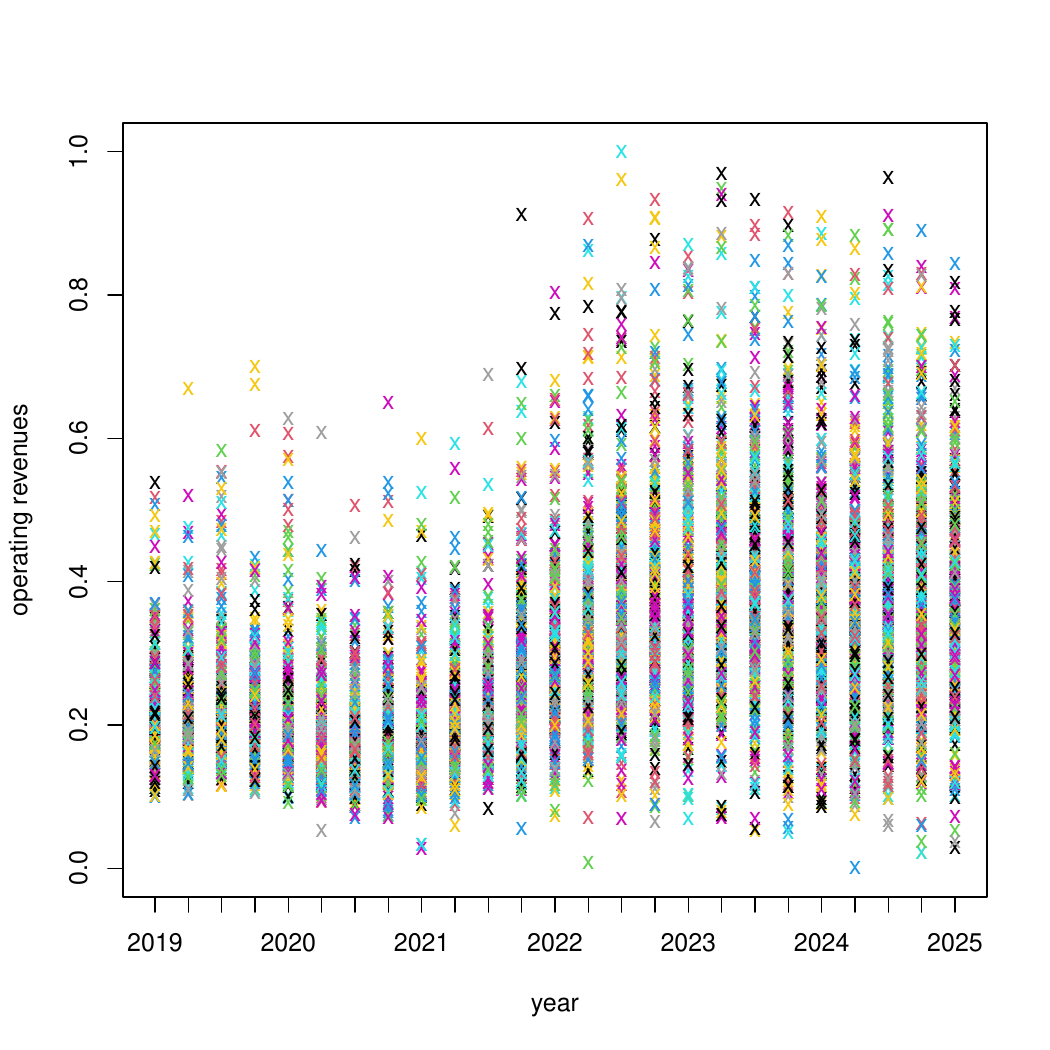}
         \caption{}
         \label{fig: revenue}
     \end{subfigure}
    \label{figure: Brazilian bank inputs and ouputs distributions}
\end{figure}

\begin{figure}[!ht]
     \centering
     \caption{Human capital \eqref{fig: hc}, hours worked \eqref{fig: work}, and GDP \eqref{fig: gdp} over time for the six South American economies in the Penn World Table dataset. Each quantity is normalized by the population of the given country in the given year. Points are coloured according to the country code.}
     \begin{subfigure}[b]{0.3\textwidth}
         \centering
         \includegraphics[width=\textwidth]{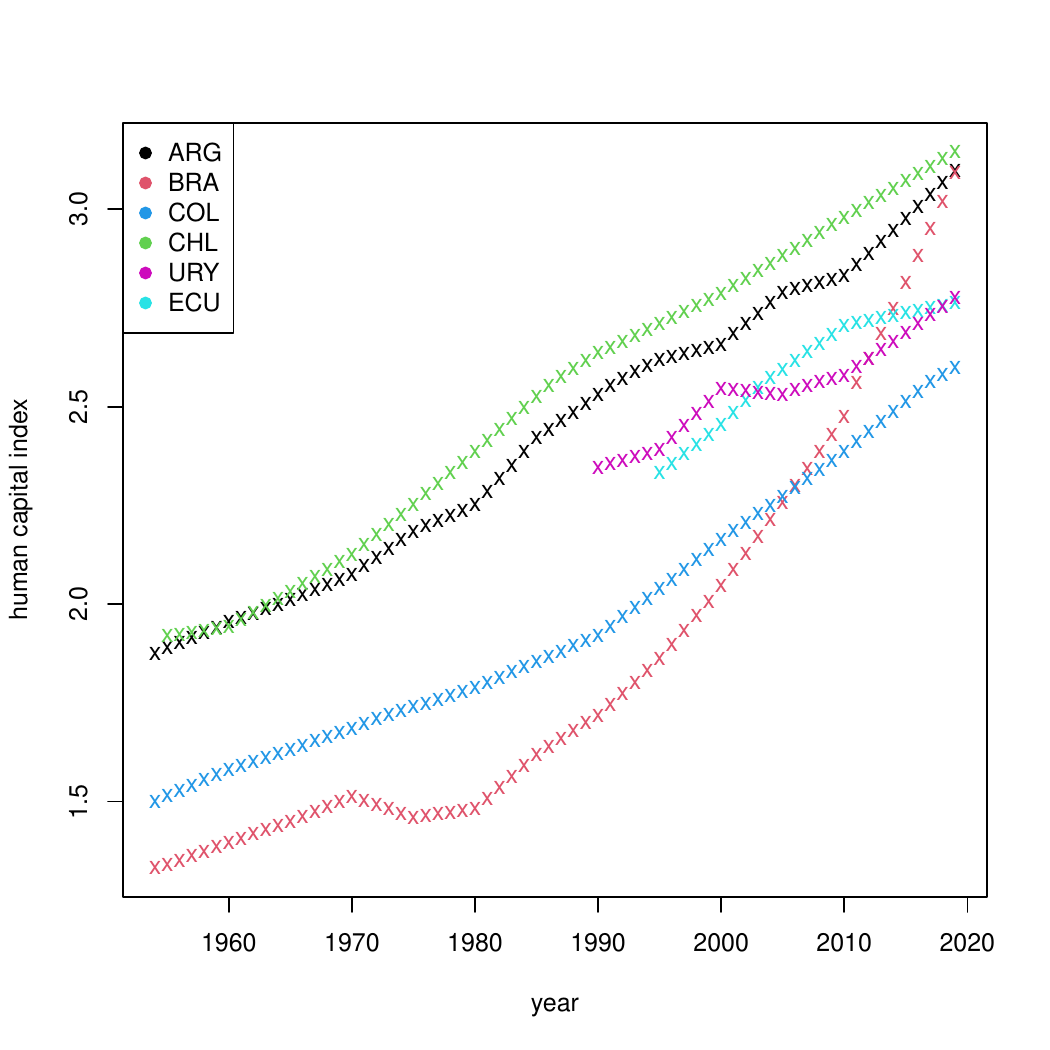}
         \caption{}
         \label{fig: hc}
     \end{subfigure}
     \hfill
     \begin{subfigure}[b]{0.3\textwidth}
         \centering
         \includegraphics[width=\textwidth]{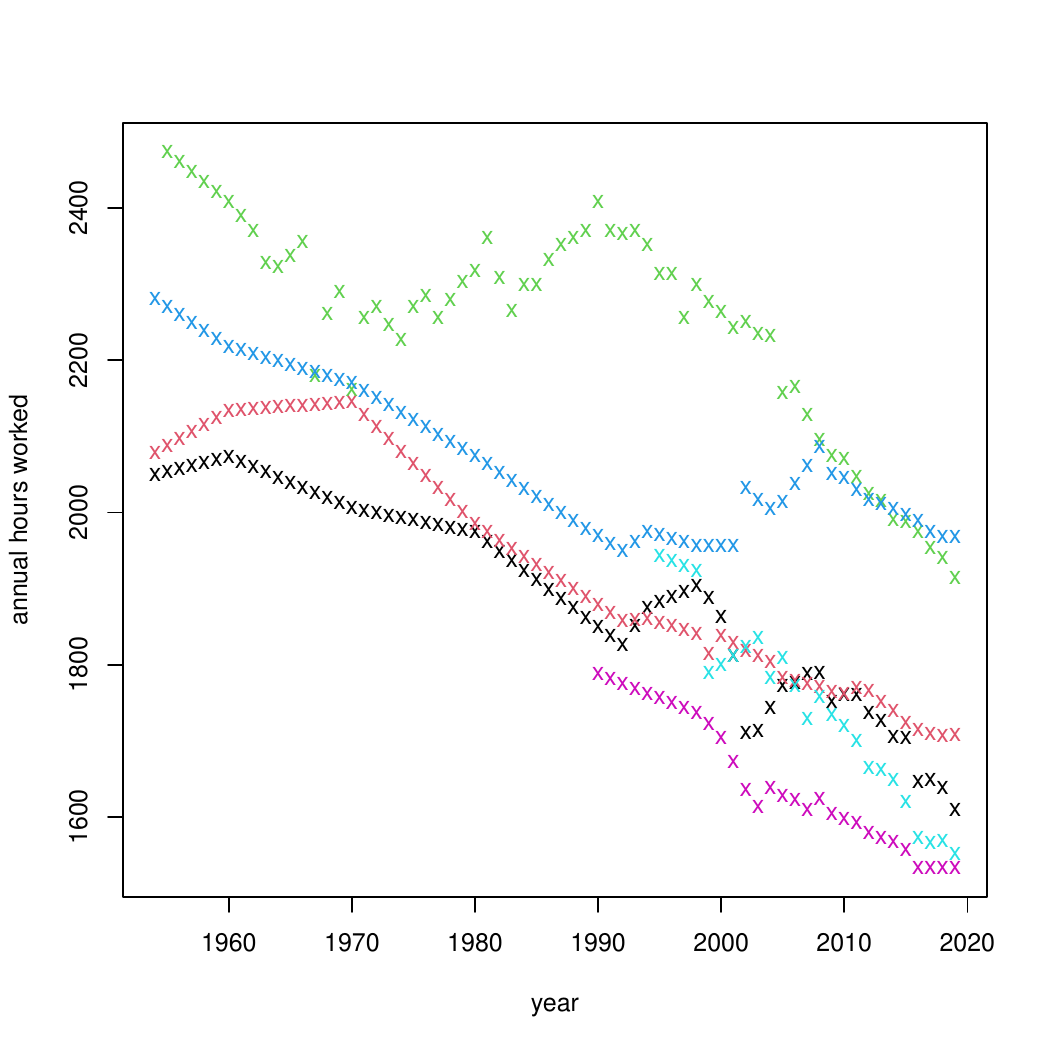}
         \caption{}
         \label{fig: work}
     \end{subfigure}
     \hfill
     \begin{subfigure}[b]{0.3\textwidth}
         \centering
         \includegraphics[width=\textwidth]{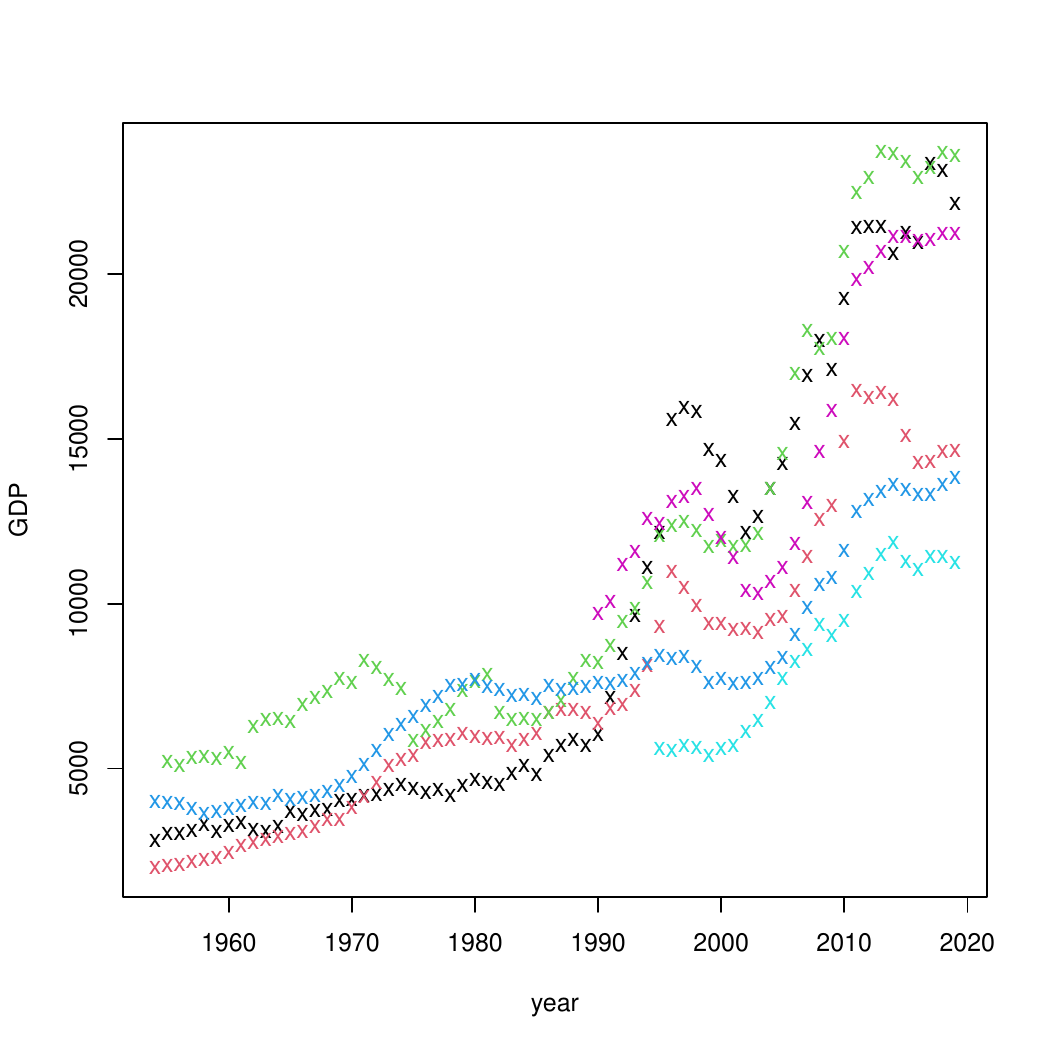}
         \caption{}
         \label{fig: gdp}
     \end{subfigure}
    \label{figure: South America inputs and ouputs distributions}
\end{figure}

\subsection{Algorithm used in the real data examples} \label{section: real data algorithm}

In this section we present a slight variant of Algorithm~\ref{algorithm: multiple change detection} designed to be robust to temporary but universal drops in the efficiencies, which can occur in practice. The main idea is as follows: to detect the first change we construct the pseudo-efficiency scores as in \eqref{equation: pseudo-efficiency scores}; however, instead of computing $\hat{L}$, defined in \eqref{equation: localized quasi-LR stat}, over left-expanding intervals, we compute the entire sequence $\{ \hat{L}_{1, \tau, n} \}_{\tau = 1, \dots, n}$; if all the $\hat{L}$'s are smaller than the threshold $\lambda$ we conclude there are no change points present, otherwise we estimate the right-most change point as the largest $\tau$ for which $\hat{L}_{1,\tau,n} > \lambda$ (call this value $\hat{\eta}$) and re-start the algorithm at $\hat{\eta} - \left \lfloor C \log (n) \right \rfloor$ for some $C > 0$. Pseudocode is provided in Algorithm~\ref{algorithm: varaint of algo 1}. As discussed earlier, this modification comes with the same theoretical guarantees. It does not lead to any noticeable differences in our simulation experiments, but does help handling the potential temporary drop of efficiencies in the real data analysis.

\begin{algorithm}[!htbp]
\DontPrintSemicolon
\SetKwInOut{Input}{Input}
\SetKwInOut{Output}{Output}
\Input{the observed data sequence $\left ( \boldsymbol{X}_1, Y_1 \right ), \dots, \left ( \boldsymbol{X}_n, Y_n \right )$, a threshold $\lambda$ against which to compare local quasi-Likelihood ratio statistics, a minimum value for the norm of the inputs $\boldsymbol{x}_0$, and a constant $C > 0$. }
\Output{a set of estimated change points $\hat{\Theta} \subset \left \{ 1, \dots, n \right \}$.}
\BlankLine
\Begin{
$\hat{\Theta} \leftarrow \emptyset$ \\
$\hat{K} \leftarrow 0$ \\
$m \leftarrow n$\\
$\text{test} \leftarrow \texttt{True}$ \\
\While{{\upshape test}}
{
$\text{detection} \leftarrow \texttt{False}$ \\
Estimate $\hat{f}_n \left ( \cdot \right )$ from $\left \{ \left ( \boldsymbol{X}_t, Y_t \right ) \mid t = 1, \dots, m \right \}$ via \eqref{equation: FDH production frontier function estimator} \\
Obtain $\hat{R}_1, \dots, \hat{R}_m$ as in \eqref{equation: estimated scores} \\
$L^* \leftarrow \max \left \{ \hat{L}_{1, \tau , m} \mid \tau = 1, \dots, m \right \}$ \\
\If{$L^* > \lambda$}{
$\hat{\eta} \leftarrow  \max \tau \quad \text{s.t.} \quad \hat{L}_{1,\tau,m} > \lambda$ \\
$m \leftarrow \max \left ( 0, \hat{\eta} - \left \lfloor C \log (n) \right \rfloor \right )$ \\
$\hat{\Theta} \leftarrow \left \{ \hat{\eta} \right \} \cup \hat{\Theta}$ \\
$\hat{K} \leftarrow \hat{K} + 1 $ \\
$\text{detection} \leftarrow \texttt{True}$ \\
BREAK \\
}
\If{{\upshape $m < 1$ or $\neg \text{detection}$}}{test $\leftarrow$ \texttt{False}}
}
}
\caption{A variant of Algorithm~\ref{algorithm: multiple change detection} designed to be robust to localized periods in which the level of technology temporarily decreases.}
\label{algorithm: varaint of algo 1}
\end{algorithm}

\section{Extension: serially dependent inputs and outputs} \label{section: serially dependent}

In this Section we extend the results of Section~\ref{section: multiple change points} to the setting in which the sequence of inputs and efficiency scores is permitted to be serially dependent. We work under the assumption of strong mixing of \cite{bradley2005basic}, although in practice the results in the sequel can be obtained by imposing any appropriate notion of weak dependence. Let $\left \{ Z_t \mid t \in \mathbb{Z} \right \}$ be a sequence of random variables equipped with a probability triple $\left ( \Omega, \mathcal{F}, \mathbb{P} \right )$, where $\mathcal{F} = \sigma \left ( Z_t \mid t \in \mathbb{Z} \right )$ denotes the sigma field generated by the $Z$'s. We briefly recall that the mixing coefficient of the random sequence is given by 
\begin{equation*}
\alpha (n) = \sup_{j \in \mathbb{Z}} \sup_{\substack{A \in \sigma \left ( Z_t \mid - \infty \leq t \leq j \right ) \\ B \in \sigma \left ( Z_t \mid j + n \leq t \leq +\infty \right )}} \left | \mathbb{P} \left ( A \cap B \right ) - \mathbb{P} \left ( A \right )\mathbb{P} \left ( B \right ) \right |. 
\end{equation*}

Showing consistency of Algorithm~\ref{algorithm: multiple change detection} under strong mixing boils down to showing that the FDH estimator continues to exhibit exponential concentration as in Theorem~\ref{theorem: exponential concentration in infty distance}. With such a result in place, the behaviour of the local tests \eqref{equation: localized quasi-LR stat} can be controlled on a high probability set using standard techniques. In order to prove an analogous result to Theorem~\ref{theorem: exponential concentration in infty distance} we need to assume the dependence between the time indexed inputs and efficiency scores, as measured by the mixing coefficient, decays sufficiently fast. We impose the following restriction:

\begin{assumption}
The sequence of inputs and efficiency scores $\left ( \boldsymbol{X}_t, R_t \right )_{t=1, \dots,n}$ constitutes a geometrically strong mixing sequence with mixing coefficient bounded from above as $\alpha \left ( n \right ) \leq A^n$ for some $A \in (0,1)$. 
\label{assumption: strong mixing}
\end{assumption}

With Assumption~\ref{assumption: strong mixing} in place we have the following result. 

\begin{corollary}
Grant Assumptions~\ref{assumption: Z is subset of hypercube},~\ref{assumption: production set is free disposal},~\ref{assumption: production frontier function continuous}, and~\ref{assumption: strong mixing} hold and let $\psi_n$ be as in Theorem~\ref{theorem: exponential concentration in infty distance}. For any $\zeta > 0$ we have that $\mathbb{P} \left ( {\psi}_n^{-1} \left \| f - \hat{f}_n \right \|_\infty > \zeta \right ) \leq C e^{-\zeta}$, where $C$ is a constant which depends only on the parameters in Assumptions~\ref{assumption: Z is subset of hypercube},~\ref{assumption: production set is free disposal},~\ref{assumption: production frontier function continuous}, and~\ref{assumption: strong mixing}.
\label{corollary: exponential concentration of FDH estimator strong mixing}
\end{corollary}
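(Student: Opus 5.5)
The plan is to rerun the proof of Theorem~\ref{theorem: exponential concentration in infty distance}. It splits into a deterministic reduction and one probabilistic estimate, and only the probabilistic estimate uses independence; we replace that step with a blocking argument under geometric strong mixing.

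\emph{Deterministic reduction.} Since every observation lies in $\Psi_{(1)}$, we have $\hat f_n \le f$ pointwise, so $\|f-\hat f_n\|_\infty = \sup_{\boldsymbol{x}} (f(\boldsymbol{x})-\hat f_n(\boldsymbol{x}))$. Cover $\mathcal{X}$ by a grid of $N \asymp (\zeta\psi_n)^{-d}$ points $\boldsymbol{x}_i$, with spacing of order $\zeta\psi_n/L$. To each $\boldsymbol{x}_i$ attach the cell
\[
A_i = \{ (\boldsymbol{x},y): \boldsymbol{x}_i - c\zeta\psi_n \leq \boldsymbol{x} \leq \boldsymbol{x}_i,\; y \geq f(\boldsymbol{x}) - \zeta\psi_n/2 \}.
\]
By free disposal and the bilipschitz property (Assumption~\ref{assumption: production frontier function continuous}), if every $A_i$ contains at least one observation $(\boldsymbol{X}_t,Y_t)$, then $\|f-\hat f_n\|_\infty \leq \zeta\psi_n$. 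Assumption~\ref{assumption: X and R distribution}(i)--(ii) gives $p_i := \mathbb{P}((\boldsymbol{X}_t,Y_t)\in A_i) \geq c' (\zeta\psi_n)^{d+1}$ uniformly in $t$; the lower bound on $R$'s density gives mass $\gtrsim \zeta\psi_n$ in the $y$-direction once $f$ is bounded below away from the origin (near the origin the trimming by $\boldsymbol{x}_0$, or a direct argument, applies). A union bound then gives
\[
\mathbb{P}(\psi_n^{-1}\|f-\hat f_n\|_\infty > \zeta) \leq \sum_{i} \mathbb{P}(\text{no } t \text{ with } (\boldsymbol{X}_t,Y_t)\in A_i).
\]

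\emph{Replacing independence by blocking.} In the independent case each summand is at most $(1-p_i)^n \le \exp(-c' n (\zeta\psi_n)^{d+1})$. With $\psi_n^{d+1}=\log n/n$ this is $n^{-c'\zeta^{d+1}}$, which absorbs the factor $N$ and yields $Ce^{-\zeta}$. Under Assumption~\ref{assumption: strong mixing}, I would instead select every $q$-th index, $t = q, 2q, \dots$, with $q = \lceil C_A \log n\rceil$. The events $\{(\boldsymbol{X}_{jq},Y_{jq})\notin A_i\}$, $j=1,\dots,\lfloor n/q\rfloor$, are separated by gaps of length $q$. Applying the standard strong-mixing inequality $|\mathbb{P}(B\cap D)-\mathbb{P}(B)\mathbb{P}(D)|\le\alpha(q)$ iteratively (Bradley's coupling/Berbee-type bound) gives
\[
\mathbb{P}\big(\text{no selected } t \text{ hits } A_i\big) \leq (1-p_i)^{n/q} + (n/q)\,A^{q}.
\]
Choosing $C_A$ large makes $(n/q)A^q \le n^{-3}$, which is negligible even after summing over the $N\le n$ cells when $\zeta\psi_n \gtrsim n^{-1/d}$. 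For very small $\zeta$ the claim is trivial by enlarging $C$, and for $\zeta \gtrsim \psi_n^{-1}$ the probability is zero since $f$ is bounded.

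\emph{Main obstacle.} Thinning costs a factor $\log n$ in the effective sample size: the first term becomes $\exp(-c'\zeta^{d+1}\log n \cdot n/(q\,\cdot\,\text{stuff}))$, i.e.\ $\exp(-c''\zeta^{d+1})$ instead of $n^{-c'\zeta^{d+1}}$. This is still enough for a bound $Ce^{-\zeta}$ once $\zeta\ge1$, but the $\log N$ union factor must be absorbed. I would handle this first by redefining the cells at scale $\zeta\psi_n$ with the rate $\psi_n$ kept as stated and $C$ enlarged. If the log factor cannot be absorbed, the fallback is to use $\alpha$-mixing exponential inequalities (Merlevède--Peligrad--Rio) for $\sum_t \mathbf{1}\{(\boldsymbol{X}_t,Y_t)\in A_i\}$ instead of thinning, which loses only a $\log^2 n$ factor inside the exponent relative to variance and gives the same tail. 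The remaining steps are as in Theorem~\ref{theorem: exponential concentration in infty distance}.
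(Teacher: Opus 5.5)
\textbf{There is a genuine gap, and it sits in your blocking step.}

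\emph{Why thinning loses the rate.} Take $p_i\asymp(\zeta\psi_n)^{d+1}=\zeta^{d+1}\log n/n$ and gap $q\asymp\log n$. The per-cell miss probability $(1-p_i)^{n/q}$ is then of order $e^{-c\zeta^{d+1}}$, with no power of $n$ left. The union bound, however, runs over $N\asymp(\zeta\psi_n)^{-d}$ cells. The sum $\zeta^{-d}\psi_n^{-d}e^{-c\zeta^{d+1}}$ is small only once $\zeta\gtrsim\log^{1/(d+1)}n$. So what the argument actually proves is the exponential tail at rate $\psi_n\log^{1/(d+1)}n=(\log^2 n/n)^{1/(d+1)}$, not at $\psi_n$.

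\emph{Why your remedies do not help.} Rescaling the cells to $\zeta\psi_n$ changes nothing, since that is already their scale, and no $n$-free $C$ can absorb $\psi_n^{-d}$. The Merlev\`ede--Peligrad--Rio inequality is worse than thinning, not equivalent to it. For summands bounded by $1$, its denominator contains the term $x\log^2 n$, so with $x\asymp np_i\asymp\zeta^{d+1}\log n$ the exponent is at most of order $\zeta^{d+1}/\log n$. No variance-based inequality can do better. For $B_t=\{(\boldsymbol{X}_t,Y_t)\in A_i\}$ of probability $p$, $\alpha$-mixing only gives $\mathbb{P}(B_t\mid B_s)\le p+\alpha(t-s)/p$. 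This is vacuous for lags below $\log(1/p)/\log(1/A)\asymp\log n$, so hits may come in clusters of length $\asymp\log n$, which is exactly the factor thinning loses.

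\emph{A secondary issue.} With $q$ fixed at $\lceil C_A\log n\rceil$, the additive error $(n/q)A^q$ is only polynomially small. It is therefore not $\le Ce^{-\zeta}$ for $\log n\ll\zeta\lesssim\psi_n^{-1}$. The gap must grow with $\zeta$, just as the paper's $\beta_{k,n}\propto k\log n$ grows with the level $k$.

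\emph{The paper's proof has the same hole.} Your route is the paper's: it reruns the net argument of Theorem~\ref{theorem: exponential concentration in infty distance} along $\mathcal{I}_{\beta_{k,n}}$ with $\beta_{k,n}\asymp k\log n/\log(1/A)$. Its choice of radii gives exactly $\operatorname{mes}(W(u,r_k))=C_W(\tfrac{d}{d+1}+k)\beta_{k,n}/n$. The product over the $n/\beta_{k,n}$ retained indices is then about $e^{-(k+d/(d+1))}$, not $n^{-k}$. The displayed factor $(1-(\tfrac{d}{d+1}+k)\log n/(n/\beta_{k,n}))^{n/\beta_{k,n}}$ carries an extra $\log n$, which would require radii larger by $\log^{1/(d+1)}n$.

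\emph{The loss appears intrinsic.} Here is a sketch of a counterexample.
\begin{enumerate}[(i)]
\item Redraw $(\boldsymbol{X}_t,R_t)$ afresh each period, except that a draw with $R\ge 1-\kappa$ is held for $\ell\asymp\log(1/\kappa)/\log(1/A)$ periods.
\item Lower the fresh-draw density on $\{R\ge1-\kappa\}$ by a factor of about $1/\ell$, so that every marginal stays uniform and the density bounds hold.
\item Dependence comes only from holds, which have probability $O(\kappa)$ and end within $\ell$ steps, so $\alpha(k)\le A^k$ can be arranged for all $k$ with $A$ fixed.
\item The distinct near-frontier points nevertheless form an i.i.d.\ sample of effective size $n/\ell$. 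With, say, $\kappa=n^{-1/(2(d+1))}$, the FDH error is at least of order $(\log^2 n/n)^{1/(d+1)}$ with probability tending to one.
\end{enumerate}

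\emph{What can be proved.} The corollary cannot be obtained at rate $\psi_n$ from Assumption~\ref{assumption: strong mixing}. Your thinning argument does prove it with $\psi_n$ replaced by $(\log^2 n/n)^{1/(d+1)}$, provided you take $q\asymp\zeta+\log n$ and use the density bounds of Assumption~\ref{assumption: X and R distribution}. The corollary omits that assumption, but both your proof and the paper's need it. Keeping the rate $\psi_n$ would require a dependence condition that controls $\mathbb{P}(B_t\mid B_s)$ for rare events, such as $\phi$-mixing.
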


With the above result, we can derive an analogue to Theorem~\ref{theorem: multiple change detection} using the proof techniques in Section~\ref{section: proofs}. The details are omitted for the sake of brevity.

\section{Proofs} \label{section: proofs}

\subsection{Intermediate lemmas and their proofs}

\begin{theorem}
For any $n \in \mathbb{N}$, any $\left ( p_1, \dots, p_n \right ) ' \in [0,1]^n$, and any convex function $g : \mathbb{R} \mapsto \mathbb{R}$, if $Z_1 \sim \text{Poisson-Binomial} \left ( p_1, \dots, p_n \right )$ and $Z_2 \sim \text{Binomial} \left ( n, \overline{p} \right ) $ where in particular $\overline{p} = \left ( p_1 + \dots + p_n \right ) / n $ it holds that $\mathbb{E} \left [ g \left ( Z_1 \right ) \right ] \leq \mathbb{E} \left [ g \left ( Z_2 \right ) \right ]$. 
\label{theorem: poisson-binomial convex expectation}
\end{theorem}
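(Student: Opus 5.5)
This is Hoeffding's classical 1956 result. The plan is a smoothing argument: show that replacing any two unequal success probabilities $p_i \neq p_j$ by their common average $(p_i+p_j)/2$ does not decrease $\mathbb{E}[g(Z)]$, and then pass to the limit of repeated averaging, which gives the Binomial$(n,\overline{p})$ law.

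\emph{Two-coordinate step.} Write $Z_1 = B_i + B_j + W$, where $W$ is the sum of the other $n-2$ independent Bernoullis and is independent of $(B_i,B_j)$. Conditional on $W=w$, set $h(k) = g(w+k)$ for $k\in\{0,1,2\}$. Then
\begin{equation*}
\mathbb{E}[h(B_i+B_j)] = (1-p_i)(1-p_j)h(0) + \bigl(p_i(1-p_j)+p_j(1-p_i)\bigr)h(1) + p_ip_j h(2).
\end{equation*}
Put $s = p_i+p_j$ and $q = p_ip_j$. The expectation equals $h(0) + s\,(h(1)-h(0)) + q\,\bigl(h(2)-2h(1)+h(0)\bigr)$. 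With $s$ fixed, only $q$ varies, and the coefficient $h(2)-2h(1)+h(0)$ is nonnegative because $g$ is convex. Replacing both probabilities by $s/2$ maximizes $q$ subject to the constraint on $s$, since $q \le s^2/4$. So this replacement weakly increases the conditional expectation. Integrating over $W$, which is unaffected, weakly increases $\mathbb{E}[g(Z_1)]$. Note that $\mathbb{E}[g(Z_1)]$ is always finite because $Z_1$ takes finitely many values.

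\emph{Passing to the Binomial.} Define $\Phi(p_1,\dots,p_n) = \mathbb{E}[g(Z)]$. It is a polynomial in the $p$'s, hence continuous on $[0,1]^n$. Repeatedly average a largest coordinate with a smallest coordinate. The sum $\sum_i p_i$ stays equal to $n\overline{p}$, and the variance $\sum_i (p_i-\overline{p})^2$ strictly decreases at each step. This step needs the most care: showing that the iterates actually converge to $(\overline{p},\dots,\overline{p})$.

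To handle convergence, I would average the max with the min. If $D = \max_i p_i - \min_i p_i$, then averaging $p_{\max}$ with $p_{\min}$ changes $\sum_i (p_i-\overline{p})^2$ by $-(p_{\max}-p_{\min})^2/2 = -D^2/2$. Since $\sum_i (p_i-\overline{p})^2 \le nD^2$, the variance contracts by a factor at most $1-1/(2n)$ per step, so it tends to zero.

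Since $\Phi$ is monotone nondecreasing along the iterates and continuous, we get $\Phi(p) \le \Phi(\overline{p},\dots,\overline{p}) = \mathbb{E}[g(Z_2)]$.

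An alternative route, if preferred, is to use that $\Phi$ is symmetric and, by the two-coordinate computation, Schur-concave. Since $(\overline{p},\dots,\overline{p})$ is majorized by every $p$ with the same sum, the conclusion follows directly.
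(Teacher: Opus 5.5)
Your proof is correct, and it does more than the paper does: the paper gives no argument of its own and simply cites Theorem~3 of \cite{hoeffding1956distribution}.

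Both of your ingredients check out.
\begin{itemize}
\item Conditionally on $W$, the expectation equals $h(0)+s\,(h(1)-h(0))+q\,(h(2)-2h(1)+h(0))$. The second difference is nonnegative and $q\le s^2/4$, so pairwise averaging cannot decrease $\mathbb{E}[g(Z_1)]$.
\item Averaging the extreme coordinates lowers $\sum_i(p_i-\overline{p})^2$ by exactly $D^2/2\ge \frac{1}{2n}\sum_i(p_i-\overline{p})^2$. The iterates therefore converge to $(\overline{p},\dots,\overline{p})$, and continuity of the polynomial $\Phi$ closes the limit.
\end{itemize}

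As I recall Hoeffding's paper, his argument rests on the same pairwise identity: with $p_i+p_j$ fixed, $\Phi$ is affine in $p_ip_j$, with coefficient equal to the expected second difference of $g$. His use of it is extremal rather than constructive. He works with maximizers of $\Phi$ over the compact set $\{p\in[0,1]^n : \sum_i p_i=n\overline{p}\}$ and constrains their coordinates. In that paper this also yields finer information, namely where the extrema lie for arbitrary $g$ and the equality case.

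Your explicit averaging scheme needs no compactness or existence-of-a-maximizer step. If you track strictness in the first averaging step, it also recovers the equality case for strictly convex $g$. It further makes clear that only $g(k+2)-2g(k+1)+g(k)\ge 0$ for integers $0\le k\le n-2$ is used. That comfortably covers the exponential functions $k\mapsto a^k$ to which the paper applies the result.
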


\begin{proof}
See Theorem 3 in \cite{hoeffding1956distribution}. 
\end{proof}

\begin{lemma}
Let $\left ( R_1, \boldsymbol{X}_1 \right ) , \dots, \left ( R_n, \boldsymbol{X}_n \right ) $ be a sequence of random variables satisfying Assumption~\ref{assumption: X and R distribution}. For any $1 \leq t_1 < t_2 \leq n$ let $N \left ( t_1, t_2, \boldsymbol{x}_0 \right )$ be as defined in Section~\ref{section: detecting a single change} and put $M \left ( t_1, t_2, \boldsymbol{x}_0 \right ) = \max \left ( R_t \mid t_1 \leq t \leq t_2, \boldsymbol{X}_t > \boldsymbol{x}_0 \right )$. Therefore define the random variable 
\begin{equation*}
\boldsymbol{M} \left ( \boldsymbol{x}_0 \right ) = \max_{ 1 \leq t_1 < t_2 \leq n }  -2 N \left ( t_1, t_2, \boldsymbol{x}_0 \right ) \log \left [ M \left ( t_1, t_2, \boldsymbol{x}_0 \right ) \right ]. 
\end{equation*}
for some $\boldsymbol{x}_0 \in \mathcal{X}$. For any $\lambda \geq C_{3,R} \log (n)$ where $C_{3,R} = 12 \log (2 C_{2,R}) / \left ( 1-e^{-1} C_{1,R} \right )$ it holds that $\boldsymbol{M} \left ( \boldsymbol{x}_0 \right ) \leq \lambda$ on a set with probability at least $1 - \frac{4}{n}$. 
\label{lemma: max bound}
\end{lemma}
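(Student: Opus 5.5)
Fix a pair $1 \le t_1 < t_2 \le n$, bound the tail of $-2N\log M$ for that pair, and then take a union bound over the at most $n^2/2$ pairs.

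For a fixed pair, condition on the inputs $\boldsymbol{X}_{t_1},\dots,\boldsymbol{X}_{t_2}$. Then $N = N(t_1,t_2,\boldsymbol{x}_0)$ is fixed, and the $R_t$ with $\boldsymbol{X}_t > \boldsymbol{x}_0$ are conditionally independent with conditional densities at most $C_{2,R}$ on $[0,1]$. This gives
\[
\mathbb{P}\left(M \le e^{-\lambda/(2N)} \mid \boldsymbol{X}\right) \le \prod_{t} C_{2,R}\, e^{-\lambda/(2N)} = C_{2,R}^{N} e^{-\lambda/2},
\]
since each factor contributes $e^{-\lambda/(2N)}$ and there are $N$ of them. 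This is exactly the tail of $-2N\log M$: a sum of $N$ independent Exponential$(1/2)$-type variables, with each density inflated by $C_{2,R}$.

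This bound is only useful when $C_{2,R}^N$ is small, so $N$ must be controlled. I would split according to whether $N \le \lambda/(4\log(2C_{2,R}))$. On that event,
\[
C_{2,R}^{N} e^{-\lambda/2} \le e^{\lambda/4 - \lambda/2} = e^{-\lambda/4}.
\]
For large $N$ I would avoid the crude product bound and use a sharper one. Conditionally, $\mathbb{P}(R_t \le r) \le \min(1, C_{2,R}\,r)$. The lower bound $C_{1,R}$ also gives $\mathbb{P}(R_t > r) \ge C_{1,R}(1-r)$, so $\mathbb{P}(R_t \le r) \le 1 - C_{1,R}(1-r) \le \exp(-C_{1,R}(1-r))$. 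Set $r = e^{-\lambda/(2N)}$. For large $N$, $1 - r \approx \lambda/(2N)$, which yields a bound of roughly $\exp(-C_{1,R}\lambda/2)$. The factor $1 - e^{-1}C_{1,R}$ in $C_{3,R}$ suggests that the authors instead bound $\mathbb{E}[M^{-s}]$, or a moment generating function of $N$. Since $N$ is Poisson–Binomial, Theorem~\ref{theorem: poisson-binomial convex expectation} can then be applied to the convex map $N \mapsto C_{2,R}^N$, giving a Binomial comparison and a bound like $(1 + \bar p(C_{2,R}-1))^{n}$. I would expect this to be too big. So I would rather use a Chernoff/MGF bound on $\mathbb{E}[\exp(\theta(-2N\log M))]$ with $\theta<1/2$, computed via $\mathbb{E}[M^{-2\theta N}\mid N]$, and choose $\theta$ so the per-observation factor stays below a constant involving $C_{2,R}$ and $1 - e^{-1}C_{1,R}$.

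Once the single-pair tail is $\le \exp(-c\lambda)$ with $c = \log(2C_{2,R})^{-1}$-type constants, the choice $\lambda \ge C_{3,R}\log n$ with the factor 12 gives a bound $n^{-3}$ or smaller per pair. The union bound over at most $n^2$ pairs then gives $4/n$, where the constant 4 absorbs the at most $n^2$ pairs together with the small-$N$ and large-$N$ pieces. The main obstacle is the large-$N$ regime. There the naive bound $C_{2,R}^N$ blows up, and a uniform-in-$N$ exponential moment bound for $-2N\log M$ is needed. That step is where I would spend the effort, trying the $\theta$-moment computation first.
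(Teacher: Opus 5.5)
Your decomposition is the paper's. You take a union bound over pairs and split the count $N$, which is fixed once you condition on the inputs, at $\nu=\lambda/(4\log(2C_{2,R}))$. For $N\le\nu$ you use the upper density bound, and your estimate $C_{2,R}^{N}e^{-\lambda/2}\le e^{-\lambda/4}$ is exactly the paper's. For $N>\nu$ you use the lower density bound $\mathbb{P}(R_t\le r\mid\boldsymbol{X}_t)\le 1-C_{1,R}(1-r)$. Conditioning on the inputs is a tidier version of the paper's explicit sum over subsets $A$, and it makes the paper's extra split on interval length and its Poisson--Binomial comparison unnecessary.

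The gap is that you never close the case $N>\nu$. You replace $1-e^{-\lambda/(2N)}$ by $\lambda/(2N)$, which is only accurate when $\lambda/(2N)$ is small. But $N>\nu$ only gives $x:=\lambda/(2N)<a:=2\log(2C_{2,R})$. You then drop this line for an exponential-moment computation that you do not carry out, and you call this regime the main obstacle.

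The missing step is one line of concavity. Since $C_{2,R}\ge 1$, we have $a\ge 2\log 2>1$, so on $[0,a]$,
\begin{equation*}
1-e^{-x}\ge(1-e^{-a})\,x/a\ge(1-e^{-1})\,x/a .
\end{equation*}
Hence $N(1-e^{-x})\ge(1-e^{-1})\lambda/(4\log(2C_{2,R}))$, and
\begin{equation*}
\mathbb{P}\bigl(M\le e^{-x}\bigm|\boldsymbol{X}\bigr)\le\bigl(1-C_{1,R}(1-e^{-x})\bigr)^{N}\le\exp\Bigl(-\frac{(1-e^{-1})\,C_{1,R}}{4\log(2C_{2,R})}\,\lambda\Bigr)\le n^{-3}
\end{equation*}
whenever $\lambda\ge 12\log(2C_{2,R})\log(n)/((1-e^{-1})C_{1,R})$. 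For such $\lambda$ you also have $e^{-\lambda/4}\le n^{-3}$, because $\log(2C_{2,R})\ge\log 2>(1-e^{-1})C_{1,R}$. Summing over at most $n^2/2$ pairs gives failure probability at most $1/(2n)$. No uniform exponential-moment bound is needed, and an MGF route would face the same case analysis anyway.

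What sent you off course is the denominator $1-e^{-1}C_{1,R}$ in the statement. It is a typo for $(1-e^{-1})C_{1,R}$, which is how it appears in Lemma~\ref{lemma: multiscale gird max bound} and what the exponent at the end of the paper's proof requires. The factor $1-e^{-1}$ is exactly the linearization constant above, not a trace of an MGF argument.

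The factor $1/C_{1,R}$ also cannot be removed. Suppose the conditional density of $R_t$ equals $C_{1,R}$ near $1$, and take the pair $(1,n)$, where $N$ is of order $n$. Then $\mathbb{P}(-2N\log M>\lambda\mid N)$ is about $e^{-C_{1,R}\lambda/2}$. The literal constant stays bounded as $C_{1,R}\to 0$, so with it the claimed probability $1-4/n$ fails for small $C_{1,R}$.
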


\begin{proof}
For each $t = 1, \dots, n$ write $p_{t,\boldsymbol{x}_0} = \mathbb{P} \left ( \boldsymbol{X}_t > \boldsymbol{x}_0 \right )$ and for any $1 \leq t_1 \leq t_2 \leq n$ write $\overline{p}_{t_1, t_2, \boldsymbol{x}_0} = (p_{t_1, \boldsymbol{x}_0} + \dots + p_{t_2, \boldsymbol{x}_0} / (t_2 - t_1 + 1)$. For integers $t_1 < t_2$ and $k \in \mathbb{N}_{\leq t_2-t_1+1}$ write $F_{k,t_1,t_2,}$ for the set of all $k$-subsets of $\{ t_1, \dots, t_2 \}$, and with a slight abuse of notation for each $A \in F_{k,t_1,t_2,}$ write $A^c$ for $\left \{ t_1, \dots, t_2 \right \} \setminus A $. Putting $\nu = \lambda / 4 \log (2 C_{2,R})$ we have that
\begin{subequations}
\begin{align}
& \mathbb{P} \left ( \boldsymbol{M} \left ( \boldsymbol{x}_0 \right ) > \lambda \right ) \label{equation: M union bound} \\
& \leq \sum_{\substack{1 \leq t_1 < t_2 \leq n}} \mathbb{P} \left ( M \left ( t_1, t_2, \boldsymbol{x}_0 \right ) \leq e^{- \lambda / 2 N \left ( t_1, t_2, \boldsymbol{x}_0 \right )} \right ) \nonumber \\ 
& = \sum_{\substack{1 \leq t_1 \leq t_2 \leq n \\ t_2 - t_1 + 1 \leq \nu}} \sum_{k=0}^{t_2 - t_1 + 1 } \sum_{A \in F_{k,t_1, t_2}} \prod_{t \in A} p_{t,\boldsymbol{x}_0} \mathbb{P} \left ( R_t \leq e^{-\lambda / 2k} \mid \boldsymbol{X}_t > \boldsymbol{x}_0 \right ) \prod_{s \in A^c} \left ( 1 - p_{s,\boldsymbol{x}_0} \right ) \label{equation: M T1} \\ 
& + \sum_{\substack{1 \leq t_1 \leq t_2 \leq n \\ t_2 - t_1 + 1 > \nu}} \sum_{k=0}^{t_2 - t_1 + 1 } \sum_{A \in F_{k,t_1, t_2}} \prod_{t \in A} p_{t,\boldsymbol{x}_0} \mathbb{P} \left ( R_t \leq e^{-\lambda / 2k} \mid \boldsymbol{X}_t > \boldsymbol{x}_0 \right ) \prod_{s \in A^c} \left ( 1 - p_{s,\boldsymbol{x}_0} \right ) \label{equation: M T2}
\end{align}
\end{subequations}
For term \eqref{equation: M T1} we have that
\begin{subequations}
\begin{align}
\eqref{equation: M T1} & \leq \sum_{\substack{1 \leq t_1 \leq t_2 \leq n \\ t_2 - t_1 + 1 \leq \nu}} \sum_{k=0}^{t_2 - t_1 + 1 } \sum_{A \in F_{k,t_1, t_2}} \prod_{t \in A} p_{t,\boldsymbol{x}_0} \prod_{s \in A^c} \left ( 1 - p_{s,\boldsymbol{x}_0} \right ) \left ( C_{2,R} e^{-\lambda / 2k} \right ) ^ k \nonumber \\ 
& = e^{-\lambda/2} \sum_{\substack{1 \leq t_1 \leq t_2 \leq n \\ t_2 - t_1 + 1 \leq \nu}} \sum_{k=0}^{t_2 - t_1 + 1 } \sum_{A \in F_{k,t_1, t_2}} \prod_{t \in A} p_{t,\boldsymbol{x}_0} \prod_{s \in A^c} \left ( 1 - p_{s,\boldsymbol{x}_0} \right ) \left ( C_{2,R} \right ) ^ k \nonumber \\ 
& \leq e^{-\lambda/2} \sum_{\substack{1 \leq t_1 \leq t_2 \leq n \\ t_2 - t_1 + 1 \leq \nu}} \sum_{k=0}^{t_2 - t_1 + 1 } \binom{t_2 - t_1 + 1}{k} \left ( \overline{p}_{t_1, t_2, \boldsymbol{x}_0} C_{2,R} \right )^k \left ( 1 - \overline{p}_{t_1, t_2, \boldsymbol{x}_0} \right )^{(t_2 - t_1 + 1) - k} \label{equation: M T1 - Hoef} \\ 
& = e^{-\lambda/2} \sum_{\substack{1 \leq t_1 \leq t_2 \leq n \\ t_2 - t_1 + 1 \leq \nu}} \left ( 1 - \overline{p}_{t_1, t_2, \boldsymbol{x}_0} + \overline{p}_{t_1, t_2, \boldsymbol{x}_0} C_{2,R} \right )^{(t_2 - t_1 + 1)} \label{equation: M T1 - binom} \\ 
& \leq n e^{-\lambda/2} \sum_{j=1}^{\nu} \left ( C_{2,R} \right )^j < n e^{-\lambda/2} \sum_{j=1}^{\nu} \left ( 2 \times C_{2,R} \right )^j \label{equation: M T1 - p bound} \\ 
& \leq \left ( \frac{2 C_{2,R}}{2 C_{2,R} - 1} \right ) 2 C_{2,R}^\nu \times n e^{-\lambda/2} \\ 
& \leq 2 n e^{-\lambda / 4}.
\label{equation: M T1 - final}
\end{align}
\end{subequations}
where we have used Theorem~\ref{theorem: poisson-binomial convex expectation} in \eqref{equation: M T1 - Hoef}, the Binomial Theorem in \eqref{equation: M T1 - binom}, and the fact that necessarily $C_{2,R} \geq 1$ since the $R$'s have support $[0,1]$ in \eqref{equation: M T1 - p bound}. Turning to term \eqref{equation: M T2} we have that
\begin{subequations}
\begin{align}
\eqref{equation: M T2} & = \sum_{\substack{1 \leq t_1 \leq t_2 \leq n \\ t_2 - t_1 + 1 > \nu}} \sum_{k=0}^{\nu} \sum_{A \in F_{k,t_1, t_2}} \prod_{t \in A} p_{t,\boldsymbol{x}_0} \mathbb{P} \left ( R_t \leq e^{-\lambda / 2k} \mid \boldsymbol{X}_t > \boldsymbol{x}_0 \right ) \prod_{s \in A^c} \left ( 1 - p_{s,\boldsymbol{x}_0} \right ) \label{equation: M T2-1} \\ 
& + \sum_{\substack{1 \leq t_1 \leq t_2 \leq n \\ t_2 - t_1 + 1 > \nu}} \sum_{k=\nu + 1}^{t_2 - t_1 + 1} \sum_{A \in F_{k,t_1, t_2}} \prod_{t \in A} p_{t,\boldsymbol{x}_0} \mathbb{P} \left ( R_t \leq e^{-\lambda / 2k} \mid \boldsymbol{X}_t > \boldsymbol{x}_0 \right ) \prod_{s \in A^c} \left ( 1 - p_{s,\boldsymbol{x}_0} \right ) \label{equation: M T2-2}
\end{align} 
\end{subequations}
and these terms can in turn be bounded as
\begin{align}
\eqref{equation: M T2-1} & \leq \sum_{\substack{1 \leq t_1 \leq t_2 \leq n \\ t_2 - t_1 + 1 > \nu}} \sum_{k=0}^{\nu} \sum_{A \in F_{k,t_1, t_2}}  \prod_{t \in A} p_{t,\boldsymbol{x}_0} \prod_{s \in A^c} \left ( 1 - p_{s,\boldsymbol{x}_0} \right ) \left ( C_{2,R} e^{-\lambda / 2k } \right )^k \nonumber \\ 
& \leq e^{-\lambda / 2} \left ( C_{2,R} \right ) ^ \nu \sum_{\substack{1 \leq t_1 \leq t_2 \leq n \\ t_2 - t_1 + 1 > \nu}} \sum_{k=0}^{\nu} \sum_{A \in F_{k,t_1, t_2}} \prod_{t \in A} p_{t,\boldsymbol{x}_0} \prod_{s \in A^c} \left ( 1 - p_{s,\boldsymbol{x}_0} \right ) \nonumber \\
& \leq n^2 e^{-\lambda / 4}, \label{equation: M T2-1 final bound}
\end{align}
and 
\begin{subequations}
\begin{align}
\eqref{equation: M T2-2} & \leq \sum_{\substack{1 \leq t_1 \leq t_2 \leq n \\ t_2 - t_1 + 1 > \nu}} \sum_{k=\nu + 1}^{t_2 - t_1 + 1} \sum_{A \in F_{k,t_1, t_2}} \prod_{t \in A} p_{t,\boldsymbol{x}_0} \prod_{s \in A^c} \left ( 1 - p_{s,\boldsymbol{x}_0} \right ) \left ( 1 - C_{1,R} \left ( 1 - e^{-\lambda / 2k} \right ) \right )^k \nonumber \\ 
& \leq \sum_{\substack{1 \leq t_1 \leq t_2 \leq n \\ t_2 - t_1 + 1 > \nu}} \sum_{k=\nu + 1}^{t_2 - t_1 + 1} \sum_{A \in F_{k,t_1, t_2}} \prod_{t \in A} p_{t,\boldsymbol{x}_0} \prod_{s \in A^c} \left ( 1 - p_{s,\boldsymbol{x}_0} \right ) \left ( 1 - \frac{\lambda}{k} \frac{C_{1,R}}{4 \log (2 C_{2,R})} \left ( 1 - e^{-1} \right ) \right )^k \label{equation: M T2-2 exp 1} \\ 
& \leq e^{-\lambda C_{1,R} (1 - e^{-1}) / (4 \log (2 C_{2,R}))} \sum_{\substack{1 \leq t_1 \leq t_2 \leq n \\ t_2 - t_1 + 1 > \nu}} \sum_{k=\nu + 1}^{t_2 - t_1 + 1} \sum_{A \in F_{k,t_1, t_2}} \prod_{t \in A} p_{t,\boldsymbol{x}_0} \prod_{s \in A^c} \left ( 1 - p_{s,\boldsymbol{x}_0} \right ) \label{equation: M T2-2 exp 2} \\ 
& \leq n^2 e^{-\lambda C_{1,R} (1 - e^{-1}) / (4 \log (2 C_{2,R}))} \label{equation: M T2-2 final}
\end{align}
\end{subequations}
where we have used the bound $1-e^{\frac{x}{y}} \geq \frac{x}{ay} (1-e^{-1})$ for $x,y \in \mathbb{N}$ and $a \geq 1$ where $a \times y \geq x$ in \eqref{equation: M T2-2 exp 1}, and the bound $(1+\frac{x}{n}) \leq e^{x}$ for $n \geq 1$ and $|x| \leq n$ in \eqref{equation: M T2-2 exp 2}. Finally plugging \eqref{equation: M T1 - final}, \eqref{equation: M T2-1 final bound}, and \eqref{equation: M T2-2 final} into \eqref{equation: M union bound} and applying the definition of $\lambda$ proves the desired result. 
\end{proof}

\begin{lemma}
Let $N \left ( t_1, t_2, \boldsymbol{x}_0 \right )$ be as defined in Section~\ref{section: detecting a single change} and let $\boldsymbol{x}_0$ be such that 
\begin{equation*}
\max_{t = 1, \dots, n} \mathbb{P} \left ( \boldsymbol{X}_t \leq \boldsymbol{x}_0 \right ) \leq \frac{1}{4}. 
\end{equation*}
It holds with probability at least $1 - C/ n$ that $N(t_1,t_2,\boldsymbol{x}_0) \geq \frac{1}{4} (t_2 - t_1 + 1)$ uniformly over all $1 \leq t_1 < t_2 \leq n$ with $t_2 - t_1 + 1 > 4 \log (n)$, where $C > 0$ is an absolute constant. 
\label{lemma: trimming}
\end{lemma}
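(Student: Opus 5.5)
The count $N(t_1,t_2,\boldsymbol{x}_0)$ is a sum of independent Bernoulli variables, since the inputs are independent by Assumption~\ref{assumption: X and R distribution}. The plan is a Hoeffding lower-tail bound for each fixed interval, followed by a union bound organised by interval length.

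\textbf{Step 1 (one fixed interval).} Fix $1 \leq t_1 < t_2 \leq n$ and put $m = t_2 - t_1 + 1$. Then
\[
N(t_1,t_2,\boldsymbol{x}_0) = \sum_{t=t_1}^{t_2} \mathbf{1}_{\{\boldsymbol{X}_t > \boldsymbol{x}_0\}} .
\]
One point needs care: $\{\boldsymbol{X}_t > \boldsymbol{x}_0\}$ is the entry-wise strict event, which is not literally the complement of $\{\boldsymbol{X}_t \leq \boldsymbol{x}_0\}$ when $d>1$. I will read the hypothesis as giving $p_t := \mathbb{P}(\boldsymbol{X}_t > \boldsymbol{x}_0) \geq 3/4$, which is the intended meaning given how $\boldsymbol{x}_0$ is used elsewhere in the paper. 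Boundary sets have measure zero by absolute continuity, so the distinction between strict and non-strict inequalities does not matter.

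Under this reading, $\mathbb{E}[N] \geq 3m/4$. The event $\{N < m/4\}$ therefore requires a downward deviation of at least $m/2$ from the mean. Hoeffding's inequality for bounded independent summands gives
\[
\mathbb{P}\big(N(t_1,t_2,\boldsymbol{x}_0) < m/4\big) \leq \exp\!\big(-2 (m/2)^2 / m\big) = e^{-m/2}.
\]
Alternatively, the Chernoff bound with rate $D(1/4\,\|\,3/4) = \tfrac12\log 3$ gives a slightly better exponent. Either is enough.

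\textbf{Step 2 (union bound by length).} This is the only step that needs care. A naive union bound over all $\binom{n}{2}$ pairs, using the worst case $m \approx 4\log n$, gives $n^2 \cdot n^{-2} = O(1)$, which is useless. Instead I group intervals by their length $m$. For each $m$ there are at most $n$ intervals. Hence
\[
\mathbb{P}\Big(\exists\, (t_1,t_2):\ m > 4\log n,\ N < m/4\Big) \leq \sum_{m > 4\log n} n\, e^{-m/2} \leq \frac{n\, e^{-2\log n}}{1-e^{-1/2}} = \frac{C}{n},
\]
with $C = (1-e^{-1/2})^{-1}$, an absolute constant. The geometric decay in $m$ absorbs the sum over lengths, so only the shortest admissible length matters. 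There, $e^{-m/2} \leq n^{-2}$ exactly cancels the factor $n$ and leaves $1/n$.

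\textbf{Step 3 (conclusion).} On the complement of the event in Step 2, which has probability at least $1 - C/n$, we have $N(t_1,t_2,\boldsymbol{x}_0) \geq \tfrac14 (t_2 - t_1 + 1)$ simultaneously for all pairs with $t_2 - t_1 + 1 > 4\log n$. This is the claim of Lemma~\ref{lemma: trimming}.
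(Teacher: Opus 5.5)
Your proof is correct and follows essentially the same route as the paper's: a Hoeffding bound $e^{-m/2}$ for each interval of length $m$, then a union bound grouped by length with at most $n$ intervals per length, summed geometrically to give $C/n$. Your explicit reading $\mathbb{P}(\boldsymbol{X}_t > \boldsymbol{x}_0) \geq 3/4$ is exactly what the paper's proof assumes without comment when it compares to a $\text{Binom}(t_2-t_1+1, 3/4)$ variable, and some such strengthening of the stated hypothesis is genuinely needed when $d \geq 2$. Applying Hoeffding directly to the non-identical Bernoulli summands is also slightly cleaner than that reduction to a Binomial.
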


\begin{proof}
A union bound argument followed by an application of the Binomial tail bound $\mathbb{P} \left ( Z \leq k \right ) \leq \exp \left ( -2m (p-k/m)^2 \right ) $ with $Z \sim \text{Binom} \left ( m, p \right )$ and $k \leq mp$ gives the following
\begin{align*}
& \mathbb{P} \left ( \cup_{\substack{1 \leq t_1, < t_2 \leq n  \\ t_2 - t_1 + 1 > 4 \log (n)}} \left \{ N \left ( t_1, t_2, \boldsymbol{x}_0 \right )< \frac{1}{4} (t_2 - t_1 + 1) \right \} \right ) \\
& \leq \sum_{\substack{1 \leq t_1 < t_2 \leq n  \\ t_2 - t_1 + 1 > 4 \log (n)}} \mathbb{P}_{Z \sim \text{Binom} \left ( (t_2 - t_1 + 1), 3/4 \right )} \left ( Z \leq \frac{1}{4} (t_2 - t_1 + 1) \right ) \\ 
& = \sum_{x = 4 \log (n) + 1}^n \left ( n - x + 1 \right ) e^{-x/2} \\
& \leq \frac{e^{\frac{1}{2}}}{1-e^{-\frac{1}{2}}} n e^{-2 \log (n)}
\end{align*}
from which the desired result follows. 
\end{proof}

\begin{lemma}
Let $\left ( R_1, \boldsymbol{X}_1 \right ) , \dots, \left ( R_n, \boldsymbol{X}_n \right ) $ be a sequence of random variables satisfying Assumption~\ref{assumption: X and R distribution} and let $\mathfrak{X}$ be as defined in \eqref{equation: multi-scale grid}. For any $1 \leq t_1 < t_2 \leq n$ and any $\mathcal{X}' \in \mathfrak{X}$ let $N \left ( t_1, t_2, \mathcal{X}' \right )$ be as defined in Section~\ref{section: local changes in technology} and put $M \left ( t_1, t_2, \mathcal{X}' \right ) = \max \left ( R_t \mid t_1 \leq t \leq t_2, \boldsymbol{X}_t \in \mathcal{X}' \right )$. Therefore define the random variable 
\begin{equation*}
\boldsymbol{M} \left ( \mathfrak{X} \right ) = \max_{\substack{\mathcal{X}' \in \mathfrak{X} \\ 1 \leq t_1 < t_2 \leq n}}  -2 N \left ( t_1, t_2, \mathcal{X}' \right ) \log \left [ M \left ( t_1, t_2, \mathcal{X}' \right ) \right ]. 
\end{equation*}
For any any $\lambda \geq C_{3,R} \log (n)$ where $C_{3,R} = 12 \log (2 C_{2,R}) / \left ( (1-e^{-1}) C_{1,R} \right )$ it holds that $\boldsymbol{M} \left ( \mathfrak{X} \right ) \leq \lambda$ on a set with probability at least $1 - C A_n n^{-1}$ where $C$ is an absolute constant depending only on $d$ and $\operatorname{mes} (\mathcal{X})$. 
\label{lemma: multiscale gird max bound}
\end{lemma}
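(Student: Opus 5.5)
The plan is to reduce Lemma~\ref{lemma: multiscale gird max bound} to Lemma~\ref{lemma: max bound} by a union bound over the cells of the multi-scale grid $\mathfrak{X}$. The key observation is that the proof of Lemma~\ref{lemma: max bound} never uses the specific form of the event $\{\boldsymbol{X}_t > \boldsymbol{x}_0\}$. It only uses three facts: the indicators $\mathbf{1}_{\{\boldsymbol{X}_t > \boldsymbol{x}_0\}}$ are independent Bernoulli$(p_{t,\boldsymbol{x}_0})$ variables; conditional on the event, $R_t$ has density bounded by $C_{2,R}$ and below by $C_{1,R}$ on $[0,1]$; and Theorem~\ref{theorem: poisson-binomial convex expectation} then controls the resulting Poisson-Binomial sums.

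For a fixed cell $\mathcal{X}' \in \mathfrak{X}$, I would first check that the same three facts hold with $\{\boldsymbol{X}_t > \boldsymbol{x}_0\}$ replaced by $\{\boldsymbol{X}_t \in \mathcal{X}'\}$. Independence across $t$ is immediate from Assumption~\ref{assumption: X and R distribution}. For the conditional law, note that
\[
f_{R_t \mid \boldsymbol{X}_t \in \mathcal{X}'}(r) = \frac{\int_{\mathcal{X}'} f_{R_t \mid \boldsymbol{X}_t = \boldsymbol{x}}(r)\, f_{\boldsymbol{X}_t}(\boldsymbol{x})\,d\boldsymbol{x}}{\mathbb{P}(\boldsymbol{X}_t \in \mathcal{X}')}.
\]
This is an average of the pointwise conditional densities, so it lies in $[C_{1,R}, C_{2,R}]$. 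Here $\mathbb{P}(\boldsymbol{X}_t \in \mathcal{X}') > 0$ by the lower bound $C_{1,X}$ and the fact that each cell has positive volume. With these in hand, repeating the argument of Lemma~\ref{lemma: max bound} verbatim (same $\nu$, same split into the terms \eqref{equation: M T1} and \eqref{equation: M T2}) gives
\[
\mathbb{P}\Big(\max_{1 \le t_1 < t_2 \le n} -2N(t_1,t_2,\mathcal{X}')\log M(t_1,t_2,\mathcal{X}') > \lambda\Big) \le \frac{4}{n}
\]
for every $\lambda \ge C_{3,R}\log(n)$. The constants are uniform in $\mathcal{X}'$, since none of them depend on $p_{t}$.

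It then remains to count the cells. At scale $k$, the cubes have side $\underline{x}2^{-k}$ and lower corners on the half-integer lattice $\frac12\mathbb{N}^d$ scaled by $\underline{x}2^{-k}$. So the number of scale-$k$ cubes contained in $\mathcal{X}$ is at most $(2\cdot 2^{k}+1)^d \le 3^d 2^{kd}$. Summing over $k \le \lceil \log_2(\underline{x}A_n^{1/d})\rceil$ gives a geometric series dominated by its top term:
\[
|\mathfrak{X}| \le 3^d \cdot \frac{2^d}{2^d - 1}\, 2^{d\lceil \log_2(\underline{x}A_n^{1/d})\rceil} \le C(d)\,\underline{x}^d A_n .
\]
Since $\underline{x}^d = \operatorname{mes}(\mathcal{X})$ after the rescaling, this is $C\,A_n$ with $C$ depending only on $d$ and $\operatorname{mes}(\mathcal{X})$. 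A union bound then yields $\mathbb{P}(\boldsymbol{M}(\mathfrak{X}) > \lambda) \le 4|\mathfrak{X}|/n \le C A_n n^{-1}$.

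I expect the main obstacle to be the first step: confirming that the conditional-density bounds, and hence the Hoeffding comparison, transfer to cells. The transfer itself is easy. The care is needed in tracking that the bound $4/n$ from Lemma~\ref{lemma: max bound} does not depend on the probabilities $p_t = \mathbb{P}(\boldsymbol{X}_t \in \mathcal{X}')$, which may now be as small as order $A_n^{-1}$. Inspecting \eqref{equation: M T1 - binom} through \eqref{equation: M T2-2 final}, the $p$'s only enter through sums of Poisson-Binomial probabilities, which are bounded by $1$. So no dependence on cell size appears.
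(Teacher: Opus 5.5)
Your proposal is correct and follows the same route as the paper. Both arguments take a union bound over the cells of $\mathfrak{X}$, apply the argument of Lemma~\ref{lemma: max bound} to each cell (its bound is free of the cell probabilities), and use the cardinality bound $|\mathfrak{X}| \le C A_n$ with $C$ depending only on $d$ and $\operatorname{mes}(\mathcal{X})$. Two details the paper leaves implicit are filled in by you: the check that the conditional density of $R_t$ given $\boldsymbol{X}_t \in \mathcal{X}'$ stays in $[C_{1,R}, C_{2,R}]$, and the use of the ceiling $\lceil \log_2(\underline{x}A_n^{1/d})\rceil$ in the cell count (the paper's proof switches to a floor).
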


\begin{proof}
For each $t = 1, \dots, n$ and each $\mathcal{X}' \in \mathfrak{X}$ write $p_{t, \mathcal{X}'} = \mathbb{P} \left ( \boldsymbol{X}_t \in \mathcal{X}' \right )$. For each $t_1 \leq t_2$ and $k \in \mathbb{N}_{t_2 - t_1 +1}$ let $F_{t_1, t_2, k}$ be as defined in the proof of Lemma~\ref{lemma: max bound}. We therefore have that
\begin{align*}
& \mathbb{P} \left ( \boldsymbol{M} \left ( \mathfrak{X} \right ) > \lambda \right ) \\ 
& \quad \quad \leq \sum_{\mathcal{X}' \in \mathfrak{X}} \sum_{1 \leq t_1 \leq t_2 \leq n} \sum_{k=0}^{t_2-t_1+1} \sum_{A \in F_{t_1,t_2,k}} \prod_{t \in A} p_{t, \mathcal{X}'} \mathbb{P} \left ( R_t \leq e^{-\lambda / 2k} \mid \boldsymbol{X}_t \in \mathcal{X}' \right ) \prod_{s \in A^c} \left ( 1 - p_{t, \mathcal{X}'} \right ). 
\end{align*}
For each $\mathcal{X}' \in \mathfrak{X}$ the quantity above following the first sum can be bounded as in the proof of Lemma~\ref{lemma: max bound}. Note also that that the cardinality of $\mathfrak{X}$, as defined in \eqref{equation: multi-scale grid}, is bounded as
\begin{equation}
\left | \mathfrak{X} \right | \leq  2^d \sum_{k=0}^{\left \lfloor \log_2 \left ( \underline{x} A_n^{1/d} \right )  \right \rfloor} \operatorname{mes} \left ( \mathcal{X} \right ) \underline{x}^{-d} 2^{dk} \leq 2^{d+2} \operatorname{mes} \left ( \mathcal{X} \right ) A_n. 
\label{equation: frac X cardinality}
\end{equation}
These facts together complete the proof. 
\end{proof}

\begin{lemma}
 With $N \left ( t_1, t_2, \mathcal{X}' \right )$ and $\mathfrak{X}$ defined as in Section~\ref{section: local changes in technology} it holds with probability at least $1 - n^{-1} C A_n^3$ that 
\begin{equation}
N \left ( t_1, t_2, \mathcal{X}' \right ) \geq \left ( t_2 - t_1 + 1 \right ) \frac{C_{1,X} \operatorname{mes} \left ( \mathcal{X}' \right )}{2}
\label{equation: N mathcal X event}
\end{equation}
uniformly over all $1 \leq t_1 < t_2 \leq n$ with $t_2 - t_1 + 1 > 4 C_{1,X}^{-2} \log (n) A_n^2$ and all $\mathcal{X}' \in \mathfrak{X}$. In particular $C > 0$ is an absolute constant. 
\label{lemma: trimming multiscale grid}
\end{lemma}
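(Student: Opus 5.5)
The plan mirrors the proof of Lemma~\ref{lemma: trimming}: fix a cell $\mathcal{X}' \in \mathfrak{X}$ and a window $\{t_1,\dots,t_2\}$ of length $m = t_2 - t_1 + 1$, bound the lower tail of the count $N(t_1,t_2,\mathcal{X}')$, and then take a union bound over all windows and all cells.

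\emph{Lower bound on the mean.} Write $p_{t,\mathcal{X}'} = \mathbb{P}(\boldsymbol{X}_t \in \mathcal{X}')$. By Assumption~\ref{assumption: X and R distribution}(i), $p_{t,\mathcal{X}'} \geq C_{1,X}\operatorname{mes}(\mathcal{X}') =: q$ for every $t$. The count $N(t_1,t_2,\mathcal{X}')$ is a sum of $m$ independent Bernoulli variables with means at least $q$. It therefore stochastically dominates a $\text{Binom}(m,q)$ variable; this can be seen by a coupling that thins each Bernoulli down to mean $q$, or from Theorem~\ref{theorem: poisson-binomial convex expectation} combined with monotonicity.

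\emph{Tail bound.} Hoeffding's inequality, in the same form used in the proof of Lemma~\ref{lemma: trimming}, gives
\begin{equation*}
\mathbb{P}\left(N(t_1,t_2,\mathcal{X}') < mq/2\right) \leq \exp\left(-2m(q/2)^2\right) = \exp\left(-m q^2/2\right).
\end{equation*}
Every cell in $\mathfrak{X}$ has side at least of order $A_n^{-1/d}$, so $\operatorname{mes}(\mathcal{X}') \gtrsim A_n^{-1}$ and hence $q \geq c\, C_{1,X} A_n^{-1}$. When $m > 4 C_{1,X}^{-2}\log(n) A_n^2$, the exponent $mq^2/2$ is at least $c'\log(n)$. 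The main obstacle is bookkeeping the constants so that $c' \geq 3$ (or at least $2$); the smallest cell side is $\underline{x}2^{-k_{\max}}$, which may be as small as roughly $\tfrac12 A_n^{-1/d}$ because of the ceiling in \eqref{equation: multi-scale grid}. If the factor $4$ in the threshold does not absorb this loss, I would use the cruder bound $\exp(-m q^2/2) \leq n^{-c'}$ and let the surplus factors of $A_n$ appear in the final probability. This is plausibly why the statement carries $A_n^3$ rather than the $A_n$ one would get from $|\mathfrak{X}|$ alone.

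\emph{Union bound.} Sum the tail bound over window lengths $m$ exceeding the threshold, with at most $n$ windows of each length, and over the cells. By \eqref{equation: frac X cardinality}, $|\mathfrak{X}| \leq 2^{d+2}\operatorname{mes}(\mathcal{X}) A_n$. The sum over $m$ is geometric:
\begin{equation*}
\sum_{m > m_0} n\, e^{-m q^2/2} \lesssim n\, e^{-m_0 q^2/2}\, q^{-2},
\end{equation*}
where the factor $q^{-2} \lesssim A_n^2$ comes from the geometric series ratio $1 - e^{-q^2/2} \approx q^2/2$. With $m_0 q^2/2 \geq 2\log(n)$, this is at most $C A_n^2 n^{-1}$ per cell. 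Multiplying by $|\mathfrak{X}| \lesssim A_n$ gives the claimed failure probability $C A_n^3 n^{-1}$, which completes the proof.
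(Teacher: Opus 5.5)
Your proposal is essentially the paper's proof. Both arguments use the same four ingredients:
\begin{itemize}
\item a union bound over $\mathfrak{X}$ and over windows, with at most $n$ windows per length;
\item the binomial tail $e^{-mq^2/2}$ with $q = C_{1,X}\operatorname{mes}(\mathcal{X}')$;
\item a geometric sum whose denominator $e^{q^2/2}-1$ contributes the factor $2q^{-2} \lesssim A_n^2$;
\item the cardinality bound \eqref{equation: frac X cardinality} for the remaining factor $A_n$.
\end{itemize}

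One caveat on the point you flagged. Both your final paragraph and the paper use $m_0 q^2/2 \geq 2\log(n)$ and $q^{-2} \leq C_{1,X}^{-2} A_n^2$. These require $\operatorname{mes}(\mathcal{X}') \geq A_n^{-1}$ for every cell. That holds if the finest scale index is $\lfloor \log_2(\underline{x} A_n^{1/d}) \rfloor$, as in \eqref{equation: frac X cardinality} and the verbal description of the grid. It does not hold literally with the ceiling in \eqref{equation: multi-scale grid}, and the paper's proof assumes it silently, just as yours does.

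Your proposed fallback would not rescue that case. For the smallest cells the exponent can drop to about $2\cdot 4^{-d}\log(n)$. The resulting bound is then of order $A_n^3 n^{1-2\cdot 4^{-d}}$, which is vacuous for $d \geq 1$. A deficit in the power of $n$ cannot be traded for factors of $A_n \lesssim \log(n)$. The correct fixes are to use the floor, or to enlarge the constant $4$ in the window threshold to $4^{d+1}$.

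Relatedly, the $A_n^3$ in the statement comes from the geometric sum, as your own last paragraph shows. It does not come from this constant loss, so your earlier speculation on that point is mistaken.
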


\begin{proof}
Arguing as in the proof of Lemma~\ref{lemma: trimming}, we obtain that
\begin{align*}
& \mathbb{P} \left ( \cup_{\mathcal{X}' \in \mathfrak{X}} \cup_{\substack{1 \leq t_1, < t_2 \leq n  \\ t_2 - t_1 + 1 > 4 C_{1,X}^{-2} \log (n) A_n^2 }} \left \{ N \left ( t_1, t_2, \boldsymbol{x}_0 \right ) < \left ( t_2 - t_1 + 1 \right ) \frac{C_{1,X} \operatorname{mes} \left ( \mathcal{X}' \right )}{2} \right \} \right ) \\
& \leq \sum_{\mathcal{X}' \in \mathfrak{X}} \sum_{\substack{1 \leq t_1 < t_2 \leq n  \\ t_2 - t_1 + 1 > 4 C_{1,X}^{-2} \log (n) A_n^2 }} \mathbb{P}_{Z \sim \text{Binom} \left ( (t_2 - t_1 + 1), C_{1,X} \operatorname{mes} \left ( \mathcal{X}' \right ) \right )} \left ( Z \leq \left ( t_2 - t_1 + 1 \right ) \frac{C_{1,X} \operatorname{mes} \left ( \mathcal{X}' \right )}{2} \right ) \\ 
& = n \sum_{\mathcal{X}' \in \mathfrak{X}} \sum_{x = 4 C_{1,X}^{-2} \log (n) A_n^2 + 1}^n e^{-\frac{x}{2} \left ( C_{1,X} \operatorname{mes} \left ( \mathcal{X}' \right ) \right )^2} \\
& \leq n \sum_{\mathcal{X}' \in \mathfrak{X}} \frac{e^{-2 \log (n)}}{e^{ \frac{1}{2} \left ( C_{1,X} \operatorname{mes} \left ( \mathcal{X}' \right ) \right )^2} - 1} \\
& \leq n \left | \mathfrak{X} \right | 2 C_{1,X}^{-2} A_n^2 e^{-2 \log (n)}. 
\end{align*}
Applying the bound \eqref{equation: frac X cardinality}, the desired result therefore follows. 
\end{proof}

\subsection{Proofs of the main results}

\subsubsection{Proof of Theorem~\ref{theorem: exponential concentration in infty distance}} \label{section: proof of exponential concentration in Haus. distance}

\begin{proof}

We adapt the proof techniques from \cite{korostelev1995efficient} and \cite{korostelev1995estimation}. The former paper studies the risk of the FDH estimator for $\Psi$, in terms of the Hausdorff distance, when the $Z$'s are uniformly distributed on $[0,1]^2$, whereas the latter paper studies the risk in terms of the Lebesgue measure of the symmetric difference of sets.

Let $B \left ( u, r \right ) $ be the sphere in $\mathbb{R}^{d+1}$ with centre $u$ and radius $r$, and put $W \left ( u, r \right ) =  B \left ( u, r \right ) \cap \Psi$. Let $\mathcal{N} = \left \{ u_1, \dots, u_M \right \}$ be a minimal $\psi_n$-net on $\partial \Psi$, and for each $u \in \mathcal{N}$ let $r_k := r_k \left ( u \right )$ be radii defined so that 
\begin{equation}
\operatorname{mes} \left ( W \left ( u, r_k \right ) \right ) = C_W \psi_n^{d+1} \left ( \frac{d}{d+1} + k \right ),  
\label{equation: mes W bound}
\end{equation}
where $\operatorname{mes} \left ( A \right )$ stands for the Lebesgue measure of the set $A$ and the leading constant is given by
\begin{equation}
C_W := \left ( \frac{C_f}{C_{1,X} C_{1,R}} \right ) ^ \frac{1}{d+1} \frac{\sqrt{d+1} \pi^{\frac{d+1}{2}}}{\Gamma \left ( \frac{d+1}{2} + 1 \right )}
\label{equation: C_W}
\end{equation}
and in particular $C_f := \sup_{\boldsymbol{x} \in \mathcal{X}} f (\boldsymbol{x})$. Introduce the events
\begin{equation}
A_{k,n} = \bigcap_{l=1}^M \left \{ \exists \left ( \boldsymbol{X}_t, Y_t \right ) \in W \left ( u_l, r_k \right ) \right \}, \hspace{2em} k = 1, \dots, n^{\frac{1}{d+1}}. 
\label{equation: psi-net events}
\end{equation}
We turn our attention to the probability of interest in Theorem~\ref{theorem: exponential concentration in infty distance}. We have that: 
\begin{align}
\mathbb{P} \left ( \psi_n^{-1} \left \| f - \hat{f}_n \right \|_\infty > \zeta \right ) & \leq e^{-\zeta} \mathbb{E} \left [ e^{\psi_n^{-1} \left \| f - \hat{f}_n \right \|_\infty} \right ] \nonumber \\ 
& = e^{-\zeta} \left [ 1 + \sum_{l=1}^\infty \frac{1}{l!} \mathbb{E} \left [ \left ( \psi_n^{-1} \left \| f - \hat{f}_n \right \|_\infty \right ) ^ l \right ] \right ] \nonumber \\ 
& \leq e^{-\zeta} \left [ 1 + \sum_{l=1}^\infty \frac{1}{l!} \left \{ \mathbb{E} \left [ \left ( \psi_n^{-1} \left \| f - \hat{f}_n \right \|_\infty \right ) ^ l \mid A_{1,n} \right ] \right . \right . \nonumber \\ 
& \hspace{4em} + \sum_{k=1}^{n^{\frac{1}{d+1}}-1} \mathbb{E} \left [ \left ( \psi_n^{-1} \left \| f - \hat{f}_n \right \|_\infty \right ) ^ l \mid A_{k+1,n} \right ] \mathbb{P} \left ( A_{k,n}^c \right ) \nonumber \\
& \hspace{4em} + \left . \left . \mathbb{E} \left [ \left ( \psi_n^{-1} \left \| f - \hat{f}_n \right \|_\infty \right ) ^ l \mid A_{n^{\frac{1}{d+1}},n}^c \right ] \mathbb{P} \left ( A_{n^{\frac{1}{d+1}},n}^c \right ) \right \} \right ]
\label{equation: Markov bound on P}
\end{align}
We now turn our attention to the expectations in \eqref{equation: Markov bound on P}. Note that for each $u \in \mathcal{N}$ there is a $u'$ such that $B \left ( u', r_k / 2 \right ) \subset W \left ( u, r_k \right )$. Let $C \left ( u', r_k \right )$ be the axis aligned cube with sides of length $l_k := l_k (u)$ inscribed into $B \left ( u', r_k / 2 \right )$ and let $C' \left ( u', r_k \right )$ and $C'' \left ( u', r_k \right )$ be respectively its coordinate-wise projection onto $\mathcal{X}$ and $\mathcal{Y}$. We therefore have that
\begin{align}
\mathbb{P} \left ( A_{k,n} \right ) & \geq 1 - \sum_{l=1}^M \prod_{t=1}^n \left ( 1 - \mathbb{P} \left ( \left ( \boldsymbol{X}_t, Y_t \right ) \in W \left ( u_l, r_k \right ) \right ) \right ) \nonumber \\ 
& \geq 1 - \sum_{l=1}^M \prod_{t=1}^n \left ( 1 - \mathbb{P} \left ( Y_t \in C'' \left ( u_l', r_k \right ) \mid \boldsymbol{X}_t \in C' \left ( u_l', r_k \right ) \right ) \mathbb{P} \left ( \boldsymbol{X}_t \in C' \left ( u_l', r_k \right ) \right ) \right ) \nonumber \\ 
& \geq 1 - \sum_{l=1}^M \prod_{t=1}^n \left ( 1 - \mathbb{P} \left ( R_t \geq 1 - l_k / C_f \mid \boldsymbol{X}_t \in C' \left ( u_l', r_k \right ) \right ) \int_{C' \left ( u_l', r_k \right )} f_{\boldsymbol{X}_t} \left ( x \right ) \mathrm{d} x \right ) \nonumber \\ 
& \geq 1 - \sum_{l=1}^M \left ( 1 - C_{1,R} \left ( l_k \left ( u_l \right ) / C_f \right ) C_{1,X} \operatorname{mes} \left ( C' \left ( u_l', r_k \left ( u_l \right ) \right ) \right )  \right ) ^ n
\label{equation: P A bound}
\end{align}
Note that $M$ is bounded from above by $C_{d} \psi_n^{-d}$ where $C_d$ depends on $d$ and the constants $\left \{ \overline{x}_1, \dots, \overline{x}_d, \overline{y} \right \}$ introduced in Assumption~\ref{assumption: Z is subset of hypercube}. Note also that (i) $l_k = r_k / \sqrt{d+1}$ and that (ii) since $\operatorname{mes} \left ( B \left ( u, r_k \right ) \right ) \geq \operatorname{mes} \left ( W \left ( u, r_k \right ) \right )$ for any $u \in \mathcal{N}$ each $r_k$ will satisfy
\begin{equation}
r_k \geq \psi_n \left ( \frac{d}{d+1} + k \right ) ^ \frac{1}{d+1} \left [ \frac{C_W \Gamma \left ( \frac{d+1}{2} + 1 \right )}{\pi^{\frac{d+1}{2}}} \right ] ^ \frac{1}{d+1} 
\label{eqaution: r_k bound}
\end{equation}
Therefore plugging \eqref{equation: C_W} and \eqref{eqaution: r_k bound} into \eqref{equation: P A bound} we obtain that
\begin{align*}
\mathbb{P} \left ( A_{k,n} \right ) & \geq 1 - C_d \psi_n^{-d} \left ( 1 - \left ( \frac{d}{d+1} + k \right ) \psi_n^{d+1} \right ) ^ n \\
& = 1 - C_d \left ( \frac{n}{\log (n)} \right ) ^ \frac{d}{d+1} \left ( 1 - \frac{1}{n} \left ( \frac{d}{d+1} + k \right ) \log (n) \right ) ^ n  \\
& \geq 1 - C_d n^{-k}
\end{align*}
For the radii $r_k$ defined through \eqref{equation: mes W bound} the fact that $\operatorname{mes} \left ( B (u , r ) \right ) \leq 2^{d+1} \operatorname{mes} \left ( W \left ( u, r \right ) \right )$ along with the monotonicity of $\Psi$ imply that 
\begin{equation}
r_k \leq \frac{2}{\sqrt{\pi}} \left ( C_W \Gamma \left ( \frac{d+1}{2} + 1 \right ) \right ) ^ \frac{1}{d+1} \psi_n \left ( \frac{d}{d+1} + k \right ) := C \psi_n \left ( \frac{d}{d+1} + k \right ) 
\label{equation: bound on radii}
\end{equation}
for each $k = 1, \dots, n^{\frac{1}{d+1}}$. For every such $k$ and any $\boldsymbol{x} \in \mathcal{X}$ with $\left \| \boldsymbol{x} \right \| \leq r_k \left ( 3 + \frac{1}{L} \right )$, Assumption~\ref{assumption: production frontier function continuous} along with the facts
\begin{enumerate}[(i)]
    \item $\left | f (\boldsymbol{0}) - \hat{f}_n (\boldsymbol{x}) \right | \leq \left | f (\boldsymbol{0}) - \hat{f}_n (\tilde{\boldsymbol{x}}) \right |$ for any $\tilde{\boldsymbol{x}} \leq \boldsymbol{x}$, and
    \item $f (\boldsymbol{0}) = \hat{f}_n (\boldsymbol{0}) = 0$
\end{enumerate}
together imply that
\begin{equation}
\left | f (\boldsymbol{x}) - \hat{f}_n (\boldsymbol{x}) \right | \leq r_k \left ( 3L + 1 \right ).
\label{equation: bound when x close to zero}
\end{equation}
Let $\mathcal{N}'$ be the coordinate-wise projection of $\mathcal{N}$ onto $\mathcal{X}$. For any $\boldsymbol{x} \in \mathcal{X}$ with $\left \| \boldsymbol{x} \right \| > r_k \left ( 3 + \frac{1}{L} \right )$, let 
\begin{equation}
\boldsymbol{x'} \in \left \{ \boldsymbol{v} \in \mathcal{N}' \mid \boldsymbol{v} < \boldsymbol{x} \text{ and } r_k \leq \left \| \boldsymbol{x} - \boldsymbol{v} \right \| \leq 2 r_k \right \}. 
\end{equation}
On the event $A_{k,n}$ there must be an $\boldsymbol{x}''$ with $\left \| \boldsymbol{x}' - \boldsymbol{x}'' \right \| \leq r_k$ for which $\hat{f}_n (\boldsymbol{x}) \geq f (\boldsymbol{x}'') - r_k$. Assumption~\ref{assumption: production frontier function continuous} along with the fact that $\hat{\Psi}_n \subseteq \Psi$ imply that
\begin{align}
\left | f (\boldsymbol{x}) - \hat{f}_n (\boldsymbol{x}) \right | \leq \left | f (\boldsymbol{x}) - f (\boldsymbol{x}'') \right | + r_k \leq L \left ( \left \| \boldsymbol{x} - \boldsymbol{x}' \right \| + \left \| \boldsymbol{x}' - \boldsymbol{x}'' \right \| \right ) + r_k \leq r_k \left ( 3L + 1 \right ). 
\label{equation: bound when x far from zero}
\end{align}
Therefore \eqref{equation: bound when x close to zero} and \eqref{equation: bound when x far from zero} together with \eqref{equation: bound on radii} imply that there are constants $C_1$ and $C_2$ such that for each $l > 0$
\begin{equation}
\mathbb{E} \left [ \left ( \psi_n^{-1} \left \| f - \hat{f}_n \right \|_\infty \right ) ^ l \mid A_{k,n} \right ] \leq \left ( C_1 k + C_2 \right ) ^ l  \hspace{2em} k = 1, \dots, n^{\frac{1}{d+1}}. 
\label{equation: conditional expectation bound}
\end{equation}
Plugging \eqref{equation: conditional expectation bound} into \eqref{equation: Markov bound on P}, we therefore have that
\begin{align}
\eqref{equation: Markov bound on P} & \leq e^{-\zeta} \left [ 1 + \sum_{l=1}^\infty \frac{1}{l!} \left ( C_1 + C_2 \right )^l + \sum_{l=1}^\infty \frac{1}{l!} \sum_{k=1}^{n^{\frac{1}{d+1}}-1} \left ( C_1 k + C_2 \right ) ^ l C_\mathcal{Z} \left ( d + 1 \right ) n^ {-k} \right . \nonumber \\
& \hspace{2em} \left . + \sum_{l=1}^\infty \frac{1}{l!} \left ( C_\mathcal{Z} n ^ \frac{1}{d+1} \sqrt{C_\mathcal{Z} \left ( d + 1 \right )} \right ) ^ l C_\mathcal{Z} \left ( d + 1 \right ) n ^ {-n^{\frac{1}{d+1}}} \right ] \nonumber \\ 
& := e^{-\zeta} \left [ 1 + S_1 + S_2 + S_3 \right ]
\label{equation: further Markov bound on P}
\end{align}
We now bound each of the sums in term. For the first sum, we have that $S_1 \leq e^{C_1 + C_2} := C_3$. For the second sum, we have the following provided that $n \geq e^{C_1 +1}$
\begin{align}
S_2 & \leq C_\mathcal{Z} \left ( d + 1 \right ) \sum_{k=1}^\infty n^{-k} \sum_{l=1}^\infty \frac{1}{l!} \left ( C_1 k + C_2 \right ) ^ l  \\ 
& \leq C_\mathcal{Z} \left ( d + 1 \right ) \sum_{k=1}^\infty n^{-k} e ^ {C_1 k + C_2} \\ 
& \leq C_\mathcal{Z} \left ( d + 1 \right ) e ^{C_2} \sum_{k=1}^\infty e ^ {-k} \\
& := C_4.
\label{equation: S2 bound}
\end{align}
For the third sum, we have the following 
\begin{align}
S_3 & \leq C_\mathcal{Z} \left ( d + 1 \right ) n^{-n^{\frac{1}{d+1}}} \sum_{l=0}^\infty \frac{1}{l!} \left ( C_\mathcal{Z} n ^ \frac{1}{d+1} \sqrt{C_\mathcal{Z} \left ( d + 1 \right )} \right ) ^ l \\ 
& \leq C_\mathcal{Z} \left ( d + 1 \right ) e ^ {-n^{\frac{1}{d+1}} \left (\log (n) - C_\mathcal{Z} \sqrt{C_\mathcal{Z} \left ( d + 1 \right )} \right )} 
\label{equation: S3 bound}
\end{align}
which goes to zero as $n \rightarrow \infty$ and is therefore bounded by some absolute constant $C_5$. Combining \eqref{equation: further Markov bound on P}, \eqref{equation: S2 bound} and \eqref{equation: S3 bound}, we have that $\eqref{equation: Markov bound on P} \leq e^{-\zeta} \left [ 1 + C_3 + C_4 + C_5 \right ]$, which completes the proof. 
\end{proof}

\subsubsection{Proof of Theorem~\ref{theorem: single change detection}}

\begin{proof}[Proof: part (i)]
Let $\lambda$ be as in Lemma~\ref{lemma: max bound}. With $N = 0$ we have the crude bound $R_t \leq \hat{R}_t \leq 1$ for each $t = 1, \dots, n$, which implies that
\begin{equation*}
\hat{L}_\tau \leq - 2 N \left ( 1, \tau, \boldsymbol{x}_0\right ) \log \left [ M \left ( 1 , \tau , \boldsymbol{x}_0\right ) \right ]
\end{equation*}
for all valid choices of $\tau$. Consequently, Lemma~\ref{lemma: max bound} guarantees that with high probability $\hat{L}_\tau \leq \lambda$ uniformly over all values of $\tau$. This proves part (i) of Theorem~\ref{theorem: single change detection}. 
\label{proof: thm 3.1 part (i)}
\end{proof}

\begin{proof}[Proof: first part of (ii)]
Let $\tilde{f}_n (\cdot)$ be the FDH estimator obtained from the sample $\left ( \boldsymbol{X}_t, Y_t \right )$ with $t = \eta + 1, \dots, n$. Theorem~\ref{theorem: exponential concentration in infty distance} guarantees that with probability at least $1 - Cn^{-1}$, where $C$ is given in Theorem~\ref{theorem: exponential concentration in infty distance}, it will hold that 
\begin{equation}
\left \| \tilde{f}_n - f_{(2)} \right \|_\infty \leq \left [ \frac{\log (n - \eta)}{n - \eta} \right ]^{\frac{1}{d+1}} \log (n) \leq \log^{\frac{3}{2}} \left ( n \right ) / \delta^{\frac{1}{d+1}}.
\label{equation: sup norm bound}
\end{equation}
Moreover since for all $\boldsymbol{x} \in \mathcal{X}$ it holds that $f_{(2)} \left ( \boldsymbol{x} \right ) \geq \hat{f}_n \left ( \boldsymbol{x} \right ) \geq \tilde{f}_n \left ( \boldsymbol{x} \right )$ we also have that 
\begin{equation}
\left ( f_{(2)} \left ( \boldsymbol{x} \right ) - \hat{f}_n \left ( \boldsymbol{x} \right ) \right ) \leq \left \| \tilde{f}_n - f_{(2)} \right \|_\infty, \hspace{2em} \forall x \in \mathcal{X}. 
\label{equation: dif sup bound}
\end{equation}
Finally, we will have that 
\begin{equation}
\hat{f}_n \left ( \boldsymbol{x} \right ) \geq \tilde{f}_n \left ( \boldsymbol{x} \right ) \geq x_0/ (2L), \hspace{2em} \forall \boldsymbol{x} \in \mathcal{X} \text{ s.t. } \left \| \boldsymbol{x} \right \| \geq x_0
\label{equation: f at x_0 bound}
\end{equation}
on the event $A_n := \left \{ \exists t \in \{ \eta+1, \ldots,n \} \text{ s.t. } R_t \geq 1/2 \text{ and } \left \| \boldsymbol{X}_t \right \| \geq x_0 / 2 \right \}$ since by Assumptions~\ref{assumption: production frontier function continuous} on $A_n$ for every such $\boldsymbol{x}$ there must be an $\boldsymbol{\tilde{\boldsymbol{x}}}$ with $x_0/2 \leq \left \| \boldsymbol{\tilde{x}} \right \| < x_0$ for which $\tilde{f}_n \left ( \boldsymbol{x} \right ) \geq \tilde{f}_n \left ( \boldsymbol{\tilde{x}} \right ) \geq \frac{1}{2} f_{(2)} \left ( \boldsymbol{\tilde{x}} \right ) \geq x_0 / (2L)$. The event $A_n$ in turn holds with probability at least $1 - n^{-1}$ because
\begin{align}
\mathbb{P} \left ( A_n \right ) & = 1 - \prod_{t = \eta + 1}^{n} \left [ 1 - \mathbb{P} \left ( R_t \geq 1/2 \mid \left \| \boldsymbol{X}_t \right \| \geq x_0 / 2 \right ) \mathbb{P} \left ( \left \| \boldsymbol{X}_t \right \| \geq x_0 / 2 \right ) \right ] \label{equation: A_n prob bound} \\
& \geq 1 - \left ( \max_{1 \leq t \leq n }\left [ 1 - \mathbb{P} \left ( R_t \geq 1/2 \mid \left \| \boldsymbol{X}_t \right \| \geq x_0 / 2 \right ) \mathbb{P} \left ( \left \| \boldsymbol{X}_t \right \| \geq x_0 / 2 \right ) \right ] \right )^{\delta} \nonumber \\ 
& \geq 1 - n^{-1}, \nonumber 
\end{align}
where passage to the final line holds with $C_{x,d}$ chosen sufficiently large in part (i) of Assumption~\ref{assumption: change point spacing}. Therefore, with high probability we have that
\begin{subequations}
\begin{align}
\hat{L}_{\hat{\eta}} \geq \hat{L}_{\eta} & = 2 N \left ( 1, \eta, \boldsymbol{x}_0 \right ) \log \left (  \left [ \max_{\substack{1 \leq t \leq \eta \\ \boldsymbol{X}_t > \boldsymbol{x}_0}} R_t \left ( \frac{f_{(1)} \left ( \boldsymbol{X}_t \right )}{f_{(2)} \left ( \boldsymbol{X}_t \right )} \right ) \left ( 1 + \frac{f_{(2)} \left ( \boldsymbol{X}_t \right ) - \hat{f}_n \left ( \boldsymbol{X}_t \right )}{\hat{f}_n \left ( \boldsymbol{X}_t \right )} \right ) \right ]^{-1} \right ) \nonumber \\
& \geq 2 N \left ( 1, \eta, \boldsymbol{x}_0 \right ) \log \left ( \left ( 1 / \mu \right ) \left [ 1 + \left \| \tilde{f}_n - f_{(2)} \right \|_\infty / \tilde{f}_n \left ( \boldsymbol{x}_0 \right ) \right ] ^ {-1} \right ) \nonumber \\
& \geq 2 N \left ( 1, \eta, \boldsymbol{x}_0 \right ) \left \{ \log \left ( 1 / \mu \right ) - \log \left ( 1 + (2L / x_0) \log^{\frac{3}{2}} (n) / \delta^{\frac{1}{d+1}} \right ) \right \} \label{equation: single change detection I} \\
& \geq 2 N \left ( 1, \eta, \boldsymbol{x}_0 \right ) \left \{ \log \left ( 1 / \mu \right ) - (2L / x_0) \log^{\frac{3}{2}} (n) / \delta^{\frac{1}{d+1}} \right \} \nonumber \\
& \geq \frac{\delta}{4} \log \left ( 1 / \mu \right ), 
\label{equation: single change detection III}
\end{align}
\end{subequations}
where in particular\eqref{equation: single change detection I}  holds with high probability due to \eqref{equation: f at x_0 bound} and Theorem~\ref{theorem: exponential concentration in infty distance}, and  \eqref{equation: single change detection III} holds with high probability due to Lemma~\ref{lemma: trimming} and part (ii) of Assumption~\ref{assumption: change point spacing}. 
Since $\lambda < \frac{\delta}{8} \log (1/\mu)$ by \eqref{equation: AMOC thresh condition} the change is detected and the first part of result (ii) is proved. 
\end{proof}

\begin{proof}[Proof: second part of (ii)]
Let $\lambda$ be as in Lemma~\ref{lemma: max bound}. We will show that with high probability
\begin{equation}
\left | \hat{\eta} - \eta \right | \leq \epsilon_n \coloneq C_2 \log (n) \vee C_3 \frac{\log (n)}{\log (1/\mu)}, 
\label{equation: epsilon n}
\end{equation}
which implies the second part of (ii). In particular we put $C_2 = C_{1,R}^{-1} \left ( 1 - e^{-1} \right ) ^ {-1}$ and $C_3 = (16/3) C_{1,R}^{-1} \left ( 1 - e^{-1} \right ) ^ {-1}$. Note that \eqref{equation: epsilon n} must hold if (i) $\hat{L}_{\eta} > \hat{L}_{\eta + \epsilon_n + j}$ and (ii) $\hat{L}_{\eta} > \hat{L}_{\eta - \epsilon_n - j}$ uniformly over all permissible $j$'s. Here we only prove (i) as (ii) can be shown analogously. First, with probability at least $1 - n^{-1}$ we must have that 
\begin{equation}
- \log \left [ M \left ( \eta + 1, \eta + \epsilon_n, \boldsymbol{x}_0 \right ) \right ] \leq \frac{1}{4} \log (1/\mu). 
\label{equation: null LR signal bound}
\end{equation}
This holds because the probability of \eqref{equation: null LR signal bound} not occurring is bounded as shown below, where in particular the $p$'s, $\overline{p}$'s and $F$'s are as defined in the proof of Lemma~\ref{lemma: max bound}: 
\begin{subequations}
\begin{align}
& \mathbb{P} \left ( M \left ( \eta + 1, \eta + \epsilon_n, \boldsymbol{x}_0 \right ) \leq e^{-\frac{1}{4} \log (1/\mu)} \right ) \nonumber \\
& = \sum_{k=0}^\epsilon \sum_{A \in F_{\eta + 1, \eta + \epsilon, k}} \prod_{t \in A} p_{t,\boldsymbol{x}_0} \mathbb{P} \left ( R_t \leq e^{-\frac{1}{4} \log (1 / \mu)} \mid \boldsymbol{X}_t > \boldsymbol{x}_0 \right ) \prod_{t \in A^c} \left ( 1 - p_{t, \boldsymbol{x}_0} \right )  \\ 
& \leq \sum_{k=0 }^{\epsilon_n} \binom{\epsilon_n}{k} \overline{p}_{\eta+1, \eta + \epsilon, \boldsymbol{x}_0} ^ {k} \left ( 1 - \overline{p}_{\eta+1, \eta + \epsilon, \boldsymbol{x}_0} \right ) ^ {\epsilon_n - k} \left ( 1 - C_{1,R} \left ( 1 - e^{-\frac{1}{4} \log (1 / \mu)} \right ) \right ) ^ k \nonumber \\ 
& = \left ( 1 - \frac{3}{4} C_{1,R} \left ( 1 - e^{-\frac{1}{4} \log (1/\mu)} \right ) \right ) ^ {\epsilon_n}.  
\label{equation: M epsilon > M eta bound}
\end{align}
\end{subequations}
Moreover, using the fact that any for any $x \in [0,1]$ the curve $x \mapsto (1 - e^{-x})$ lies everywhere above the line $x \mapsto (1 - e^{-1}) x$, if $\log (1/\mu) \leq 4$ then \eqref{equation: M epsilon > M eta bound} is bounded from above as 
\begin{equation}
\left [ 1 - \frac{3 C_{1,R}}{16} \left ( 1 - e^{-1} \right ) \log (1/\mu) \right ] ^ {\epsilon_n} \leq \exp \left ( - \epsilon_n \left [ \frac{3 C_{1,R}}{16} \left ( 1 - e^{-1} \right ) \log (1/\mu) \right ] \right ) \leq \frac{1}{n}. 
\label{equation: (1-e^x) property bound i}
\end{equation}
On the other hand if $\log (1/\mu) > 4$ then \eqref{equation: M epsilon > M eta bound} is bounded as
\begin{equation}
\left [ 1 - C_{1,R} \left ( 1 - e^{-1} \right ) \right ] ^ {\epsilon_n} \leq \exp \left ( -\epsilon_n C_{1,R} \left ( 1 - e^{-1} \right ) \right ) \leq \frac{1}{n}, 
\label{equation: (1-e^x) property bound ii}
\end{equation}
where the second inequality follows from the fact that $\log (1-x) \leq x$ for $x \in [0,1]$. Note also that on \eqref{equation: sup norm bound} and \eqref{equation: f at x_0 bound} the event \eqref{equation: null LR signal bound} implies that
\begin{equation}
\hat{M} \left ( \eta + 1, \eta + \epsilon_n + j, \boldsymbol{x}_0 \right ) > \hat{M} \left ( 1, \eta, \boldsymbol{x}_0 \right ), \hspace{2em} j = 0, \dots, n - \eta - \epsilon_n, 
\label{equation: M epsilon > M eta}
\end{equation}
which holds because
\begin{align}
\eqref{equation: M epsilon > M eta} & \supseteq \cap_{j=0}^{n-\eta-\epsilon_n} \left \{ M \left ( \eta + 1, \eta + \epsilon_n + j, \boldsymbol{x}_0 \right ) > \mu \left [ 1 + (2L / x_0) \log^{\frac{3}{2}} (n) / \delta^\frac{1}{d+1} \right ] \right \} \nonumber \\
& \supseteq \cap_{j=0}^{n-\eta-\epsilon_n} \left \{ M \left ( \eta + 1, \eta + \epsilon_n + j, \boldsymbol{x}_0 \right ) > e^{-\frac{1}{2} \log (1 / \mu)} \right \} \nonumber \\
& \supseteq \left \{ M \left ( \eta + 1, \eta + \epsilon_n, \boldsymbol{x}_0 \right ) > e^{-\frac{1}{4} \log (1 / \mu)} \right \}.
\label{equation: M epsilon > M eta chain}
\end{align}
Then, \eqref{equation: M epsilon > M eta} and \eqref{equation: M epsilon > M eta chain} imply that
\begin{equation}
- \log \left [ \hat{M} \left ( \eta + 1, \eta + \epsilon_n + j, \boldsymbol{x}_0 \right ) \right ] \leq \frac{1}{4} \log \left ( 1 / \mu \right ), \hspace{2em} j = 0, \dots, n - \eta - \epsilon_n.
\label{equation: M epsilon signal strength bound}
\end{equation}
Finally, on the events \eqref{equation: M epsilon > M eta} and \eqref{equation: M epsilon signal strength bound} we have that
\begin{subequations}
\begin{align}
 \hat{L}_{\eta} - \hat{L}_{\eta + \epsilon_n + j} & = - 2 N \left ( 1, \eta, \boldsymbol{x}_0 \right ) \log \left [ \hat{M} \left ( 1, \eta, \boldsymbol{x}_0 \right ) \right ] \nonumber \\ 
 & \quad + 2 N \left ( 1, \eta + \epsilon_n + j, \boldsymbol{x}_0 \right ) \log \left [ \hat{M} \left ( 1, \eta + \epsilon_n + j, \boldsymbol{x}_0 \right ) \right ] \label{equation: localization rate argument} \\ 
& \geq 2 N \left ( 1, \eta, \boldsymbol{x}_0 \right ) \log \left [ \left ( 1 / \mu \right ) \left [ 1 + (2L / x_0) \log^{\frac{3}{2}} (n) / \delta^{\frac{1}{d+1}} \right ] ^ {-1} \right ] \nonumber \\
& \quad + 2 N \left ( 1, \eta + \epsilon_n + j, \boldsymbol{x}_0 \right ) \log \left [ M \left ( \eta + 1, \eta + \epsilon_n + j, \boldsymbol{x}_0 \right ) \right ] \label{equation: single change localization I} \\ 
& \geq N \left ( 1, \eta, \boldsymbol{x}_0 \right ) \log \left ( 1 / \mu \right ) + 2 N \left ( 1, \eta, \boldsymbol{x}_0 \right ) \log \left [ M \left ( \eta + 1, \eta + \epsilon_n + j, \boldsymbol{x}_0 \right ) \right ] \nonumber \\
& \quad + 2 N \left ( \eta + 1, \eta + \epsilon_n + j, \boldsymbol{x}_0 \right ) \log \left [ M \left ( \eta + 1, \eta + \epsilon_n + j, \boldsymbol{x}_0 \right ) \right ] \nonumber \\
& \geq 2 N \left ( 1, \eta, \boldsymbol{x}_0 \right ) \left \{ \frac{1}{2} \log \left ( 1/\mu \right ) + \log \left [ M \left ( \eta + 1, \eta + \epsilon_n + j, \boldsymbol{x}_0 \right ) \right ] \right \} - \lambda \label{equation: single change localization II} \\ 
& \geq \frac{1}{2} N \left ( 1, \eta, \boldsymbol{x}_0 \right ) \log \left ( 1/\mu \right ) - \lambda \label{equation: single change localization III} \\
& \geq \frac{\delta}{8} \log (1/\mu) - \lambda, \label{equation: single change localization IV} 
\end{align}
\end{subequations}
where in particular \eqref{equation: single change localization I} holds with high probability due to \eqref{equation: f at x_0 bound} and Theorem~\ref{theorem: exponential concentration in infty distance}, \eqref{equation: single change localization II} holds with high probability due to Lemma~\ref{lemma: max bound}, \eqref{equation: single change localization III} holds with high probability due to~\ref{equation: M epsilon signal strength bound}, and \eqref{equation: single change localization IV} holds with high probability due to Lemma~\ref{lemma: trimming} provided we choose $C_1 \geq \frac{1}{4}$ in Theorem~\ref{theorem: single change detection}. By condition \eqref{equation: AMOC thresh condition} and $C_3$ appropriately chosen we will have that $(\delta / 8) \log (1/\mu) - \lambda > 0$ which establishes \eqref{equation: epsilon n}.
\end{proof}

\subsubsection{Proof of Theorem~\ref{lemma: single change inference}}

\begin{proof}
Let $\nu$ be any quantity such that (i) $\nu \rightarrow  \infty$ as $n \rightarrow \infty$ and (ii) $\nu = o \left ( \eta \right )$. Observe that 
\begin{equation*}
\mathbb{P} \left ( \hat{\eta} - \eta < 0 \right ) \leq \mathbb{P} \left ( \cup_{\tau = 1}^{\nu-1} \left \{ \hat{L}_{\eta} < \hat{L}_{\tau} \right \} \right ) + \mathbb{P} \left ( \cup_{\tau = \nu}^{\eta-1} \left \{ \hat{L}_{\eta} < \hat{L}_{\tau} \right \} \right ). 
\end{equation*}
Recall from \eqref{equation: quasi-LR stat} that we follow the convention $0 \times \infty = 0$. Therefore, putting $\tau^*_\nu = \min \{ \tau < \nu \mid N (1, \tau, \boldsymbol{x_0}) > 0 \}$ we have that
\begin{align*}
& \mathbb{P} \left ( \cup_{\tau = 1}^{\nu-1} \left \{ \hat{L}_{\eta} < \hat{L}_{\tau} \right \} \right ) \\
& \quad \quad \leq \mathbb{P} \left ( - 2 N(1, \nu-1, \boldsymbol{x_0}) \log \left [ \hat{M} \left ( 1, \tau^*_\nu, \boldsymbol{x_0} \right ) \right ] > - 2 N(1, \eta, \boldsymbol{x_0}) \log \left [ \hat{M} \left ( 1, \eta, \boldsymbol{x_0} \right ) \right ] \right ). 
\end{align*}
Then, the facts that
\begin{enumerate}
    \item $- 2  \log \left [ \hat{M} \left ( 1, \tau^*_\nu, \boldsymbol{x_0} \right ) \right ] = \mathcal{O}_\mathbb{P} (1)$ by Assumption~\ref{assumption: X and R distribution};
    \item $N(1, \nu-1, \boldsymbol{x_0}) / N(1, \eta, \boldsymbol{x_0}) \overset{p}{\rightarrow} 0$ since $\nu = o(\eta)$, and 
    \item $\hat{M} \left ( 1, \eta, \boldsymbol{x_0} \right ) \overset{p}{\rightarrow} \rho \in (0, \mu)$ due to Scenario~\ref{scenario: AMOC}
\end{enumerate}
imply that $\mathbb{P} ( \cup_{\tau = 1}^{\nu-1} \{ \hat{L}_{\eta} < \hat{L}_{\tau} \} ) \rightarrow 0$ as $n \rightarrow \infty$. Next we have that
\begin{align*}
& \mathbb{P} \left ( \cup_{\tau = \nu}^{\eta-1} \left \{ \hat{L}_{\eta} < \hat{L}_{\tau} \right \} \right ) \\
& \quad = \mathbb{P} \left ( \cup_{\tau = \nu}^{\eta-1} \left \{ - 2 N \left ( 1, \eta, \boldsymbol{x}_0 \right ) \log  \left [ \hat{M} \left ( 1, \eta, \boldsymbol{x}_0 \right ) \right ] < - 2 N \left ( 1, \tau, \boldsymbol{x}_0 \right ) \log  \left [ \hat{M} \left ( 1, \tau, \boldsymbol{x}_0 \right ) \right ] \right \} \right ) \\
& \quad \leq \mathbb{P} \left ( - 2 N \left ( 1, \eta, \boldsymbol{x}_0 \right ) \log  \left [ \hat{M} \left ( 1, \eta, \boldsymbol{x}_0 \right ) \right ] < - 2 N \left ( 1, \eta - 1, \boldsymbol{x}_0 \right ) \log \left [ \min_{\tau = \nu, \dots, \eta - 1} \hat{M} \left ( 1 , \tau, \boldsymbol{x}_0 \right ) \right ] \right ).
\end{align*}
Then, the facts that
\begin{enumerate}
    \item $N \left ( 1, \eta - 1, \boldsymbol{x}_0 \right ) / N \left ( 1, \eta, \boldsymbol{x}_0 \right ) \overset{p}{\rightarrow} 1$,  
    \item $\hat{M} \left ( 1, \eta, \boldsymbol{x}_0 \right ) \overset{p}{\rightarrow} \rho \in (0, \mu)$, and 
    \item $\min_{\tau = \nu, \dots, \eta-1} \hat{M} \left ( 1, \tau, \boldsymbol{x}_0 \right ) \overset{p}{\rightarrow} \rho \in  (0, \mu)$
\end{enumerate}
imply that $\mathbb{P} ( \cup_{\tau = \nu}^{\eta-1} \{ \hat{L}_{\eta} < \hat{L}_{\tau} \} ) \rightarrow 0$ as $n \rightarrow \infty$. from which we obtain that $\lim_{n \rightarrow \infty} \mathbb{P} \left ( \hat{\eta} - \eta < 0 \right ) = 0$. 
For any $k \geq 1$ we consequently have that
\begin{align*}
& \mathbb{P} \left ( \hat{\eta} - \eta \geq k \right ) \\
& = \mathbb{P} \left ( \cap_{\tau = \eta}^{\eta+k-1} \left \{ \hat{L}_{\eta+k} > \hat{L}_\tau \right \} \right ) \\
& = \mathbb{P} \left ( \cap_{\tau = \eta}^{\eta + k -1} \left \{ - 2 N \left ( 1, \eta + k, \boldsymbol{x}_0 \right ) \log \left [ \hat{M} \left ( 1, \eta + k, \boldsymbol{x}_0 \right ) \right ] > - 2 N \left ( 1, \eta, \boldsymbol{x}_0 \right ) \log \left [ \hat{M} \left ( 1, \tau, \boldsymbol{x}_0 \right ) \right ] \right \}  \right ) \\
& = \mathbb{P} \left ( \cap_{\tau=\eta}^{\eta+k-1} \left \{ \hat{M} \left ( 1, \tau, \boldsymbol{x}_0 \right ) \vee \hat{M} \left ( \tau + 1, \eta + k, \boldsymbol{x}_0 \right ) < \exp \left ( \frac{N \left ( 1, \tau, \boldsymbol{x}_0 \right )}{N \left ( 1, \eta + k, \boldsymbol{x}_0 \right )} \log \left [ \hat{M} \left ( 1, \tau, \boldsymbol{x}_0 \right ) \right ] \right ) \right \}  \right ) \\
& = \mathbb{P} \left ( \cap_{\tau=\eta}^{\eta+k-1} \left \{ \left \{ \hat{M} \left ( 1, \tau, \boldsymbol{x}_0 \right ) < \exp \left ( \frac{N \left ( 1, \tau, \boldsymbol{x}_0 \right )}{N \left ( 1, \eta + k, \boldsymbol{x}_0 \right )} \log \left [ \hat{M} \left ( 1, \tau, \boldsymbol{x}_0 \right ) \right ] \right ) \right \} \cap \right. \right . \\
& \hspace{4em} \left . \left . \left \{ \hat{M} \left ( \tau + 1, \eta + k, \boldsymbol{x}_0 \right ) < \exp \left ( \frac{N \left ( 1, \tau, \boldsymbol{x}_0 \right )}{N \left ( 1, \eta + k, \boldsymbol{x}_0 \right )} \log \left [ \hat{M} \left ( 1, \tau, \boldsymbol{x}_0 \right ) \right ] \right ) \right \} \right \}  \right ) \\
& = \mathbb{P} \left ( \cap_{\tau = \eta}^{\eta + k - 1} \left \{ \hat{M} \left ( \tau + 1, \eta + k, \boldsymbol{x}_0 \right ) < \exp \left ( \frac{N \left ( 1, \tau, \boldsymbol{x}_0 \right )}{N \left ( 1, \eta + k, \boldsymbol{x}_0 \right )} \log \left [ \hat{M} \left ( 1, \tau, \boldsymbol{x}_0 \right ) \right ] \right ) \right \}  \right ).
\end{align*} 
As $n \rightarrow \infty$, for any fixed $k$, we have that
\begin{enumerate}
    \item $M \left ( 1, \eta, \boldsymbol{x}_0 \right ) \overset{p}{\rightarrow} \rho$
    \item $N \left ( 1, \eta, \boldsymbol{x}_0 \right ) / N \left ( 1, \eta + k, \boldsymbol{x}_0 \right ) \overset{p}{\rightarrow} 1$
    \item $\hat{M} \left ( t_1, t_2, \boldsymbol{x}_0 \right ) \overset{d}{\rightarrow} M \left ( t_1, t_2, \boldsymbol{x}_0 \right )$ for $\eta < t_1 \leq t_2 \leq n$ and $t_1,t_2$ fixed
\end{enumerate}
and as such combined with the fact that $\lim_{n \rightarrow \infty} \mathbb{P} \left ( \mathcal{E} \right ) = 1$ by Slutsky's theorem and the continuous mapping theorem we obtain following, where for economy of notation we let $M \left ( t_1, t_2, \boldsymbol{x}_0 \right ) = 0$ whenever $t_1 > t_2$
\begin{align*}
\lim_{n \rightarrow \infty} \mathbb{P} \left ( \hat{\eta} - \eta \geq k \right ) & = \mathbb{P} \left ( \cap_{\tau = \eta}^{\eta + k - 1} \left \{ M \left ( \tau + 1, \eta + k, \boldsymbol{x}_0 \right ) < \rho \vee M \left ( \eta + 1, \tau, \boldsymbol{x}_0 \right ) \right \} \right ) \\
& \leq \mathbb{P} \left ( \cap_{\tau = \eta}^{\eta + k - 1} \left \{ M \left ( \tau + 1, \eta + k, \boldsymbol{x}_0 \right ) < \mu \vee M \left ( \eta + 1, \tau, \boldsymbol{x}_0 \right ) \right \} \right ) \\
& = \mathbb{P} \left ( M \left ( \eta + 1, \eta + k, \boldsymbol{x}_0 \right ) < \mu \right ) \\
& = \prod_{t=\eta+1}^{\eta+k} \mathbb{P} \left ( R_t \mathbf{1}_{\left \{ \boldsymbol{X}_t > \boldsymbol{x}_0 \right \}} < \mu \right ). 
\end{align*}
Therefore if the random variables $\left ( \boldsymbol{X}_t, R_t \right )_{t=1,\dots,n}$ are identically distributed we have that
\begin{equation*}
\lim_{n \rightarrow \infty} \mathbb{P} \left ( \hat{\eta} - \eta + 1 \leq k \right ) \geq
\begin{cases}
1 - \left ( 1 - \mathbb{P} \left ( R_1 \mathbf{1}_{\left \{ \boldsymbol{X}_1 > \boldsymbol{x}_0 \right \}} \geq \mu \right ) \right )^{\left \lfloor k \right \rfloor} & \text{ if } k \geq 1 \\
0 & \text{ if } k < 1
\end{cases},
\end{equation*}
which proves part (ii) of the lemma. For part (i) if the lemma it is enough to observe that by Assumption~\ref{assumption: X and R distribution} for each $t = 1, \dots, n$
\begin{align*}
\mathbb{P} \left ( R_t \mathbf{1}_{\left \{ \boldsymbol{X}_t > \boldsymbol{x}_0 \right \}} \geq \mu \right ) & = \mathbb{P} \left ( R_t \geq \mu \mid \boldsymbol{X}_t > \boldsymbol{x}_0 \right ) \mathbb{P} \left ( \boldsymbol{X}_t > \boldsymbol{x}_0 \right ) \\
& \geq C_{1,R} \left ( 1 - \mu \right ) C_{1,X} \operatorname{mes} \left ( \mathcal{X} \setminus \{ \boldsymbol{x} \leq \boldsymbol{x}_0 \} \right ). 
\end{align*}
and consequently
\begin{align*}
& \lim_{n \rightarrow \infty} \mathbb{P} \left ( \hat{\eta} - \eta + 1 \leq k \right ) \\
& \quad \quad \geq \begin{cases}
1 - \left ( 1 - C_{1,R} \left ( 1 - \mu \right ) C_{1,X} \operatorname{mes} \left ( \mathcal{X} \setminus \{ \boldsymbol{x} \leq \boldsymbol{x}_0 \} \right ) \right )^{\left \lfloor k \right \rfloor} & \text{ if } k \geq 1 \\
0 & \text{ if } k < 1
\end{cases}. 
\end{align*}
\end{proof}

\subsubsection{Proof of Proposition~\ref{lemma: global change minimax lower bound}}

\begin{proof}

Let $P_{1:n} = \otimes_{t=1}^n P_t$ denote the joint distribution of the random variables $\left \{ \left ( \boldsymbol{X}_t, Y_t \right ) \mid t = 1, \dots, n \right \}$ with
\begin{equation*}
Y_t = \begin{cases}
\mu f \left ( \boldsymbol{X}_t \right ) R_t & \text{ for } t = 1, \dots, \delta \\
f \left ( \boldsymbol{X}_t \right ) R_t & \text{ for } t = \delta + 1, \dots, n, 
\end{cases}
\end{equation*}
the $\boldsymbol{X}$'s and $R$'s are mutually independent and identically distributed with marginal density functions $p_{\boldsymbol{X}} \left ( \cdot \right )$ and $p_R (\cdot)$ respectively, satisfying Assumption~\ref{assumption: X and R distribution}. Let $Q_{1:n} = \otimes_{t=1}^n Q_t$ denote the joint law of the random variables $\{ ( \boldsymbol{\tilde{X}}_t, \tilde{Y}_t ) \mid t = 1, \dots, n \}$ where 
\begin{equation*}
\tilde{Y}_t = \begin{cases}
\mu f ( \boldsymbol{\tilde{X}}_t ) \tilde{R}_t & \text{ for } t = 1, \dots, \delta + \tau \\
f ( \boldsymbol{\tilde{X}}_t ) \tilde{R}_t & \text{ for } t = \delta + \tau + 1, \dots, n, 
\end{cases}
\end{equation*}
the distribution of the $\boldsymbol{\tilde{X}}$'s and $\tilde{R}$'s is the same as that of the $X$'s and $R$'s, and $\tau$ is a positive integer no larger than $n - \delta - 1$. Note that $\eta \left ( P_{1:n} \right ) = \delta$ and $\eta \left ( Q_{1:n} \right ) = \delta + \tau$. Therefore applying Le Cam's two point method, see for example Section 15.2 in \cite{wainwright2019high}, it holds that
\begin{equation*}
\inf_{\hat{\eta}} \sup_{P \in \mathcal{Q}_n} \mathbb{E}_{P} \left [ \left | \hat{\eta} - \eta \right | \right ] \geq \tau \left ( 1 - \text{TV} \left ( P_{1:n}, Q_{1:n} \right ) \right ) \geq \frac{\tau}{2} \exp \left ( - \text{KL} \left ( Q_{1:n}, P_{1:n} \right ) \right )
\end{equation*}
where $\text{TV} \left ( \cdot, \cdot \right )$ denotes the total variation distance, $\text{KL} \left ( \cdot, \cdot \right )$ denotes the Kullback-Liebler divergence, and the second inequality follows from Lemmas 2.1 and 2.6 in \cite{tsybakov2009nonparametric} respectively. Since $P_{1:n}$ and $Q_{1:n}$ are product measures we have that 
\begin{align*}
\text{KL} \left ( Q_{1:n}, P_{1:n} \right ) & = \sum_{\delta < t \leq \delta + \tau} \text{KL} \left ( Q_t, P_t \right ) \\
& = \tau \int_{\mathcal{X}} \int_{\mathcal{Y}} \log \left ( \frac{f_{\boldsymbol{\tilde{\boldsymbol{X}}}_{\delta+1}, \tilde{Y}_{\delta+1}} \left ( \boldsymbol{x}, y \right )}{f_{\boldsymbol{X}_{\delta+1}, Y_{\delta+1}} \left ( \boldsymbol{x}, y \right )} \right ) f_{\boldsymbol{\tilde{X}}_{\delta+1}, \tilde{Y}_{\delta+1}} \left ( \boldsymbol{x}, y \right ) \mathrm{d} y \mathrm{d} \boldsymbol{x}
\end{align*}
where $f_{\boldsymbol{X}_{\delta+1}, Y_{\delta+1}} \left ( \cdot, \cdot \right )$ and  $f_{\boldsymbol{\tilde{X}}_{\delta+1}, \tilde{Y}_{\delta+1}} \left ( \cdot, \cdot \right )$ denote respectively the joint densities of $\left (\boldsymbol{X}_{\delta+1}, Y_{\delta+1} \right )$ and $( \boldsymbol{\tilde{X}}_{\delta+1}, \tilde{Y}_{\delta+1} )$. Moreover,
\begin{align}
f_{\boldsymbol{X}_{\delta+1}, Y_{\delta+1}} \left ( \boldsymbol{x}, y \right ) & = \frac{\partial}{\partial \boldsymbol{x} \partial y} \mathbb{P} \left ( \boldsymbol{X}_{\delta+1} \leq \boldsymbol{x}, Y_{\delta+1} \leq y \right ) \nonumber \\
& = \frac{\partial}{\partial \boldsymbol{x}} \int_{\boldsymbol{z} \leq \boldsymbol{x}} \frac{\partial}{\partial y} \mathbb{P} \left ( f (\boldsymbol{z} R_{\delta+1} \leq y \right ) p_{\boldsymbol{X}} \left ( \boldsymbol{z} \right ) \mathrm{d} \boldsymbol{z} \nonumber \\
& = \frac{\partial}{\partial\boldsymbol{x}} \int_{\boldsymbol{z} \leq \boldsymbol{x}} \frac{1}{f(\boldsymbol{z})} p_R \left ( y / f (\boldsymbol{z}) \right ) p_{\boldsymbol{X}} \left ( \boldsymbol{z} \right ) \mathrm{d} \boldsymbol{z} \nonumber \\ 
& = \frac{1}{f(\boldsymbol{x})} p_R \left ( y / f (\boldsymbol{x}) \right ) p_{\boldsymbol{X}} \left ( \boldsymbol{x} \right ) \label{equation: joint density delta + 1}
\end{align}
and likewise 
\begin{equation*}
f_{\boldsymbol{\tilde{X}}_{\delta+1}, \tilde{Y}_{\delta+1}} \left ( x, y \right ) =  \frac{1}{\mu f(\boldsymbol{x})} p_R \left ( y / f (\boldsymbol{x}) \right ) p_{\boldsymbol{X}} \left ( \boldsymbol{x} \right ).
\end{equation*}
It therefore follows that $\text{KL} \left ( Q_{1:n}, P_{1:n} \right ) = \tau \log \left ( \frac{1}{\mu} \right )$ and therefore
\begin{equation}
\inf_{\hat{\eta}} \sup_{P \in \mathcal{Q}} \mathbb{E}_{P} \left [ \left | \hat{\eta} - \eta \right | \right ] \geq \frac{\tau}{2} \exp \left ( - \tau \log \left ( \frac{1}{\mu} \right ) \right ). 
\label{equation: minimax tau bound}
\end{equation}
Finally set $\tau = \min \left (  \left \lceil \left ( \log (1/\mu) \right ) ^{-1} \right \rceil, n - \delta - 1 \right )$. Note that if $n - \delta - 1 < \left \lceil \left ( \log (1/\mu) \right ) ^{-1} \right \rceil$ then 
\begin{equation}
\delta \log \left ( \frac{1}{\mu} \right ) \leq \frac{\delta}{n - 2 - \delta} \leq \frac{n}{n/2 - 2} \leq 10 
\label{equation: change energy inequality}
\end{equation}
for $n \geq 5$. However we also have $\delta \log \left ( 1 / \mu \right ) \geq \zeta_n$ where $\zeta_n \rightarrow \infty$. Therefore, there is a constant $n_0$ depending on the sequence of $\zeta$'s such that for $n > n_0$ \eqref{equation: change energy inequality} does not hold, so we must have that  
\begin{equation*}
\tau = \left \lceil \left ( \log (1/\mu) \right ) ^{-1} \right \rceil
\end{equation*}
for $n$ sufficiently large. Plugging this value for $\tau$ into \eqref{equation: minimax tau bound} and using the fact that $\lceil x \rceil \geq x$ for any $x \geq 0$ completes the proof. 
\end{proof}

\subsubsection{Proof of Theorem~\ref{theorem: multiple change detection}}

\begin{proof}[Proof: part (i)]
Let $\lambda$ be as in Lemma~\ref{lemma: max bound}. Arguing as in the proof of part (i) of Theorem~\ref{theorem: single change detection} we have that with high probability Algorithm~\ref{algorithm: multiple change detection} will not spuriously detect a change when acting on a stretch of data free from change points. We now show that with high probability all changes are detected.  Let $\hat{f}_n \left ( \cdot \right )$ and $\tilde{f}_n \left ( \cdot \right )$ be the FDH estimators computed on data indexed on $\left \{ 1, \dots, n \right \}$ and $\left \{ \eta_{K} + 1, \dots, n \right \}$ respectively. For all $\boldsymbol{x} \in \mathcal{X}$ it holds that $f_{(K+1)} \left ( \boldsymbol{x} \right ) \geq \hat{f}_n \left ( \boldsymbol{x} \right ) \geq \tilde{f}_n \left ( \boldsymbol{x} \right )$, and by Theorem~\ref{theorem: exponential concentration in infty distance} with probability at least $1 - C n^{-2}$ it will hold that 
\begin{equation}
\left \| \tilde{f}_n - f_{(K+1)} \right \|_\infty \leq  2 \left [ \frac{\log (n - \eta_K)}{n - \eta_K} \right ]^{\frac{1}{d+1}} \log (n) \leq 2 \log^{\frac{3}{2}} \left ( n \right ) / \delta^{\frac{1}{d+1}}_K.
\label{equation: f_N sup norm bound}
\end{equation}
Moreover on the event 
\begin{equation*}
    A_{\eta_K} := \left \{ \exists t \in \left [ \eta_K+1, n \right ] \text{ s.t. } R_t \geq 1/2 \text{ and } \left \| \boldsymbol{X}_t \right \| \geq x_0 / 2 \right \},
\end{equation*}
which by part (i) of Assumption~\ref{assumption: multiple change point spacing} and calculations analogous to \eqref{equation: A_n prob bound} holds with probability at least $1 - n^{-2}$, we must have that
\begin{equation}
\tilde{f}_n \left ( \boldsymbol{x} \right ) \geq x_0 / (2L), \hspace{2em} \forall \boldsymbol{x} \in \mathcal{X} \text{ s.t. } \left \| \boldsymbol{x} \right \| \geq x_0. 
\label{equation: f_N at x_0 bound}
\end{equation}
Consequently, arguing as in \eqref{equation: single change detection III} and making use of Assumption~\ref{assumption: multiple change point spacing} we have that on the event \eqref{equation: f_N at x_0 bound}
\begin{align*}
\hat{L}_{\eta_K - \delta_K / 2 + 1, \eta_K, n} & \geq 2 N \left ( \eta_K - \delta_K / 2 + 1, \eta_K, \boldsymbol{x}_0 \right ) \log \left [ \left ( 1 / \mu_K \right ) \left [ 1 + \left \| \tilde{f}_n - f_{(K)} \right \|_\infty / \tilde{f}_n \left ( \boldsymbol{x}_0 \right ) \right ] ^ {-1} \right ] \\
& \geq 2 N \left ( \eta_K - \delta_K / 2 + 1, \eta_K, \boldsymbol{x}_0 \right ) \left \{ \log \left ( 1 / \mu_K \right ) - \log \left [ 1 + (4L / x_0) \log^{\frac{3}{2}} (n) / \delta_K^{\frac{1}{d+1}} \right ] \right \} \\
& \geq 2 N \left ( \eta_K - \delta_K / 2 + 1, \eta_K, \boldsymbol{x}_0 \right ) \left \{ \log \left ( 1 / \mu_K \right ) - (2L / x_0) \log^{\frac{3}{2}} (n) / \delta_K^{\frac{1}{d+1}} \right \} \\
& \geq \frac{\delta_K}{8} \log \left ( 1 / \mu_K \right ).
\end{align*}
Therefore, since $\lambda < (\delta_K / 8) \log \left ( 1 / \mu_K \right )$ by condition \eqref{equation: multiple change threshold thresh condition} with $C_3$ chosen appropriately, we have that Algorithm~\ref{algorithm: multiple change detection} will detect the right-most change at the latest when inspecting data indexed on $\left \{ \eta_K - \delta_K/2 + 1, \dots, n \right \}$. Consequently, Algorithm~\ref{algorithm: multiple change detection} re-starts at an index at least $\delta_{K-1}/2$ from the change point location $\eta_{K-1}$. By induction each change is detected with probability at least $1 - Cn^{-2}$. Since necessarily $K \leq n$ all changes are detected uniformly with probability at least $1 - Cn^{-1}$. This proves part (i) of the Theorem. 
\end{proof}

\begin{proof}[Proof: part (ii)]
From the proof of part (i) and definition \eqref{equation: s e refit points} it is clear that with high probability $\eta_k - \tilde{s}_k \geq \delta_{k}/2$ and $\tilde{e}_k - \eta_k \geq \delta_k / 2$ uniformly over all $k = 1, \dots, K$. Then, arguments identical to \eqref{equation: localization rate argument}-\eqref{equation: single change localization III} give that for any $k \in \left \{ 1, \dots, K \right \}$ and any integers $1 \leq \check{s}_k < \check{e}_k' \leq n$ with $\check{s}_k' - \check {e}_k' \geq \delta_k$ and $\eta_k - \check{s}_k \geq \nu$ putting $\check{\eta}_k = \argmax_{\check{s}_k + \nu \leq \tau \leq \check{e}_k} \hat{L}_{\check{s}_k, \tau, \check{e}_k}$ that with probability at least $1 - C_1 n^{-2}$
\begin{equation*}
\left | \check{\eta}_k - \eta_k \right | \leq C_2 \log (n) \vee C_3 \frac{\log (n)}{\log (1/ \mu_k)}
\end{equation*}
for absolute constants $C_1, C_2, C_3 > 0$. Since the above result holds for the $\tilde{s}$'s and $\tilde{e}$'s recovered in the proof of part (i) and $K < n$ the theorem is proved. 
\end{proof}

\subsubsection{Proof of Theorem~\ref{theorem: consistent detection under local alternatives}}

\begin{proof}[Proof: part (i)]
Let $\lambda$ be as in Lemma~\ref{lemma: multiscale gird max bound}. Arguing as in the proof of part (i) of Theorem~\ref{theorem: single change detection} the procedure outlined in Section~\ref{section: local change methodology} will not spuriously detect a change when acting on a stretch of data free from change points. We now show that with high probability all changes are detected. Introduce the quantity
\begin{equation}
\mathcal{X}_K' \in \argmax_{\mathcal{X}' \in \mathfrak{X}} \operatorname{mes} \left ( \mathcal{X}' \right ) \quad \text{subject to} \quad \mathcal{X}' \subseteq \mathcal{X}_{(K)}. 
\label{equation: X_K'}
\end{equation}
Note that by part (i) of Assumption~\ref{assumption: size of regions},  with $A_n$ chosen as in Theorem~\ref{theorem: consistent detection under local alternatives}, $\mathcal{X}_K'$ exists for $C_\chi$ sufficiently large. The quantity may not be unique, in which case the ensuing argument holds for any element in the admissible set. We have that
\begin{subequations}
\begin{align}
& \hat{L}_{\eta_K - \delta_K / 2 + 1, \eta, n}^{\mathcal{X}_K'} \nonumber \\
& \geq 2 N \left ( \eta_K - \delta_K / 2 + 1, \eta_K, \mathcal{X}_K' \right ) \left \{ \log \left ( 1 / \mu_K \right ) - (2L / x_0) \log^{\frac{3}{2}} (n) / \delta_K^{\frac{1}{d+1}} \right \} \label{equation: multiscale detection I} \\
& \geq N \left ( \eta_K - \delta_K / 2 + 1, \eta_K, \mathcal{X}_K' \right ) \log ( 1 / \mu_K ) \label{equation: multiscale detection II} \\
& \geq 2^{-1} C_{1,X} \operatorname{mes} \left ( \mathcal{X}_K' \right ) \delta_K \log \left ( 1 / \mu_K \right ) \label{equation: multiscale detection III} \\
& \geq 2^{-(d+1)} C_{1,X} \operatorname{mes} \left ( \mathcal{X}_{(K)} \right ) \delta_K \log \left ( 1 / \mu_K \right ) \label{equation: multiscale detection IV}
\end{align}
\end{subequations}
where in particular \eqref{equation: multiscale detection I} holds with high probability due to \eqref{equation: f_N sup norm bound} and \eqref{equation: f_N at x_0 bound}, \eqref{equation: multiscale detection II} holds due to Assumption~\ref{assumption: multiple change point spacing}, \eqref{equation: multiscale detection III} holds with high probability due to Lemma~\ref{lemma: trimming multiscale grid}, and \eqref{equation: multiscale detection IV} holds since necessarily $\operatorname{mes} \left ( \mathcal{X}_K' \right ) \geq 2^{-d} \operatorname{mes} \left ( \mathcal{X}_{(K)} \right )$. Therefore, on a high probability set the first change is detected provided $C_4$ is sufficiently large, and the algorithm restarts no further than $\delta_{K-1} / 2$ from the $(K-1)$-th change point. By induction each subsequent change is detected with high probability. This proves part (i) of the Theorem.
\end{proof}

\begin{proof}[Proof: part (ii)]
For each $k = 1, \dots, K$ introduce the quantity
\begin{equation*}
\epsilon_{n,k} = 2 C_{1,R}^{-1} C_{1,X}^{-1} \left ( 1 - e^{-1} \right )^{-1} \log^2 (n) \vee 2 \frac{C_{1,R}^{-1} C_{1,X}^{-1} \left ( 1 - e^{-1} \right )^{-1} 2^{d+3}}{\operatorname{mes} \left ( \mathcal{X}_{(k)} \right ) \log (1 / \mu_k)} \log^2 (n). 
\end{equation*}
Similar to \eqref{equation: epsilon n} we will show that (i) $\max_{\mathcal{X}' \in \mathfrak{X}}\hat{L}^{\mathcal{X}'}_{\tilde{s}_k, \eta_k, \tilde{e}_k} > \max_{\mathcal{X}' \in \mathfrak{X}} \hat{L}^{\mathcal{X}'}_{\tilde{s}_k, \eta_k + \epsilon_{n,k} + j, \tilde{e}_K}$ and (ii) $\max_{\mathcal{X}' \in \mathfrak{X}}\hat{L}^{\mathcal{X}'}_{\tilde{s}_k, \eta_k, \tilde{e}_k} > \max_{\mathcal{X}' \in \mathfrak{X}} \hat{L}^{\mathcal{X}'}_{\tilde{s}_k, \eta_k + \epsilon_{n,k} - j, \tilde{e}_k}$ uniformly over all permissible $j$'s, and for each $k = 1, \dots, K$. This implies the stated localization rate. Again, we only prove (i) as (ii) can be shown analogously. We begin by showing that the refitted estimator for the $K$-th change point location attains the stated localization rate.
Let $\mathcal{X}_K'$ be as defined in \eqref{equation: X_K'}. We have that with probability at least $1 - n^{-2}$ that 
\begin{equation*}
\max_{\mathcal{X}' \in \mathfrak{X}} \left \{ - \frac{2^{d+1}}{C_{1,X} \operatorname{mes} \left ( \mathcal{X}_{(K)} \right )} \log \left [ M \left ( \eta_K + 1, \eta_{K} + \epsilon_{n,K}, \mathcal{X}' \right ) \right ] \right \} \leq \frac{1}{4} \log (1 / \mu_K). 
\end{equation*}
This holds because
\begin{align*}
& \mathbb{P} \left ( \max_{\mathcal{X}' \in \mathfrak{X}} \left \{ - \frac{2^{d+1}}{C_{1,X} \operatorname{mes} \left ( \mathcal{X}_{(K)} \right )} \log \left [ M \left ( \eta_K + 1, \eta_{K} + \epsilon_{n,K}, \mathcal{X}' \right ) \right ] \right \} > \frac{1}{4} \log (1 / \mu_K) \right ) \\
& \leq \sum_{\mathcal{X}' \in \mathfrak{X}} \mathbb{P} \left ( M \left ( \eta_K + 1, \eta_{K} + \epsilon_{n,K}, \mathcal{X}' \right ) \leq e^{-\frac{1}{4} \frac{C_{1,X}}{2^{d+1}} \operatorname{mes} \left ( \mathcal{X}_{(K)} \right ) \log (1 / \mu_K)} \right ) \\
& \leq \sum_{\mathcal{X}' \in \mathfrak{X}} \sum_{l=0}^{\epsilon_{n,K}} \binom{\epsilon_{n,K}}{l} \bar{p}_{\eta_K, \eta_K + \epsilon_{n,K}, \mathcal{X}'}^l \left ( 1 - \bar{p}_{\eta_K, \eta_K + \epsilon_{n,K}, \mathcal{X}'} \right )^{\epsilon_{n,K} - l } \\
& \hspace{10em} \times \left ( 1 - C_{1,R} \left ( 1 - e^{-\frac{1}{4} \frac{C_{1,X}}{2^{d+1}} \operatorname{mes} \left ( \mathcal{X}_{(K)} \right ) \log (1 / \mu_K)} \right ) \right )^{\epsilon_{n,K} - l} \\
& = \sum_{\mathcal{X}' \in \mathfrak{X}} \left ( 1 - \bar{p}_{\eta_K, \eta_K + \epsilon_{n,K}, \mathcal{X}'} C_{1,R} \left ( 1 - e^{-\frac{1}{4} \frac{C_{1,X}}{2^{d+1}} \operatorname{mes} \left ( \mathcal{X}_{(K)} \right ) \log (1 / \mu_K)} \right ) \right )^{\epsilon_{n,K}} \\
& \leq \left | \mathfrak{X} \right | \left ( 1 - \frac{C_{1,X}}{\log (n)} C_{1,R} \left ( 1 - e^{-\frac{1}{4} \frac{C_{1,X}}{2^{d+1}} \operatorname{mes} \left ( \mathcal{X}_{(K)} \right ) \log (1 / \mu_K)} \right ) \right )^{\epsilon_{n,K}}, 
\end{align*}
and so arguing as in \eqref{equation: (1-e^x) property bound i} and using the definition of $\epsilon_{n,K}$, if 
\begin{equation*}
\frac{1}{4} \frac{C_{1,X}}{2^{d+1}} \operatorname{mes} \left ( \mathcal{X}_{(K)} \right ) \log (1 / \mu_K) \leq 1
\end{equation*}
we will have that 
\begin{align*}
& \left ( 1 - \frac{C_{1,X}}{\log (n)} C_{1,R} \left ( 1 - e^{-\frac{1}{4} \frac{C_{1,X}}{2^{d+1}} \operatorname{mes} \left ( \mathcal{X}_{(K)} \right ) \log (1 / \mu_K)} \right ) \right )^{\epsilon_{n,K}} \\
\quad & \leq \left ( 1 - \frac{C_{1,X}}{\log (n)} C_{1,R} \left ( 1 - e^{-1} \right ) \frac{1}{4} \frac{C_{1,X}}{2^{d+1}} \operatorname{mes} \left ( \mathcal{X}_{(K)} \right ) \log (1 / \mu_K) \right ) ^{\epsilon_{n,K}} \\
& \leq \exp \left ( -\epsilon_{n,K} \frac{C_{1,X}}{\log (n)} C_{1,R} \left ( 1 - e^{-1} \right ) \frac{1}{4} \frac{C_{1,X}}{2^{d+1}} \operatorname{mes} \left ( \mathcal{X}_{(K)} \right ) \log (1 / \mu_K) \right ) \leq \frac{1}{n^2}
\end{align*}
with $C_1$ chosen appropriately larger. Further, arguing as in \eqref{equation: (1-e^x) property bound ii} if 
\begin{equation*}
\frac{1}{4} \frac{C_{1,X}}{2^{d+1}} \operatorname{mes} \left ( \mathcal{X}_{(K)} \right ) \log (1 / \mu_K) > 1
\end{equation*}
we will have that
\begin{align*}
& \left ( 1 - \frac{C_{1,X}}{\log (n)} C_{1,R} \left ( 1 - e^{-\frac{1}{4} \frac{C_{1,X}}{2^{d+1}} \operatorname{mes} \left ( \mathcal{X}_{(K)} \right ) \log (1 / \mu_K)} \right ) \right )^{\epsilon_{n,K}} \\
& \leq \left ( 1 - \frac{C_{1,X}}{\log (n)} C_{1,R} \left ( 1 - e^{-1} \right ) \right )^{\epsilon_{n,K}}  \leq \exp \left ( - {\epsilon_{n,K}} \frac{C_{1,X}}{\log (n)} C_{1,R} \left ( 1 - e^{-1} \right ) \right ) \leq \frac{1}{n^2}. 
\end{align*}
Consequently, arguing as in \eqref{equation: localization rate argument} - \eqref{equation: single change localization IV} we have that for any $\mathcal{X}'' \in \mathfrak{X}$
\begin{align*}
& \hat{L}^{\mathcal{X}'_K}_{\tilde{s}_K, \eta_K, \tilde{e}_K} - \hat{L}^{\mathcal{X}''}_{\tilde{s}_K, \eta_K + \epsilon_{n,K} + j, \tilde{e}_K} \nonumber \\
& \quad \geq N \left ( \tilde{s}_K, \eta_K, \mathcal{X}'_K \right ) \log \left ( 1 / \mu_{K} \right ) + 2 N \left ( \tilde{s}_K, \eta_K, \mathcal{X}'' \right ) \log \left [ M \left ( \eta_K, \eta_K + \epsilon_{n,K} + j, \mathcal{X}'' \right ) \right ] - \lambda \nonumber \\ 
& \quad =  2 N \left ( \tilde{s}_K, \eta_K, \mathcal{X}'_K \right ) \left \{ \frac{1}{2} \log (1/\mu_K) - \frac{N \left ( \tilde{s}_K, \eta_K, \mathcal{X}'' \right )}{N \left ( \tilde{s}_K, \eta_K, \mathcal{X}'_K \right )} \log \left [ M \left ( \eta_K + 1, \eta_K + \epsilon_{n,K} + j, \mathcal{X}'' \right ) \right] \right \} - \lambda \nonumber \\ 
& \quad \geq 2 N \left ( \tilde{s}_K, \eta_K, \mathcal{X}'_K \right ) \left \{ \frac{1}{2} \log (1/\mu_K) - \frac{2^{d+1}}{C_{1,X} \operatorname{mes} ( \mathcal{X}_{(k)} )} \log \left [ M \left ( \eta_K + 1, \eta_K + \epsilon_{n,K} + j, \mathcal{X}'' \right ) \right] \right \} - \lambda. 
\end{align*}
Therefore we have that for any $\mathcal{X}'' \in \mathfrak{X}$
\begin{align*}
\hat{L}^{\mathcal{X}'_K}_{\tilde{s}_K, \eta_K, \tilde{e}_K} - \hat{L}^{\mathcal{X}''}_{\tilde{s}_K, \eta_K + \epsilon_{n,K} + j, \tilde{e}_K} & \geq N \left ( \tilde{s}_K, \eta_K, \mathcal{X}'_K \right ) \log \left ( \mu_K \right ) - \lambda  \\
& \geq \frac{\delta_K \operatorname{mes} (\mathcal{X}_{(k)})}{2^{d+2}} \log \left ( 1 / \mu_K \right ) - \lambda > 0,
\end{align*}
with $C_4$ appropriately chosen. Consequently on the event \eqref{equation: N mathcal X event} with probability at least $1 - n^{-2}$, for all valid $j$'s
\begin{align*}
& \max_{\mathcal{X}' \in \mathfrak{X}} \hat{L}^{\mathcal{X}'}_{\tilde{s}_K, \eta_K, \tilde{e}_K} - \max_{\mathcal{X}' \in \mathfrak{X}} \hat{L}^{\mathcal{X}'}_{\tilde{s}_K, \eta_K + \epsilon_{n,K} + j, \tilde{e}_K} \\
& \quad \quad \quad \geq \hat{L}^{\mathcal{X}_K'}_{\tilde{s}_K, \eta_K, \tilde{e}_K} - \max_{\mathcal{X}'' \in \mathfrak{X}} \hat{L}^{\mathcal{X}''}_{\tilde{s}_K, \eta_K + \epsilon_{n,K} + j, \tilde{e}_K} > 0.  
\end{align*}
The localization rate is therefore established for the $K$-th change point location. Repeating the argument for $\eta_{K-1}, \dots, \eta_1$ and using the fact that necessarily $K < n$, a union bound argument is sufficient to establish that all change points are localized at the desired rate. Since the event \eqref{equation: N mathcal X event} holds with probability $1 - \mathcal{O} (n^{-1} \log^3 (n) )$ by the bound on $A_n$ in Theorem~\ref{theorem: consistent detection under local alternatives}, the proof is now completed. 
\end{proof}

\subsubsection{Proof of Proposition~\ref{lemma: local change minimax lower bound}}

\begin{proof}
The proof proceeds along the same lines as the proof of Proposition~\ref{lemma: global change minimax lower bound}. Let $P_{1:n} = \otimes_{t=1}^n P_t$ and $Q_{1:n} = \otimes_{t=1}^n Q_t$ denote respectively the joint laws of the random variables $\left \{ \left ( \boldsymbol{X}_t, Y_t \right ) \mid t = 1, \dots, n \right \}$ and $ \{ ( \boldsymbol{\tilde{X}}_t, \tilde{Y}_t ) \mid t = 1, \dots, n \}$, where 
\begin{equation}
Y_t = \begin{cases}
g ( \boldsymbol{X}_t ) R_t & 1 \leq t \leq \delta \\
f ( \boldsymbol{X}_t ) R_t & \delta < t \leq n
\end{cases}
\hspace{2em} \text{and} \hspace{2em}
\tilde{Y}_t = \begin{cases}
g ( \boldsymbol{\tilde{X}}_t ) \tilde{R}_t & 1 \leq t \leq \delta + \tau \\
f ( \boldsymbol{\tilde{X}}_t ) \tilde{R}_t & \delta + \tau < t \leq n 
\end{cases}
\end{equation}
and $\tau$ is a positive integer no larger than $n - \delta - 1$. Let the $X$'s and $R$'s have the same distributions as in the proof of Proposition~\ref{lemma: global change minimax lower bound}. Let $\mathcal{X}''$ be such that $\mathcal{X}' \subset \mathcal{X}''$ and $\operatorname{mes} \left ( \mathcal{X}'' \right ) = \min \left ( \operatorname{mes} \left ( \mathcal{X} \right ), 2 \operatorname{mes} \left ( \mathcal{X}' \right ) \right )$, and put $\mathcal{X}''' = \mathcal{X}'' \setminus \mathcal{X}'$. Let $f \left ( \cdot \right )$ be a production frontier function satisfying Assumption~\ref{assumption: production frontier function continuous}, and let 
\begin{equation}
g \left ( x \right ) = 
 \begin{cases}
 f \left ( x \right ) & \text{ if } x \in \mathcal{X} \setminus \mathcal{X}'' \\
\mu f \left ( x \right ) & \text{ if } x \in \mathcal{X}' \\
h \left ( x \right ) & \text{ if } x \in \mathcal{X}''' 
\end{cases}
\end{equation}
where $h \left ( \cdot \right )$ is chosen so that $g \left ( \cdot \right )$ is $K$-bilipschitz while maintaining $g \left ( x \right ) \leq f \left ( x \right )$ for all $x \in \mathcal{X}$. We again have that 
\begin{align}
\inf_{\hat{\eta}} \sup_{P \in \mathcal{Q}_n} \mathbb{E}_{P} \left [ \left | \hat{\eta} - \eta \right | \right ] & \geq \tau \left ( 1 - \text{TV} \left ( P_{1:n}, Q_{1:n} \right ) \right ) \nonumber \\ 
& \geq \frac{\tau}{2} \exp \left ( - \text{KL} \left ( Q_{1:n}, P_{1:n} \right ) \right ) = \frac{\tau}{2} \exp \left ( - \tau \text{KL} \left ( Q_{\delta+1}, P_{\delta+1} \right ) \right ), 
\label{equation: minimax bound again}
\end{align}
We have that $f_{\boldsymbol{X}_{\delta+1}, Y_{\delta+1}} \left ( \boldsymbol{x}, y \right )$ is as in \eqref{equation: joint density delta + 1} and moreover
\begin{equation*}
f_{\boldsymbol{\tilde{X}}_{\delta+1}, \tilde{Y}_{\delta+1}} \left ( \boldsymbol{x}, y \right ) = p_{\boldsymbol{X}} \left ( \boldsymbol{x} \right ) \times 
\begin{cases}
p_R \left ( y / f (\boldsymbol{x}) \right ) \left [ f(\boldsymbol{x}) \right ] ^{-1} & \text{if}\; \boldsymbol{x} \in \mathcal{X} \setminus \mathcal{X}'' \\
p_R \left ( y / \mu f (\boldsymbol{x}) \right ) \left [ \mu f(\boldsymbol{x}) \right ] ^{-1} & \text{if}\; \boldsymbol{x} \in \mathcal{X}' \\
p_R \left ( y / h (\boldsymbol{x}) \right ) \left [ h(\boldsymbol{x}) \right ] ^{-1} & \text{if}\; \boldsymbol{x} \in \mathcal{X}'''
\end{cases}. 
\end{equation*}
Therefore we have that 
\begin{align}
\text{KL} \left ( Q_{\delta+1}, P_{\delta+1} \right ) & = \tau \int_{\mathcal{X}'} \int_{\mathcal{Y}} \log \left ( \frac{f_{\boldsymbol{\tilde{X}}_{\delta+1}, \tilde{Y}_{\delta+1}} \left ( x, y \right )}{f_{X_{\delta+1}, Y_{\delta+1}} \left ( x, y \right )} \right ) f_{\boldsymbol{\tilde{X}}_{\delta+1}, \tilde{Y}_{\delta+1}} \left ( x, y \right ) \mathrm{d} y \mathrm{d} \boldsymbol{x} \nonumber \\
& \hspace{4em} + \tau \int_{\mathcal{X}'''} \int_{\mathcal{Y}} \log \left ( \frac{f_{\boldsymbol{\tilde{X}}_{\delta+1}, \tilde{Y}_{\delta+1}} \left ( x, y \right )}{f_{X_{\delta+1}, Y_{\delta+1}} \left ( x, y \right )} \right ) f_{\boldsymbol{\tilde{X}}_{\delta+1}, \tilde{Y}_{\delta+1}} \left ( x, y \right ) \mathrm{d} y \mathrm{d} \boldsymbol{x} \nonumber \\
& \leq \tau \log \left ( \frac{1}{\mu} \right ) \int_{\mathcal{X}''} p_{\boldsymbol{X}} (\boldsymbol{x}) \mathrm{d} \boldsymbol{x} \nonumber \\
& \leq 2 C_{L,X} \tau \log \left ( \frac{1}{\mu} \right ) \operatorname{mes} \left ( \mathcal{X}' \right ),
\label{equation: KL product space bound}
\end{align}
where $C_{L,X}$ is an absolute constant depending on the $C_{2,X}$ and the Lipschitz constant in Assumption~\ref{assumption: production frontier function continuous}. Therefore plugging \eqref{equation: KL product space bound} into \eqref{equation: minimax bound again} we have that
\begin{equation}
\inf_{\hat{\eta}} \sup_{P \in \mathcal{Q}_n} \mathbb{E}_{P} \left [ \left | \hat{\eta} - \eta \right | \right ] \geq \frac{\tau}{2} \exp \left ( -2 C_{L,X} \tau \operatorname{mes}  (\mathcal{X}') \log \left ( \frac{1}{\mu} \right ) \right ). 
\end{equation}
Finally we put $\tau = \min \left ( \left \lceil \left ( \operatorname{mes} (\mathcal{X}') \log (1/\mu) \right ) ^ {-1} \right \rceil, n - \delta - 1 \right )$ and argue as was done in \eqref{equation: change energy inequality} to complete the proof. 
\end{proof}

\subsubsection{Proof of Corollary~\ref{corollary: exponential concentration of FDH estimator strong mixing}}

\begin{proof}
We repeat the analysis from Section~\ref{section: proof of exponential concentration in Haus. distance} but this time for each $u \in \mathcal{N}$ let the radii $r_k := r_k (u)$ be defined so that
\begin{equation*}
\operatorname{mes} \left ( W \left ( u, r_k \right ) \right ) = C_W \psi_n^{d+1} \left [ \left ( \frac{d}{d+1} + k \right ) \left ( \frac{2d + 1}{d+1} + k \right ) \log ^ {-1} \left ( 1/A \right ) \right ]. 
\end{equation*}
For each $A_{k,n}$ defined as in \eqref{equation: psi-net events} introduce the quantity
\begin{equation*}
\beta_{k,n} = \log ^ {-1} \left ( 1/A \right ) \left (\frac{2d+1}{d+1} + k \right ) \log (n) 
\end{equation*}
and define the set $\mathcal{I}_{\beta_{k,n}} = \left \{ t = s \beta_{k,n} \mid 1 \leq s \leq n / \beta_{k,n} \right \}$. Using Assumption~\ref{assumption: strong mixing} and arguing as in \eqref{equation: P A bound} we have that 
\begin{align*}
& \mathbb{P} \left ( A_{k,n} \right ) \geq 1 - \sum_{l=1}^M \mathbb{P} \left ( \cap_{t \in \mathcal{I}_{\beta_{k,n}}} \left \{ \left ( \boldsymbol{X}_t, Y_t \right ) \not\in W \left ( u_l, r_k \right ) \right \} \right ) \\ 
& \geq 1 - \sum_{l=1}^M \left [ \prod_{t \in \mathcal{I}_{\beta_{k,n}}} \mathbb{P} \left ( \left ( \boldsymbol{X}_t, Y_t \right ) \not\in W \left ( u_l, r_k \right ) \right ) + n A^{\beta_{k,n}} \right ] \\
& \geq 1 - \sum_{l=1}^M \left [ \prod_{t \in \mathcal{I}_{\beta_{k,n}}} \left ( 1 - \mathbb{P} \left ( Y_t \in C'' \left ( u_l', r_k \right ) \mid X_t \in C' \left ( u_l', r_k \right ) \right ) \mathbb{P} \left ( X_t \in C' \left ( u_l', r_k \right ) \right ) \right ) + n A^{\beta_{k,n}} \right ] \\ 
& \geq 1 - C_d \psi_n \left [ \left ( 1 - \frac{\left ( \frac{d}{d+1} + k \right ) \log (n) }{n / \beta_{k,n}} \right ) ^ {n / \beta_{k,n}} + n A ^ {\beta_{k,n}}\right ] \\ 
& \geq 1 - 2 C_d n^{-k}
\end{align*}
Arguing with (i) and (ii) used to show \eqref{equation: bound on radii} we have that for some absolute $C > 0$ depending on $A$ we have that $r_k \leq C \left ( (2d+1) / (d+1) + k \right ) \psi_n$. Therefore, arguing as in \eqref{equation: bound when x close to zero} and \eqref{equation: bound when x far from zero} we have that for any $l > 0$
\begin{equation*}
\mathbb{E} \left [ \left ( \psi_n^{-1} \left \| f - \hat{f}_n \right \|_\infty \right ) ^ l \mid A_{k,n} \right ] \leq \left ( C_1 k + C_2 \right ) ^ l , \hspace{2em} k = 1, \dots, n^{\frac{1}{d+1}}, 
\end{equation*}
The rest of the proof goes through as in \eqref{equation: further Markov bound on P} and is therefore omitted for brevity. 
\end{proof}

\end{document}